\pgfplotsset{compat=1.17}
\newtheorem{theorem}{Theorem}
\newcommand{\poly}{\textsc{Poly}}
\newtheorem{corollary}{Corollary}
\newtheorem{definition}{Definition}
\newtheorem{example}{Example}
\newtheorem{lemma}{Lemma}
\newtheorem{remark}{Remark}
\newcommand{\Q}{\mathbb{Q}}
\newcommand{\N}{\mathbb{N}}
\newcommand{\R}{\mathbb{R}}
\newcommand{\E}{\mathbb{E}}
\newcommand{\advAccount}{A}
\newcommand{\knownHonest}{B}
\newcommand{\opaqueHonest}{C}
\newcommand{\VRF}{\textsc{VRF}}
\newcommand{\CDF}{\textit{CDF}}
\newcommand{\seed}[1]{Q_{#1}}
\DeclareMathOperator{\credential}{\textsc{Cred}}
\newcommand{\cred}[2]{\credential^{#1}_{#2}}
\DeclareMathOperator{\expdist}{exp}
\newcommand{\expd}[1]{\expdist(#1)}
\DeclareMathOperator{\Reward}{Rew}
\newcommand{\Rew}[1]{\Reward(#1)}
\newcommand{\RewT}[1]{\Reward_{\round}(#1)}
\newcommand{\optTk}{\optrew{\round, \coin}}
\newcommand{\optTI}{\optrew{\round, \infty}}
\DeclareMathOperator{\Lin}{Lin}
\newcommand{\LinRew}[1]{\Reward^{\Lin}(#1)}
\newcommand{\LinLamRew}[2]{\Reward^{\Lin}_{#1}(#2)}
\newcommand{\strategy}{\pi}
\newcommand{\advPot}{\hat{W}}
\newcommand{\potWinners}{W}
\newcommand{\fst}{\tau}
\DeclareMathOperator{\Omni}{OMNI}
\newcommand{\RewOmni}[1]{\Reward^{\Omni}(#1)}
\newcommand{\omniconst}{\kappa}
\newcommand{\restrictedOmniConst}{\phi}
\newcommand{\omniChoiceTree}{\Gamma}
\newcommand{\event}[1]{E_{#1}}
\newcommand{\nevent}[1]{\overline{E}_{#1}}
\newcommand{\prevent}[1]{e_{#1}}
\newcommand{\prnevent}[1]{\overline{e}_{#1}}
\newcommand{\IndEvent}{E}
\newcommand{\prindevent}{e}
 \newcommand{\h}{h}
 \newcommand{\omniOPT}{\strategy^{\Omni}}
\newcommand{\AdvC}{\vec{c}_{-0}}
\newcommand{\AdvR}{\vec{r}_{-0}}
\newcommand{\stakelist}{\alpha, \beta, \lambda}
\newcommand{\round}{T}
\newcommand{\coin}{k}
\DeclareMathOperator{\D}{D}
\newcommand{\distribution}[2]{\D^{#1}_{#2}}
\newcommand{\dist}[1]{\D^{\optimal}_{#1}}
\newcommand{\dis}[1]{\D^{#1}}
\newcommand{\optdis}{\D^{\optimal}}
\newcommand{\estdist}[1]{\hat{\D}^{\optimal}_{#1}}
\newcommand{\UBD}[1]{\overline{\D}_{#1}}
\newcommand{\LBD}[1]{\underline{\D}_{#1}}
\newcommand{\estUBD}{\UBD{\round, \coin}}
\newcommand{\estLBD}{\LBD{\round, \coin}}
\newcommand{\estk}[1]{\estdist{#1, \coin}}
\newcommand{\apxd}[1]{\hat{\D}_{#1}}
\newcommand{\q}{q}
\newcommand{\samples}{n}
\newcommand{\vecc}{\vec{c}}
\newcommand{\vecr}{\vec{r}}
\DeclareMathOperator{\CSSPAOP}{CSSPA}
\newcommand{\CSSPA}{\CSSPAOP(\alpha, \beta)}
\newcommand{\ParCSSPA}[1]{\CSSPAOP(\alpha, #1, \beta)}
\newcommand{\FinCSSPA}[1]{\CSSPAOP(\alpha, \beta, #1, \coin)}
\newcommand{\InfCoinCSSPA}[1]{\CSSPAOP(\alpha, \beta, #1, \infty)}
\DeclareMathOperator{\LinCSSPAOP}{LinearCSSPA}
\newcommand{\LinearCSSPA}{\LinCSSPAOP(\alpha, \beta, \lambda)}
\newcommand{\FinLinCSSPA}[1]{\LinCSSPAOP(\alpha, \beta, \lambda, #1)}
\newcommand{\FinLinLamCSSPA}[1]{\LinCSSPAOP(\alpha, \beta, #1, \round, \coin)}
\newcommand{\addl}[1]{\text{AddLayer}(\alpha, \beta, \lambda, #1)}
\newcommand{\kaddl}[1]{\text{AddLayer}(\alpha, \beta, \lambda, \coin, \optTk, #1)}
\newcommand{\finsampleaddl}[1]{\text{FiniteSampleAddLayer}(\alpha, \beta, \lambda, \coin, t, \samples, \optTk, \chernoff, \strat, #1)}
\newcommand{\drawadv}[1]{\text{DrawAdv}(\alpha, #1)}
\newcommand{\sample}[1]{\text{sample}(\alpha, \beta, \lambda, \AdvC, \AdvR, #1)}
\newcommand{\sampleFromPrecomp}[1]{\text{SampleFromPrecompute}(\alpha, \beta, \lambda, \AdvC, \AdvR, \optTk, G, #1)}
\newcommand{\simulate}[1]{\text{Simulate}(\stakelist, #1)}
\newcommand{\tsimulate}{\text{TruncatedSimulate}(\stakelist, \round, \coin, \optTk)}
\newcommand{\tsimulateextraparameters}[1]{\text{TruncatedSimulate}(\stakelist, \round, \coin, \optTk, \chernoff, \strat)}
\newcommand{\tcsimulate}{\text{TruncatedSimulate}(\stakelist, \round, \coin, \optTk, \chernoff, \strat, \epsilon, \eta)}
\newcommand{\tcstrategysimulate}[1]{\text{TruncatedSimulate}(\stakelist, \round, \coin, #1, \chernoff, \strat, \epsilon, \eta)}
\newcommand{\tclamstrategysimulate}[1]{\text{TruncatedSimulate}(\alpha, \beta, #1, \round, \coin, \optTk(#1), \chernoff, \strat, \epsilon, \eta)}
\DeclareMathOperator{\optimal}{OPT}
\newcommand{\optrew}[1]{\strategy^{\optimal}_{#1}}
\newcommand{\opt}{\strategy^{\optimal}}
\newcommand{\Infl}[1]{\text{Inflate}(\samples, #1, t)}
\newcommand{\Defl}[1]{\text{Deflate}(\samples, #1, t)}
\newcommand{\chernoff}{\gamma}
\newcommand{\strat}{\omega}
\newcommand{\Precomp}[1]{\text{Precompute}(#1, \epsilon, \eta, t)}
\newcommand{\challengeConverge}[1]{\textcolor{black}{#1}}
\newcommand{\challengeTk}[1]{\textcolor{black}{#1}}
\newcommand{\challengeSample}[1]{\textcolor{black}{#1}}
\newcommand{\challengeComp}[1]{\textcolor{black}{#1}}
\keywords{blockchain, cryptocurrency, proof-of-stake, cryptography, strategic mining, provably correct estimations, fixed point estimations}
\title[Computing Optimal Manipulations in Cryptographic Self-Selection Proof-of-Stake Protocols]{Computing Optimal Manipulations in Cryptographic Self-Selection Proof-of-Stake Protocols}
\author{Matheus V. X. Ferreira}
\email{matheus@seas.harvard.edu}
\affiliation{%
  \institution{University of Virginia}
  \city{Charlottesville}
  \state{VA}
  \country{USA}
}
\author{Aadityan Ganesh}
\email{aadityanganesh@princeton.edu}
\affiliation{%
  \institution{Princeton University}
  \city{Princeton}
  \state{NJ}
  \country{USA}
}
\author{Jack Hourigan}
\email{hojack@seas.upenn.edu}
\affiliation{%
  \institution{University of Pennsylvania}
  \city{Philadelphia}
  \state{PA}
  \country{USA}
}
\author{Hannah Huh}
\email{hannahchuh@gmail.com}
\affiliation{%
  \institution{Princeton University}
  \city{Princeton}
  \state{NJ}
  \country{USA}
}
\author{S. Matthew Weinberg}
\email{smweinberg@princeton.edu}
\affiliation{%
  \institution{Princeton University}
  \city{Princeton}
  \state{NJ}
  \country{USA}
}
\author{Catherine Yu}
\email{catyu6000@gmail.com}
\affiliation{%
  \institution{Princeton University}
  \city{Princeton}
  \state{NJ}
  \country{USA}
}
\begin{abstract}
    Cryptographic Self-Selection is a paradigm employed by modern Proof-of-Stake consensus protocols to select a block-proposing ``leader.'' Algorand~\citep{CM19} proposes a canonical protocol, and~\citet{FHWY22} establish bounds $f(\alpha,\beta)$ on the maximum fraction of rounds a strategic player can lead as a function of their stake $\alpha$ and a network connectivity parameter $\beta$. While both their lower and upper bounds are non-trivial, there is a substantial gap between them (for example, they establish $f(10\%,1) \in [10.08\%, 21.12\%]$), leaving open the question of how significant of a concern these manipulations are. We develop computational methods to provably nail $f(\alpha,\beta)$ for any desired $(\alpha,\beta)$ up to arbitrary precision, and implement our method on a wide range of parameters (for example, we confirm $f(10\%,1) \in [10.08\%, 10.15\%]$).
    
\phantom{.} \quad  Methodologically, estimating $f(\alpha,\beta)$ can be phrased as estimating to high precision the value of a Markov Decision Process whose states are countably-long lists of real numbers. Our methodological contributions involve (a) reformulating the question instead as computing to high precision the expected value of a distribution that is a fixed-point of a non-linear sampling operator, and (b) provably bounding the error induced by various truncations and sampling estimations of this distribution (which appears intractable to solve in closed form). One technical challenge, for example, is that natural sampling-based estimates of the mean of our target distribution are \emph{not} unbiased estimators, and therefore our methods necessarily go beyond claiming sufficiently-many samples to be close to the mean.
\end{abstract}
\begin{document}


\maketitle

\section{Introduction}

Blockchain protocols have attracted significant interest since Bitcoin's initial development in 2008~\citep{Nakamoto08}, and several parallel research agendas and developments arose in that time. This paper lies at the intersection of two of these agendas: (a) strategic manipulability of consensus protocols, and (b) the return of Byzantine Fault Tolerant (BFT)-style consensus protocols via Proof-of-Stake (PoS). We briefly elaborate on both stories below, before discussing our contributions.\\

\noindent\textbf{Manipulating Consensus Protocols.} Initially following Nakamoto's whitepaper, Bitcoin and related blockchain protocols were studied through a classical security lens: some fraction of participants were honest, others were malicious, and the goal of study was to determine the extent to which a malicious actor can compromise security with a particular fraction of the computational power in the network. For example, Nakamoto's whitepaper already derives that with $51\%$ of the computational power, a malicious actor could completely undermine Bitcoin's consensus protocol. However, the seminal work of \citet{ES14}, now referred to as ``Selfish Mining'', identified a fundamentally different cause for concern: an attacker with $34\%$ of the computational power could manipulate the protocol in a way that does not violate consensus, but earns that attacker a $>34\%$ fraction of the mining rewards.\footnote{Earlier work of~\cite{BabaioffDOZ12} introduces the strategic manipulation aspects of the Bitcoin protocol, although the style of manipulation in~\cite{ES14} became more mainstream for subsequent work.} This agenda has exploded over the past decade, and there is now a vast body of work considering strategic manipulation of consensus protocols (e.g.~\citealp{ES14, KiayiasKKT16, SSZ16, CarlstenKWN16,TsabaryE18,GorenS19,FiatKKP19, BrownCohenNPW19, ZurET20, FW21, FHWY22, YaishTZ22, YaishSZ23, BahraniW23}). 

These works study \emph{several} different classes of protocols, and \emph{several} avenues for manipulation: some study Proof-of-Work protocols while others study Proof-of-Stake, some study block withholding deviations while others manipulate timestamps, some focus on profitability denoted in the underlying cryptocurrency while others consider the impact of manipulation on that cryptocurrency's value. There are many important angles to this agenda, many of which are cited by practitioners as key motivating factors in design choices.\footnote{For example, \href{https://eips.ethereum.org/EIPS/eip-1559}{EIP-1559}.} The primary goal of this agenda is to understand \emph{under what conditions is it in every participant's interest to follow the prescribed consensus protocol?} That is, these works generally do not focus on understanding complex equilibria with multiple strategic players (and instead immediately consider it a failure of the protocol when it is not being followed), and instead seek to understand whether the strategy profile where all agents follow the protocol constitutes a Nash Equilibrium.
That is, we seek to understand whether being honest is the best response when everyone else in the network is honest.
\\

\noindent\textbf{BFT-based Proof-of-Stake Protocols.} As Bitcoin's popularity surged, the energy demands required to secure it comparably soared, and estimates place its global energy consumption at comparable levels to countries the size of Australia. This motivated discussions over alternate technologies that could still be permissionless and Sybil-resistant, and Proof-of-Stake emerged as a viable alternative. While Proof-of-Work protocols select participants to produce blocks proportional to their computational power, Proof-of-Stake protocols do so proportional to the fraction of underlying cryptocurrency they own. Initial Proof-of-Stake protocols predominantly followed the longest-chain consensus paradigm of Bitcoin~\citep{DaianPS17, KiayiasRDO17}, but modern proposals now look more like classical consensus algorithms from distributed computing~\citep{GiladHMVZ17, CM19}. Specifically, BFT-based protocols run a consensus algorithm, one block at a time, in order to reach consensus on a single block. Once consensus is reached, the block is finalized and consideration of the next block begins.

While Bitcoin still uses a longest-chain Proof-of-Work consensus protocol, and some large Proof-of-Stake cryptocurrencies such as Cardano still use longest-chain protocols~\citep{KiayiasRDO17}, BFT-based protocols are now quite mainstream and are implemented, for example, in Algorand~\citep{GiladHMVZ17,CM19} and Ethereum.\footnote{Ethereum does maintain some longest-chain aspect to its protocol, but the key role that validators play make the protocol closer to BFT-based consensus.}\\

\noindent\textbf{Manipulating BFT-based Proof-of-Stake Protocols.} In practice, there is no `dominant' BFT-based protocol, and different cryptocurrencies each seem to have their own protocol. However, there are some unifying themes. Most BFT-based protocols have the concept of a \emph{leader} in each round, and the consensus goal of each round is for everyone to agree on the leader's proposed block. Leader selection is challenging, though: it should be done proportional to stake, but in a way that neither relies on a trusted external source of randomness nor is manipulable by participants. This has proved to be quite challenging, and to-date there are no nonmanipulable proposals without heavyweight cryptography (such as Multi-Party Computation or Verifiable Delay Functions). While the underlying consensus protocols are often both complex and completely nonmanipulable (without sufficient stake to simply subvert consensus in the first place), the leader selection protocols are more vulnerable, and can be studied independently of the supported consensus protocol.

Algorand's initial proposal serves as a canonical process of study due to its elegance. The initial seed $Q_1$ is a uniformly drawn random number. Then, in each round $t$ with seed $Q_t$, every wallet digitally signs the statement $(Q_t, c)$ for every coin $c$ they own, hashes it,\footnote{We will elaborate on this rigorously in Section~\ref{sec:CSS}. The role of the digital signature is simply to get a signature unique to the owner of coin $c$ that no other player can predict, and the role of the hash function is to turn this into a uniformly drawn random number from $[0,1]$.} and broadcasts the hash as their credential $\cred{c}{t}$. The holder of the coin with the lowest credential is the leader, and their winning credential becomes the seed $Q_{t+1}$ for round $t+1$. This elegant protocol has several desirable properties (for example, it is not vulnerable to any form of `stake grinding' to influence next round's seed -- you can either broadcast your credential or not),\footnote{The live Algorand protocol seems to have recently pivoted from their initial proposal to a leader-selection protocol that has the winning credential of every $k^{th}$ round set the seeds for the next $k$ rounds.} but~\citet{CM19} acknowledge that it may still be manipulable by cleverly choosing not to broadcast credentials, and~\citet{FHWY22} indeed establish that any size staker has such a profitable manipulation.

To get brief intuition for a profitable manipulation, imagine that an attacker controls $10\%$ of the coins. Perhaps they are also well-enough connected in the network so that they can choose which credentials of their own to broadcast in round $t$ as a function of other participants' credentials (this corresponds to $\beta = 1$ -- in general, the adversary is $\beta$-well-connected if they see a $\beta$ fraction of honest credentials before broadcasting their own). Such an attacker might be in a position where they own (say) the three lowest credentials. In this case, the adversary could look one round ahead and determine which of these round-$t$-winning credentials gives them the best chance of winning round $t+1$, and broadcast only that credential. This particularly simple manipulation, termed the One-Lookahead strategy in~\citet{FHWY22}, is strictly profitable for any sized staker. While strictly better than honest, this strategy does not reap enormous profits: even with $10\%$ stake, a $1$-well-connected staker can lead at most $10.08\%$ rounds. On the other hand, the only previous upper bounds derived on the maximum gains come from a loose analysis of an omniscient adversary who not only sees the credentials of honest wallets in round $t$, but can predict their future digital signatures to know exactly which hypothetical future rounds they'd win. This results in an upper bound of $21.12\%$ on the maximum possible rounds led by a $10\%$ staker. Needless to say, the level of concern that would arise from a $10\%$ staker who is able to slightly increase their staking rewards by less than $1\%$ to $10.08\%$ is vastly different than what would arise from a $10\%$ staker who can more than double their staking rewards to $21.12\%$.\\

\noindent\textbf{Our Contributions.} Using both theoretical and computational tools, we precisely nail down the manipulability of Algorand's canonical leader selection protocol. That is, we design computational methods to compute, for any fraction of stake $\alpha$ and network connectivity parameter $\beta$ that an attacker might have,\footnote{We define the network connectivity formally in~\Cref{sec:Game}.} the maximum fraction of rounds the attacker can lead (assuming other players are honest). We rigorously bound the error in our methods -- some bounds hold with probability one (due to discretization, truncation, etc.) while others hold with high probability (due to sampling). We consider this methodology to be our main contribution. We also make the following adjacent contributions:
\begin{itemize}
    \item We implement our computational procedure in Rust, and run it across several personal laptops and university clusters. We plot several of our findings in ~\Cref{sec:SimPlots} (and future researchers can run our code to even higher precision, if desired). For example, we close the gap on the maximum profit of a $1$-well-connected $10\%$ staker from $[10.08\%, 21.12\%]$ to $[10.08\%, 10.15\%]$. We produce several plots in Section~\ref{sec:SimPlots} demonstrating our results in comparison to prior bounds.
    \item One conclusion drawn from our simulations is that the gains from manipulation are quite small. For example, we confirm that $1$-well-connected $10\%$ staker can lead at most $10.15\%$ of all rounds. A $0$-well-connected $10\%$ staker can lead at most $10.09\%$ of the rounds. Even a $0$-well-connected $20\%$ staker can lead at most $20.21\%$ of the rounds. This suggests that, while supralinear rewards are always a cause for concern as a potential centralizing force among stakers, the situation is unlikely to be catastrophic. 
    \item A second conclusion drawn from our simulations is $\beta$ plays a significant role in the magnitude of profitability. For example, a $0$-well-connected $20\%$ staker can lead at most $20.21\%$ of the rounds, while there exists a strategy for a $1$-well-connected $20\%$ staker that leads at least $20.68\%$ of the rounds -- a $320\%$ amplification in the marginal gains. 
    \item Beyond our provably accurate computational methodology, we also provide two analytical results of independent interest.
    \begin{itemize}
        \item We improve~\cite{FHWY22}'s analysis of the omniscient adversary, and in particular describe a recursive formulation that {achieves an arbitrarily good approximation to the precise profit of an optimal omniscient adversary}. This appears in~\Cref{sec:OmniMain}.
        \item Finally, we prove one conjecture and disprove another of~\citet{FHWY22} characterizing ``Balanced Scoring Functions.'' Balanced Scoring Functions are a tool used in leader selection to replace computing a digital-signature-then-hash per coin with computing a digital-signature-then-hash per wallet (in order to appropriately weight the hash before taking the minimum amongst all credentials). We state this result in Section~\ref{sec:Primitives}.
    \end{itemize}
\end{itemize}

As a whole, our results significantly improve our understanding of manipulating the canonical leader selection protocol first introduced in Algorand~\citep{CM19}. First, while supralinear rewards are always a cause for concern, the maximum achievable profits are at quite a small order of magnitude. Second, our results highlight the pivotal role that $\beta$ plays in the rate of supralinear rewards. This suggests that protocol designers may wish to invest in augmentations to bring $\beta$ closer to zero.\footnote{One such possibility is to broadcast credentials using commit-reveal: players make a large deposit along with a cryptographic commitment to their credential, and unlock their deposit only upon revealing it.} Methodologically, our approach provides a blueprint for how similar leader selection protocols (such as Ethereum's) might be analyzed.

\subsection{Very Brief Technical Highlight}
We defer full details to our technical sections, but give a brief overview of the key technical challenges here. The optimal strategy can be phrased as a Markov Decision Process (MDP), and in some sense the obvious approach is to ``write down the MDP and solve it.'' Unfortunately, states in our MDP are countably long lists of real numbers. That is, a state corresponds to (a) the list of credentials the attacker has in this round, but also (b) for each of those credentials $i$, and each other wallet $j$ controlled by the adversary, the credential wallet $j$ would provide the next round if wallet $i$ wins this round (which can be computed as the seed for the next round is simply a digital signature plus hash of its credential $i$), and moreover (c) for each of those pairs of credentials $(i,j)$, and each other wallet $k$ controlled by the adversary, the credential wallet $k$ would provide two rounds from now if credential $i$ wins this round and credential $j$ wins the next round, and (d) so on. 

A first step is to truncate this countably long list of real numbers to (a) look only $T< \infty$ rounds in the future, (b) store only $\coin < \infty$ credentials per round, and (c) discretize each credential to a multiple of $\varepsilon > 0$. These steps can all be done with provable upper bounds on the error they induce. However, even with $\coin = 8$ and $T = 15$, states still correspond to a list of $8^{15}$ multiples of $\varepsilon$, and is clearly intractable. 

Instead, our key idea is to reformulate the question as finding the distribution of future rewards that an optimal strategist receives. That is, consider the process of sampling a state for the attacker (a list of credentials for this round, hypothetical future credentials, etc.), and ask what future reward the attacker would get when playing optimally. If we can compute this distribution of rewards $D$, then its expected value is exactly the number we seek. We define an operator $\Theta(\cdot)$ that takes as input samples from some distribution $F$ and produces samples from the distribution $\Theta(F)$, and establish that $D$ is a fixed point of $\Theta(\cdot)$. 

Again, the process now appears straight-forward: start from any distribution, and iterate $\Theta(\cdot)$ until it stabilizes. This is indeed our approach, and the remaining challenge is to account for sampling error. Essentially, we are looking for $\mathbb{E}[\Theta^{15}(F)]$ for some simple initial distribution $F$, and instead of computing $\Theta(\cdot)$ at each stage we'll take an empirical estimate $\hat{\Theta}(\cdot)$ instead. This appears ripe for a Chernoff plus union bound to bound the error due to sampling, except that $\mathbb{E}[\hat{\Theta}(\cdot)]$ is not an unbiased estimator for $\mathbb{E}[\Theta(\cdot)]$. So even establishing that we take sufficiently many samples to be close to the process's expected value does not guarantee we are close to $\mathbb{E}[\Theta^{15}(F)]$. Instead, at each round we both inflate (resp.~deflate) our empirical $\hat{\Theta}^{i}(F)$ so that we know it stochastically dominates (resp.~is stochastically dominated by) $\Theta^i(F)$ using a variant of the DKW inequality~\citep{DKW56}. 

We share this to give the reader a sense of the technical developments necessary to analyze this particular Markov Decision Process, and ways in which it differs from more common MDPs.

\subsection{Related Work}
\noindent\textbf{Manipulating Leader Selection Protocols.}~\citet{CM19} propose the Algorand leader selection protocol, and acknowledge that it may be manipulable. They also prove an upper bound on the fraction of rounds an adversary can win after being honest in the previous round.~\citet{FHWY22} provide a strategy that is strictly profitable for all $\beta$-well-connected $\alpha$-sized stakers, and upper bound the attainable profit of an omniscient adversary who can predict future digital signatures of honest players. We provide provably accurate computational methodology to nail the optimal manipulability up to arbitrary precision (and implement our algorithms and draw conclusions from the results). In concurrent and independent work,~\cite{CaiLWZ24} establish that any strictly profitable manipulation of our same leader selection protocol is statistically detectable (that is, an onlooker who sees only the seeds of each round can distinguish whether someone is profitably manipulating the protocol from when all players are honest but sometimes offline). This work is orthogonal to ours, but also provides an argument that solid defenses against manipulation exist (we argue that the manipulations are not particularly profitable, they argue that they are always detectable). \\

\noindent\textbf{Manipulating Consensus Protocols.} We have already briefly cited a subset of the substantial body of work studying profitable manipulations of consensus protocols~\citep{BahraniW23, BrownCohenNPW19, CarlstenKWN16, ES14, FHWY22, FW21, FiatKKP19, GorenS19, KiayiasRDO17, SSZ16, TsabaryE18, YaishSZ23, YaishTZ22}. Of these,~\citep{BrownCohenNPW19, FW21} also study Proof-of-Stake protocols, but longest-chain variants (and therefore have minimal technical overlap).~\citet{SSZ16} bears some technical similarity, as they are the unique prior work that finds optimal manipulations (in Bitcoin's Proof-of-Work), and they also use computational tools with theoretical guarantees. We also use a lemma of theirs to reduce from maximizing the fraction of rounds won to maximizing reward in a linear MDP. Still, there is minimal technical overlap beyond these. For example, their problem can be phrased as a Markov Decision Process with countably-many states (i.e.~a state in their setup is of the form ``how many hidden blocks do you have?'', which is an integer), and therefore the key steps in their provable guarantees are truncations. In comparison, we've noted that our problem is a Markov Decision Process with uncountably many states, and therefore the two MDPs have minimal overlap.\footnote{This is also perhaps expected, as there is little technical similarity between creating forks in a longest-chain protocol and manipulating credentials in a leader-selection protocol.}

\section{Preliminaries}

\subsection{Primitives} \label{sec:Primitives}
In this section, we review various cryptographic primitives required to construct a cryptographic self-selection protocol. Since our model is identical, we use notations identical to~\citet{FHWY22}. We begin by discussing a tool central to many Proof-of-Stake protocols-- verifiable random functions. Verifiable random functions are useful in enabling a source of randomness endogenous to the blockchain for the leader election protocol.

\begin{definition}[Ideal Verifiable Random Function (Ideal $\VRF$)] \label{def:VRF}
    An ideal verifiable random function satisfies the following properties:
    \begin{enumerate}
        \item \textbf{Setup:} There is an efficient randomized generator that can produce a pair $(sk, pk)$ of a secret key and a public key that characterizes the instance $f_{sk}(\cdot)$.
        \item \textbf{Private computability:} For a string $x$, there exists an efficient algorithm to compute the encryption $f_{sk}(x)$ of $x$ with the knowledge of $sk$.
        \item \textbf{Perfect randomness:} Without the knowledge of $sk$, the random variables $f_{sk}(x)$ and $f_{sk}(y)$ are distributed i.i.d.~over $U[0, 1]$. In particular, the random variable $f_{sk}(x) \sim U[0, 1]$ even with the knowledge of $\big((y_i, f_{sk}(y_i))\big)_{1 \leq i \leq m}$ such that $y_i \neq x$ for all $1 \leq i \leq m$.
        \item \textbf{Verifiability:} Verifying the claim $y = f_{sk}(x)$ can be done efficiently conditioned on the knowledge of $pk$ and a proof $V_x$, even if $sk$ remains unknown. Generating a proof $V_x$ such that a verifier confirms equality when $y \neq f_{sk}(x)$ is impossible. 
    \end{enumerate}
\end{definition}
An ideal $\VRF$ allows the holder of the secret key $sk$ (through property 3) to provably generate a random number, i.e, show that the random number was generated through a prescribed process. However, it is impossible to construct an ideal $\VRF$ whose outputs are \emph{statistically indistinguishable} from $U[0,1]$. On the other hand, it is possible to construct a $\VRF$ whose outputs are \emph{computationally indistinguishable} from $U[0, 1]$. For the sake of simplicity, we proceed with an Ideal $\VRF$ instead of computational -- this results in only a negligible difference.


\begin{example}[$\VRF$s through digital signatures]
Let $\sigma$ be a digital signature scheme with a public key, secret key pair $(pk, sk)$ and let $h$ be a hash function. Then, $h(\sigma_{sk}(\cdot))$ is a verifiable random function. $y = h(\sigma_{sk}(x))$ can be computed efficiently with the knowledge of $sk$. With a proof $V_x = \sigma_{sk}(x)$, $y = h(\sigma_{sk}(x))$ can be verified as follows- verify that (i) the proof $V_x = \sigma_{sk}(x)$ with the public key $pk$ and $x$ and (ii) verify $y = h(V_x)$.
\end{example}

Next, we proceed to discuss balanced scoring functions that enable electing a leader proportional to its stake.

\begin{definition}[Balanced Scoring Functions] \label{def:ScoringFunctions}
    A scoring rule $S: [0, 1] \times \R \xrightarrow{} [0, 1]$ is balanced if:
    \begin{enumerate}
        \item For $X \sim U[0, 1]$, the distribution of $S(X, \alpha)$ has no point masses for all $\alpha \in [0, 1]$
        \item For all $n \in \N$ and $\big(\alpha_i \big)_{1 \leq i \leq n} \in \R^n_{\geq 0}$,
        $$Pr_{X_1, \dots, X_n \sim U[0, 1]}\Big( \arg \min_{1 \leq i \leq n} \{S(X_i, \alpha_i)\} = j \Big) = \frac{\alpha_j}{\sum_{i = 1}^n \alpha_i}$$
    \end{enumerate}
\end{definition}

At a high level, a fair leader selection to elect a wallet with probability proportional to its stake can be conducted by choosing the wallet with the smallest score, while $\VRF$s provide the source for the random variable $X_i$ for a wallet $i$.

\citet{FHWY22} conjecture that $S(X, \sum_{i =1}^n \alpha_i)$ and $\min_{1 \leq i \leq n} \{S(X, \alpha_i)\}$ are identically distributed for all balanced scoring functions $S$, $n \in \N$ and $\alpha_1, \dots, \alpha_n \in \R_{\geq 0}$\footnote{They also claim that $S(X, \alpha)$ is continuous in $\alpha$. We provide a counterexample to their claim. However, \Cref{thm:UniqueScore} shows that $Pr(S(X, \alpha) \geq s)$ is continuous in $\alpha$. See \Cref{sec:ScoringRule} for a detailed discussion on scoring functions.}. Intuitively, their conjecture claims an adversary with a total stake $\sum_{i = 1}^n \alpha_i$ cannot increase the probability of a smaller score and thus, the probability of getting elected by splitting their stake as $\big(\alpha_i \big)_{1 \leq i \leq n}$ across $n$ different wallets. We settle their conjecture.

\begin{theorem} \label{thm:CanScoringRules}
    Let $S(X, \alpha)$ be any balanced scoring function. Then, for all $n \in \N$ and $\big(\alpha_i \big)_{1 \leq i \leq n}$, the random variables
    $$S(X, \sum_{i = 1}^n \alpha_i) \text{ and } \min_{1 \leq i \leq n} \{S(X_i, \alpha_i)\}$$
    are identically distributed for $X, X_1, \dots, X_n \sim U[0, 1]$.
\end{theorem}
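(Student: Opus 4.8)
The plan is to reduce the general claim to a statement about a single canonical balanced scoring function and then push it through, using the defining properties in Definition~\ref{def:ScoringFunctions} together with a characterization of the CDF $G_\alpha(s) := \Pr_{X \sim U[0,1]}(S(X,\alpha) \le s)$. The first step is to extract from property~2 of Definition~\ref{def:ScoringFunctions} an identity that pins down $G_\alpha$ up to the one-dimensional ``$\alpha$-foliation.'' Concretely, take $n$ wallets and condition on the realized scores: property~2 says that for i.i.d.\ $X_1,\dots,X_n$, wallet $j$ achieves the strict minimum with probability $\alpha_j / \sum_i \alpha_i$. Writing this out in terms of the survival functions $\bar G_{\alpha_i}(s) = 1 - G_{\alpha_i}(s)$ (which are well-defined pointwise because property~1 rules out point masses, so the $S(X_i,\alpha_i)$ are continuous random variables and ties occur with probability zero), the event ``$j$ is the argmin'' has probability $\int_0^1 \big(\prod_{i \ne j}\bar G_{\alpha_i}(s)\big)\, dG_{\alpha_j}(s)$. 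Setting this equal to $\alpha_j/\sum_i\alpha_i$ for all choices of $(\alpha_i)$ is a strong functional constraint.

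The second step is to use this constraint to show $\bar G_\alpha(s) = \bar G_1(s)^{\,\alpha}$ for all $\alpha \ge 0$ and all $s$ (after a harmless reparametrization; more precisely, there is a fixed continuous nonincreasing $\psi:[0,1]\to[0,\infty]$ with $\bar G_\alpha(s) = e^{-\alpha \psi(s)}$). The cleanest route: fix $s$ and consider two wallets with stakes $\alpha$ and $\alpha'$. The argmin-probability identity, differentiated appropriately in the stake parameters (or evaluated at cleverly chosen tuples, e.g.\ taking many wallets of equal stake and letting their number grow), forces the map $\alpha \mapsto -\log \bar G_\alpha(s)$ to be additive in $\alpha$, hence linear: $-\log\bar G_\alpha(s) = \alpha \cdot \psi(s)$ where $\psi(s) := -\log\bar G_1(s)$. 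One has to be slightly careful about the regions where $\bar G_\alpha(s) \in \{0,1\}$ and about justifying the exchange of limits/derivatives, but continuity of $s \mapsto G_\alpha(s)$ (no point masses) and monotonicity make this manageable. Given this multiplicative form, the theorem is immediate: $\Pr(\min_i S(X_i,\alpha_i) > s) = \prod_i \bar G_{\alpha_i}(s) = \prod_i e^{-\alpha_i \psi(s)} = e^{-(\sum_i \alpha_i)\psi(s)} = \bar G_{\sum_i \alpha_i}(s)$, so $\min_i S(X_i,\alpha_i)$ and $S(X,\sum_i \alpha_i)$ have the same survival function, hence the same distribution.

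The main obstacle I expect is Step~2: deriving the multiplicative/exponential form of $\bar G_\alpha$ rigorously from property~2 alone. Property~2 only gives the argmin-probabilities, not directly the joint law of the scores, so one must argue that these probabilities, ranging over all finite tuples $(\alpha_i)$, already determine the survival functions up to the exponential ansatz. A convenient trick is to introduce a fresh ``reference'' wallet of stake $\delta$ and threshold the comparison, i.e.\ compute $\Pr(S(X,\delta) < \min(S(X_1,\alpha), S(X_2,\alpha')))$ two ways and let $\delta$ vary; this converts the discrete argmin identity into a statement about the functions $\bar G_\alpha(\cdot)$ evaluated at a common (random) threshold, from which additivity of $\alpha \mapsto -\log \bar G_\alpha(s)$ can be isolated. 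I would also need the auxiliary fact (presumably Theorem~\ref{thm:UniqueScore}, which the excerpt references) that $\Pr(S(X,\alpha)\ge s)$ is continuous in $\alpha$, to upgrade additivity on rationals to the full linear form, and to handle boundary behavior of $\psi$ at $s=0$ and $s=1$. Everything after the exponential form is a one-line computation with products of survival functions.
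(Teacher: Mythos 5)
Your overall structure has a genuine gap at exactly the place you flag: Step~2, the claim that property~2 of \Cref{def:ScoringFunctions} forces $\Pr(S(X,\alpha)\ge s)=\Pr(S(X,1)\ge s)^{\alpha}$ pointwise, is never actually proved, and it is not an auxiliary lemma -- it is equivalent to the theorem itself (it immediately implies the theorem, and conversely the theorem yields it for rational $\alpha$ and then for all $\alpha$ via monotonicity/continuity; this is precisely \Cref{thm:UniqueScore} in the paper). So the ``reduction'' does not reduce the difficulty, and the devices you sketch do not close it. Property~2 only supplies integral identities: writing $\bar G_\alpha(s)=\Pr(S(X,\alpha)\ge s)$, the argmin condition gives $\int \prod_{i\ne j}\bar G_{\alpha_i}\,dG_{\alpha_j}=\alpha_j/\sum_i\alpha_i$. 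Your reference-wallet trick, carried out, yields only that $H:=\bar G_{\alpha}\bar G_{\alpha'}-\bar G_{\alpha+\alpha'}$ satisfies $\int H\,d\mu=0$ for every test measure $\mu$ in the family generated by the $G_\delta$'s (and laws of minima of finitely many such scores); concluding $H\equiv 0$ from this requires knowing that this family is rich enough, which is exactly the structural information about the $G_\alpha$'s that you are trying to establish. Likewise, ``differentiate in the stake parameters'' presupposes regularity in $\alpha$ that is not given, and invoking \Cref{thm:UniqueScore} (or \Cref{thm:IncStake}) for continuity/monotonicity in $\alpha$ is circular, since in the paper both are deduced \emph{from} \Cref{thm:CanScoringRules}.

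The missing idea, which is how the paper avoids any pointwise characterization of $\bar G_\alpha$, is a moment-matching argument: compare $U=S(X,\sum_i\alpha_i)$ against $\Gamma$ independent copies $V_1,\dots,V_\Gamma$ of $\min_i S(X_i,\alpha_i)$. Balance applied to this pool of $1+\Gamma n$ wallets gives $\Pr(U\le V_1,\dots,V_\Gamma)=\tfrac{1}{\Gamma+1}$ for every $\Gamma$. Passing to quantile space via the (continuous, by property~1) CDF $F$ of the $V_j$'s, this says $\E[q_0^\Gamma]=\tfrac{1}{\Gamma+1}$ for all $\Gamma$, where $q_0=1-F(U)$; since these are the moments of $U[0,1]$ and distributions on $[0,1]$ are determined by their moments, $q_0\sim U[0,1]$ and hence $U$ is distributed as $V_1$. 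If you want to salvage your route instead, you would need an honest derivation of the exponential form directly from the integral identities (handling the lack of a priori regularity in $\alpha$), which is substantially harder than the one-line computation that follows it.
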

The proof of \Cref{thm:CanScoringRules} and further details on scoring functions are deferred to \Cref{sec:ScoringRule}.

\subsection{Cryptographic Self-Selection} \label{sec:CSS}
We are now ready to describe a cryptographic self-selection protocol.
\begin{definition}[Cryptographic Self-Selection Protocol $A$ (CSSPA); \citealp{FHWY22}]
    A Cryptographic Self-Selection Protocol $A$ is the following:
    \begin{enumerate}
        \item Every wallet $i$ sets up an instance of an ideal $\VRF$ with public key, secret key pair $(pk_i, sk_i)$ prior to round $1$. Wallet $i$ holds a stake $\alpha_i$.
        \item $\seed{t}$ denotes the seed of round $t$. The seed $\seed{1}$ for the initial round is computed through an expensive multi-party computation and is distributed according to $U[0, 1]$.
        \item In each round $t$, the user with wallet $i$ computes its credential $\cred{i}{t} := f_{sk_i}(\seed{t})$.
        \item Each user can choose either to broadcast its credential or remain silent. Any credential broadcast by a user is received by all other users\footnote{We assume this to focus on the relevant aspects of the paper and is consistent in prior work that focuses on incentives \citep{CarlstenKWN16, ES14, FW21, FerreiraWHFC19, SSZ16, KiayiasKKT16, BahraniW23}}.
        \item The wallet with the smallest score $S(\cred{i}{t}, \alpha_i)$ amongst all broadcasted credentials is elected the leader $\ell_t$ for round $t$.
        \item The seed for round $t+1$, $\seed{t+1} = \cred{\ell_t}{t}$, the credential of the winner of round $t$. All wallets learn the seed $\seed{t+1}$.
    \end{enumerate}
\end{definition}
\noindent Importantly, note that the blockchain cannot be forked in the CSSPA as in the case with many BFT-based consensus protocols including Algorand.

We consider strategic manipulations rather than network security attacks, and so the action space of users is restricted to distributing their stakes across multiple wallets and choosing between broadcasting and remaining silent for each of its wallet, as opposed to a network partition attack. An honest player keeps its stake in a single wallet and always broadcasts its credential. Conditioned on all players in the network being honest, observe that the probability of a wallet with stake $\alpha_j$ getting elected equals the probability that wallet $i$ has the smallest score, which happens with a probability proportional to $\alpha_i$.

We discuss choosing an explicit balanced scoring function for our model. \citet{FHWY22} argue that the game induced by the CSSPA is independent of the choice of the scoring function and show a bijection between the strategies of a strategic player in the games induced by two different scoring functions that preserve the player's rewards (\Cref{def:Rew}). We choose the logarithmic scoring function defined below.

\begin{definition}[Logarithmic Scoring Function] \label{def:LogScore}
For $X \in [0,1]$ and $\alpha \in \R_{\geq 0}$,
\begin{equation}
    \notag
    S_{\ln}(X, \alpha) = 
    \begin{cases}
        \infty & \text{when } \alpha = 0 \\
        0 & \text{when } X = 0, \alpha \neq 0 \\
        \frac{- \ln X}{\alpha} & \text{otherwise}
    \end{cases}
\end{equation}    
\end{definition}

\begin{definition}[Exponential Distribution]
    The exponential distribution $\expd{\alpha}$ with rate $\alpha$ is the distribution with a cumulative density function ($\CDF$) $F(x, \alpha) = 1-e^{-\alpha \, x}$ and a probability density function (pdf) $f(x, \alpha) = \alpha e^{-\alpha \, x}$.
\end{definition}

\begin{lemma}[Lemma 2.1 from \citealp{FHWY22}]
    $S_{\ln}(X, \alpha)$ is distributed according to $\expd{\alpha}$ when $X \sim U[0, 1]$.
\end{lemma}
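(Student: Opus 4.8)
The plan is to directly compute the cumulative density function of the random variable $S_{\ln}(X,\alpha)$ and check that it matches $F(x,\alpha) = 1 - e^{-\alpha x}$, the $\CDF$ of $\expd{\alpha}$. Throughout I would work in the regime $\alpha > 0$ (the case $\alpha = 0$ is a degenerate boundary case where $S_{\ln}$ is identically $\infty$, and for $\alpha > 0$ the event $X = 0$ has probability zero under $X \sim U[0,1]$ and so does not affect the distribution; thus the only branch of \Cref{def:LogScore} that matters is $S_{\ln}(X,\alpha) = -\ln(X)/\alpha$).

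First I would observe that $S_{\ln}(X,\alpha) \geq 0$ almost surely, since $X \in [0,1]$ gives $-\ln X \geq 0$ and $\alpha > 0$; hence $\Pr[S_{\ln}(X,\alpha) \leq x] = 0$ for $x < 0$, matching $F(x,\alpha)$ on that range. Then for $x \geq 0$ I would chain equivalences: $S_{\ln}(X,\alpha) \leq x \iff -\ln X \leq \alpha x \iff \ln X \geq -\alpha x \iff X \geq e^{-\alpha x}$, where dividing/multiplying by $\alpha > 0$ preserves the inequality direction and exponentiation is monotone. Since $X \sim U[0,1]$ and $e^{-\alpha x} \in (0,1]$ for $x \geq 0$, we get $\Pr[X \geq e^{-\alpha x}] = 1 - e^{-\alpha x}$. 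This is exactly $F(x,\alpha)$, so $S_{\ln}(X,\alpha) \sim \expd{\alpha}$.

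There is essentially no main obstacle here: the argument is a one-line change-of-variables/$\CDF$ computation, and the only things to be careful about are (i) the sign bookkeeping when manipulating the inequality through the logarithm and the division by $\alpha$, and (ii) noting that the measure-zero and degenerate cases in the piecewise definition of $S_{\ln}$ are irrelevant to the distribution. I would state the proof in two or three lines accordingly.
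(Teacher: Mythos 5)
Your proof is correct: the direct CDF computation $\Pr[S_{\ln}(X,\alpha)\leq x]=\Pr[X\geq e^{-\alpha x}]=1-e^{-\alpha x}$ for $x\geq 0$, with the measure-zero case $X=0$ and the degenerate case $\alpha=0$ set aside, is exactly the standard argument. The paper itself does not reprove this lemma (it is quoted from \citet{FHWY22}), and the proof there is this same one-line change of variables, so there is nothing further to compare.
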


\noindent For notational convenience, we denote $S_{\ln}$ by $S$ unless mentioned otherwise.

Consider a user distributing their stake $\alpha$ equally across $n$ wallets for $n \xrightarrow{} \infty$. The scores of each wallet is distributed according to $\expd{\tfrac{\alpha}{n}}$. It is well-known that the minimum of $n$ i.i.d random variables drawn from $\expd{\theta}$ is distributed according to $\expd{n \, \theta}$ (see Appendix A from \citealp{FHWY22}, for example). Therefore, the minimum score over all wallets of the user is distributed as per $\expd{\alpha}$. The following describes the distribution of the $i\textsuperscript{th}$-smallest score amongst the $n$ wallets.

\begin{lemma}[Lemma 4.3 from \citealp{FHWY22}] \label{thm:SmallestScore}
    Let $\big( X_i \big)_{i \in \N}$ be exponentially distributed i.i.d random variables such that $\min_{i \in \N} \{X_i\}$ is distributed according to $\expd{\alpha}$. Let $Y_i$ be the random variable denoting the $i\textsuperscript{th}$-smallest value in $\big( X_i \big)_{i \in \N}$. Then, $\big( Y_i \big)_{i \in \N}$ is distributed according to the following random process:
    $$Y_1 \xleftarrow{} \expd{\alpha} \text{ and } Y_{i+1} \xleftarrow{} Y_i + \expd{\alpha}$$
\end{lemma}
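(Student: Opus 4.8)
The plan is to interpret the object $(X_i)_{i\in\N}$ as the $n\to\infty$ limit of the finite configuration considered just above the lemma: split a stake $\alpha$ equally across $n$ wallets, so the scores $X^{(n)}_1,\dots,X^{(n)}_n$ are i.i.d.\ copies of $\expd{\alpha/n}$, and then keep the order statistics $Y_1<Y_2<\cdots$ fixed in index while sending $n\to\infty$. Under this interpretation $\{X_i\}$ is exactly a homogeneous Poisson point process on $[0,\infty)$ of intensity $\alpha$: for any $t$, the count $N^{(n)}(t)=\#\{i\le n: X^{(n)}_i\le t\}$ is $\mathrm{Binomial}\big(n,\,1-e^{-\alpha t/n}\big)$, which converges in distribution to $\mathrm{Poisson}(\alpha t)$, and counts on disjoint intervals are independent; hence $N^{(n)}(\cdot)$ converges to a rate-$\alpha$ Poisson counting process whose jump times are precisely $Y_1<Y_2<\cdots$. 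Given this, the lemma is the classical fact that the arrival times of such a process have i.i.d.\ $\expd{\alpha}$ inter-arrival gaps; I would either cite this directly or prove it as follows.

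For a self-contained argument I would induct on $i$. The base case is $Y_1=\min_i X_i\sim\expd{\alpha}$, which is the hypothesis; equivalently it is the stated limit, since the minimum of $n$ i.i.d.\ $\expd{\alpha/n}$ variables is $\expd{(\alpha/n)\cdot n}=\expd{\alpha}$ by the cited fact (Appendix A of~\citet{FHWY22}) that the minimum of $m$ i.i.d.\ $\expd{\theta}$ variables is $\expd{m\theta}$. For the inductive step, work in the finite-$n$ model and condition on the $i$-th smallest of $X^{(n)}_1,\dots,X^{(n)}_n$ being equal to $y$; by exchangeability the remaining $n-i$ variables are i.i.d.\ copies of $\expd{\alpha/n}$ conditioned on exceeding $y$, which by memorylessness of the exponential are distributed as $y+\expd{\alpha/n}$. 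Hence $Y^{(n)}_{i+1}$, the smallest of those $n-i$ variables, is distributed as $y+\big(\text{min of }n-i\text{ i.i.d.\ }\expd{\alpha/n}\big)=y+\expd{(n-i)\alpha/n}$. This conditional law depends on the earlier order statistics only through $y=Y^{(n)}_i$, so $Y^{(n)}_{i+1}-Y^{(n)}_i\sim\expd{(n-i)\alpha/n}$ independently of $Y^{(n)}_i$; letting $n\to\infty$ with $i$ fixed gives $(n-i)\alpha/n\to\alpha$, so in the limit $Y_{i+1}-Y_i\sim\expd{\alpha}$ independently of $Y_i$, i.e.\ $Y_{i+1}\xleftarrow{}Y_i+\expd{\alpha}$, completing the induction and matching the claimed process.

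The main obstacle is not any individual calculation but making the limit rigorous: the literal hypothesis---countably many i.i.d.\ exponentials whose minimum is $\expd{\alpha}$---is not satisfiable, since the infimum of infinitely many i.i.d.\ exponentials is $0$ almost surely, so the real content of the lemma lives in the limiting point process. One therefore has to (i) pin down that the intended interpretation is the rate-$\alpha$ Poisson process obtained from the $n$-wallet split, and (ii) justify interchanging the $n\to\infty$ limit with the conditioning on $\{Y^{(n)}_i=y\}$ used in the inductive step---for instance by invoking convergence of the joint law of $(Y^{(n)}_1,\dots,Y^{(n)}_{i+1})$, or, cleanest, by writing down the joint density of $(Y^{(n)}_1,\dots,Y^{(n)}_{i+1})$ explicitly from order-statistic formulas and passing to the limit in that density. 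Everything else---memorylessness, ``min of $m$ i.i.d.\ $\expd{\theta}$ is $\expd{m\theta}$'', and exchangeability of i.i.d.\ samples---is standard and is already used in the surrounding discussion.
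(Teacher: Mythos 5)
The paper itself contains no proof of this lemma: it is imported verbatim as Lemma~4.3 of \citet{FHWY22}, so there is nothing internal to compare against, and your argument has to stand on its own. It does. Your reading of the hypothesis is the intended one --- the statement as literally written (countably many i.i.d.\ exponentials whose minimum is $\expd{\alpha}$) is vacuous, since the infimum of infinitely many i.i.d.\ exponentials of any fixed rate is $0$ almost surely, and the surrounding text of the paper makes clear that the object in question is the $n\to\infty$ limit of $n$ wallets with i.i.d.\ $\expd{\alpha/n}$ scores, i.e.\ a rate-$\alpha$ Poisson point process on $[0,\infty)$. Given that, your finite-$n$ induction is the standard R\'enyi representation of exponential order statistics: memorylessness plus exchangeability give $Y^{(n)}_{i+1}-Y^{(n)}_i\sim\expd{(n-i)\alpha/n}$ independent of the earlier order statistics, and for fixed $i$ the rate tends to $\alpha$, yielding the claimed process $Y_1\leftarrow\expd{\alpha}$, $Y_{i+1}\leftarrow Y_i+\expd{\alpha}$. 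The two points you flag as needing care --- making the conditioning on $\{Y^{(n)}_i=y\}$ rigorous (via the joint order-statistic density) and justifying the passage to the limit in the joint law of $(Y^{(n)}_1,\dots,Y^{(n)}_{i+1})$ --- are exactly the right ones, and either the density computation you sketch or a direct appeal to the Poisson-process characterization (independent increments plus $\mathrm{Poisson}(\alpha t)$ counts, whose inter-arrival times are i.i.d.\ $\expd{\alpha}$) closes them; no gap remains.
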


We have the prerequisites to study the actions of a strategic player in the CSSPA in place.

\section{Model}

\subsection{The Adversarial Game and Reward} \label{sec:Game}

In a network consisting of honest stakers, we study a single strategic adversary whose rewards are proportional to the fraction of rounds it is elected to propose a block. Conditioned on the stake the adversary holds in the system, we want to estimate the optimal marginal utility gained by the adversary from being strategic. We adopt the adversarial model described in \citet{FHWY22}, which we review below.

We define the space of strategies available to the adversary. We abuse notation to denote the cryptographic self-selection protocol, the game played by the adversary and the space of strategies available to the adversary by CSSPA. 

\begin{definition}[$\CSSPA$] \label{def:RefinedCSSPA}
    In $\CSSPA$, the network consists of three players --- the adversary with stake $\alpha$, and two honest players $\knownHonest$ and $\opaqueHonest$ with stakes $\beta \, (1-\alpha)$ and $(1-\beta) (1-\alpha)$ respectively.
    Prior to round $1$, the adversary learns the values of $\alpha$, $\beta$ and that $\knownHonest$ and $\opaqueHonest$ are honest. For $n \xrightarrow{} \infty$, the adversary distributes its stake into a set $\advAccount$ of $n$ wallets, each containing a stake $\tfrac{\alpha}{n}$. The adversary makes the following decisions in round $t$:
    \begin{enumerate}
        \item The adversary learns the seed $\seed{t}$ of round $t$.

        \item The adversary computes the credentials $\cred{i}{t}$ for all wallets $i \in \advAccount$. 
            
        \item Further, the adversary learns the credentials $\cred{B}{t}$ of player $\knownHonest$. The adversary knows that the credential of player $\opaqueHonest$ is drawn from $\expd{(1-\beta) \, (1-\alpha)}$ but does not learn $\cred{C}{t}$.
        
        \item For any $r \geq 0$ and $(i_t, i_{t+1}, \dots, i_{t+r}) \in \big( \advAccount \cup \{\knownHonest\} \big) \times \advAccount^r$, the adversary precomputes the credentials $\cred{i_{t+r'}}{t+r'}$ for $1 \leq r' \leq r$ assuming $i_{t+\hat{r}}$ is elected to lead in round $t+\hat{r}$ for all $0 \leq \hat{r} < r'$.
        \item The adversary either remains silent or broadcasts the credential of a wallet $i \in \advAccount$.
    \end{enumerate}
\end{definition}
The following discussion throws light on bullet 4 of \Cref{def:RefinedCSSPA}. Before broadcasting any credential in round $t$, the adversary observes the credentials of its own wallets and the credential of $\knownHonest$. All credentials $\cred{i}{t}$ for $i \in A \cup \{B\}$ observed by the adversary are potential seeds for round $t+1$. Assuming one of these credentials as a hypothetical seed, the adversary can compute the credentials $\cred{i}{t+1}$ for all of its wallets $i \in A$. These hypothetical credentials are themselves potential seeds for round $t+2$. More generally, the adversary can precompute all possible future credentials of its wallets, assuming the precomputed credentials keep becoming the seed for successive rounds.

$\beta$ denotes the network connectivity of the adversary. The stake of $\opaqueHonest$, and therefore the probability of $\opaqueHonest$ having a small score and being selected, decreases with $\beta$. Thus, for large values of $\beta$, it is much more unlikely for a credential not precomputed by the adversary, namely $\cred{\opaqueHonest}{t}$, to become the seed $\seed{t+1}$ for the next round. Since both $\knownHonest$ and $\opaqueHonest$ have non-negative stakes, $\beta \in [0, 1]$.

Remember that the honest strategy collects all stake into a single wallet and broadcasts the credential of the wallet each round. We assume both $\knownHonest$ and $\opaqueHonest$ play the honest strategy. We normalize the total stake to $1$ and as a consequence, use the stake and the fraction of stake held in a wallet interchangeably.

\begin{definition}[Reward of a Strategy] \label{def:Rew}
    For a strategy $\strategy$ describing the actions taken by the adversary in each round of $\CSSPA$, let the Bernoulli random variable $X_t(\alpha, \beta; \strategy)$ be $1$ if the adversary is elected in round $t$ and $0$ otherwise. Then, the expected reward
    $$\Rew{\alpha, \beta; \strategy} = \E \Big[\lim \inf_{T \xrightarrow{} \infty} \frac{\sum_{t=1}^{T} X_t(\alpha, \beta; \strategy)}{T} \Big]$$
    equals the fraction of rounds led by the adversary in expectation over the outcomes of the $\VRF$s in each round.
\end{definition}
When clear from the context, we drop the parameters $\alpha$ and $\beta$ and denote $\Rew{\alpha, \beta; \strategy}$ and $X_t(\alpha, \beta; \strategy)$ by $\Rew{\strategy}$ and $X_t(\strategy)$.

The above model of the CSSPA appears quite restrictive in more than one aspect -- the adversary can broadcast at most one credential, the adversary cannot strategically distribute its stake into multiple accounts prior to round $1$, and there are only two honest players in the network. In \Cref{sec:GenCSSPA}, we recap a very general model of $\CSSPA$ discussed in \citet{FHWY22} and their results showing that the above restricted version of the CSSPA has the same optimal adversarial reward as the more general version.

\subsection{Biased Seeds and Stopping Times}
We aim to estimate the reward $\Rew{\strategy} = \E \Big[\lim \inf_{T \xrightarrow{} \infty} \frac{\sum_{t=1}^{T} X_t(\strategy)}{T} \Big]$ the adversary wins by playing a strategy $\strategy$. A tractable closed-form expression for $X_t(\strategy)$ is hard to find and computing its expected values for all $1 \leq t \leq \infty$ is infeasible. Therefore, it becomes imperative to find a round $\fst$ such that the expected adversarial reward can be estimated without computing the expected value of $X_{\fst+r}(\strategy)$ for any $r > 0$. We call such a round $\fst$ a stopping time. The expected adversarial reward has a much simpler expression in terms of stopping times.

\begin{lemma}[Lemma 4.1 from \citealp{FHWY22}] \label{thm:ResetST}
    Suppose the strategy $\strategy$ has an expected finite stopping time $\fst$ in $\CSSPA$. Then,
    $$\Rew{\strategy} = \frac{\E[\sum_{t = 1}^{\fst} X_t(\strategy)]}{\E[\fst]}$$
    where $\fst$ is a random variable denoting a stopping time.
\end{lemma}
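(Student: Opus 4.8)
The statement is a renewal--reward identity, and the plan is to prove it accordingly. The role of the stopping time $\fst$ in the model is that round $\fst$ is a \emph{regeneration point}: conditioned on the history through round $\fst$, the seed $\seed{\fst+1}$ is again an unbiased $U[0,1]$ draw and the adversary's precomputed credential tree resets, so the continuation of $\CSSPA$ from round $\fst+1$ onward (with the adversary still playing $\strategy$) is an independent copy of the original game. First I would make this regeneration property explicit — it is the defining feature of a stopping time in this setting — and use it to decompose the infinite horizon into consecutive i.i.d.\ \emph{cycles}: let $\fst_1,\fst_2,\dots$ be i.i.d.\ copies of $\fst$, set $S_m=\sum_{j=1}^m\fst_j$ (with $S_0=0$), and let $R_m=\sum_{t=S_{m-1}+1}^{S_m}X_t(\strategy)$ be the reward accumulated in cycle $m$. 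The pairs $(\fst_m,R_m)$ are i.i.d., with $R_m\in[0,\fst_m]$, $\E[\fst_m]=\E[\fst]<\infty$, and $\E[R_1]=\E[\sum_{t=1}^{\fst}X_t(\strategy)]$ by construction.

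Next I would run the standard sandwiching argument. For horizon $T$ let $N(T)=\max\{m:S_m\le T\}$ count completed cycles. Since $\E[\fst]<\infty$ we have $\fst<\infty$ a.s., hence $S_m\to\infty$ and $N(T)\to\infty$ a.s.; the strong law gives $S_m/m\to\E[\fst]$ a.s., and combined with $S_{N(T)}\le T<S_{N(T)+1}$ this yields $N(T)/T\to 1/\E[\fst]$ a.s. Because the $X_t$ are nonnegative,
\[
\sum_{m=1}^{N(T)}R_m\ \le\ \sum_{t=1}^{T}X_t(\strategy)\ \le\ \sum_{m=1}^{N(T)+1}R_m .
\]
Dividing by $T$ and writing $\tfrac1T\sum_{m=1}^{N(T)}R_m=\tfrac{N(T)}{T}\cdot\tfrac{1}{N(T)}\sum_{m=1}^{N(T)}R_m$, the strong law applied to the i.i.d.\ sequence $(R_m)$ gives $\tfrac{1}{N(T)}\sum_{m=1}^{N(T)}R_m\to\E[R_1]$ a.s., so both the lower and upper bounds converge a.s.\ to $\E[R_1]/\E[\fst]$. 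Hence $\tfrac1T\sum_{t=1}^{T}X_t(\strategy)\to\E[R_1]/\E[\fst]$ almost surely.

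Finally, since the limit exists almost surely, $\liminf_{T\to\infty}\tfrac1T\sum_{t=1}^{T}X_t(\strategy)$ equals the constant $\E[R_1]/\E[\fst]$ a.s., and taking expectations yields $\Rew{\strategy}=\E[R_1]/\E[\fst]=\E[\sum_{t=1}^{\fst}X_t(\strategy)]/\E[\fst]$. (One could equally avoid the a.s.\ limit: $\tfrac1T\sum_t X_t(\strategy)\in[0,1]$ is bounded, so convergence in probability plus the bounded convergence theorem suffices.)

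The step needing the most care is the first one: pinning down precisely why round $\fst$ is a genuine regeneration point, i.e.\ that conditioned on $\{\fst=\ell\}$ and on the history up to round $\ell$, the future game is distributed exactly as the original $\CSSPA$ under $\strategy$ — this is exactly what makes the cycles i.i.d., and it relies on the model's formal definition of a stopping time (unbiased next seed, reset of the adversary's state) rather than anything derived in this excerpt. Everything after that is the textbook renewal--reward computation, modulo the routine measure-theoretic check that $\fst$ being a stopping time legitimizes the strong-Markov-type splitting of the repeated game.
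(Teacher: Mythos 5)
Your proposal is correct: the lemma is exactly a renewal--reward identity, and the i.i.d.-cycle decomposition at forced stopping times, SLLN sandwiching, and bounded convergence is the standard way it is established. Note that this paper does not prove the statement at all --- it is imported verbatim as Lemma 4.1 of \citet{FHWY22} --- and your argument matches the regeneration-based proof in that source, with the one genuinely model-dependent step being the point you already flag: the unbiased seed at a (forced) stopping time, together with the fact that the strategy's behavior depends only on the observable state and not on the absolute round index, is what makes the successive cycles i.i.d.
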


Suppose we reach a round $\fst + 1$ such that the adversary is indifferent between the current seed $\seed{\fst + 1}$ and a fresh draw from $U[0, 1]$. We say such a seed $\seed{\fst + 1}$ is \emph{unbiased}. The adversary's rewards from the rounds following $\fst + 1$ is similar to restarting $\CSSPA$ from round $1$, whose initial seed $\seed{1}$ is drawn from $U[0, 1]$ (in practise, this is done through an expensive multi-party computation and is not susceptible to manipulation). The expected adversarial reward can be computed by estimating only the distributions of $X_1(\strategy), X_2(\strategy), \dots, X_{\fst}(\strategy)$ and thus, $\fst$ is a stopping time.

For an arbitrary round $t$, it is hard to determine whether the seed $\seed{t+1}$ is unbiased and whether $t$ is a stopping time. We define forced stopping times so that they are much easier to identify. In the next few paragraphs, we motivate forced stopping times through an example adversarial strategy.

For a stake $\alpha$ and a random seed $\seed{t}$ in round $t$, the probability that the adversary gets elected is at most $\alpha$. However, the adversary could have multiple wallets whose scores are smaller than the score of the honest wallets $\knownHonest$ and $\opaqueHonest$. When $\beta = 1$, the adversary knows the smallest honest score and that broadcasting the credentials of any of its wallets with a smaller score would ensure an adversarial wallet getting elected in round $t$. It is convenient to explicitly christen these candidate adversarial wallets.

\begin{definition}[Potential Winners and Adversarial Potential Winners] \label{def:PotentialWinners}
    In round $t$ with seed $\seed{t}$, let $\advPot(\seed{t})$ be the set of all adversarial wallets with a score less than that of $\knownHonest$. Then, $\advPot(\seed{t})$ is the set of adversarial potential winners and $\potWinners(\seed{t}) = \advPot(\seed{t}) \cup \{B\}$ is the set of potential winners in round $t$. 
\end{definition}

Out of these adversarial potential winners, the adversary can choose to broadcast the one that optimizes its future rewards, which can be estimated by computing hypothetical seeds for the rounds following round $t$ (see bullet 4 from \Cref{def:RefinedCSSPA}). The adversary can also choose to sacrifice the current round and remain silent if the future rewards from the honest wallet $\knownHonest$ getting elected more than compensates for losing round $t$.

Suppose a seed $\seed{\fst + 1}$ is realized for which the adversary has not computed any hypothetical future seeds. At the instant in which $\seed{\fst+1}$ is realized, the adversary is indifferent between $\seed{\fst+1}$ and a fresh draw from $U[0, 1]$ and thus, $\seed{\fst+1}$ is unbiased. Now, consider a round $\fst$ in which the smallest score either belongs to $\knownHonest$ or $\opaqueHonest$. The first time the adversary learns the honest credential with the smallest score, the adversary would have pre-computed neither the credential nor any hypothetical future credentials following the honest credential since computing them would require the secret key of the honest wallet with the smallest score. The adversary cannot thwart $\cred{\knownHonest}{\fst}$ or $\cred{\opaqueHonest}{\fst}$ from becoming the seed $\seed{\fst+1}$. Thus, the seed $\seed{\fst+1}$ is unbiased and round $\fst$ is a stopping time. We call such stopping times as forced stopping times. Forced stopping times are easy to identify since we only have to ensure that the smallest score does not belong to an adversarial wallet.

\begin{definition}[Forced Stopping Time] \label{def:FST}
    Let $i$ be the wallet with the smallest score in round $\fst$. $\fst$ is a forced stopping time if $i \not \in \advAccount$.
\end{definition}

\begin{lemma}[Lemma 4.2 from \citealp{FHWY22}] \label{thm:FSTisST}
    If $\fst$ is a forced stopping time, $\fst$ is a stopping time.
\end{lemma}

We will use $\fst$ to denote the first forced stopping time of the adversary. We will only consider forced stopping times (and not any `unforced' stopping times) for the remainder of the paper. Because of this and for convenience, we abuse notation and refer to forced stopping times plainly as stopping times.

\subsection{The Omniscient Adversary}\label{sec:OmniMain}
As a warm up, we look at the omniscient adversary studied by \citet{FHWY22} that is stronger than the adversary in $\CSSPA$ in the following aspects:

\begin{itemize}
    \item $\beta = 1$. Further, for any $r \geq 0$ and $(i_t, i_{t+1}, \dots, i_{t+r}) \in \big( \advAccount \cup \{\knownHonest\} \big)^{r+1}$, the adversary pre-computes the credentials $\cred{i_{t+r'}}{t+r'}$ for $1 \leq r' \leq r$ assuming $i_{t+\hat{r}}$ is elected to lead in round $t+\hat{r}$ for all $0 \leq \hat{r} < r'$. In other words, the omniscient adversary can precompute hypothetical future credentials even when $\knownHonest$ is elected to be the leader.
    \item $X_t = 1$ for all rounds $t < \fst$, i.e., the omniscient adversary is rewarded to delay the first stopping time, even if it entails being elected only for a very small fraction of rounds. By \Cref{thm:ResetST}, the omniscient reward
    \begin{equation*}
        \notag
        \RewOmni{\strategy} = \frac{\mathbb{E}[\sum_{t = 1}^{\fst} X_t(\pi)]}{\mathbb{E}[\fst]} = \frac{\mathbb{E}[\fst - 1]}{\mathbb{E}[\fst]} = 1-\frac{1}{\mathbb{E}[\fst]}
    \end{equation*}
\end{itemize}
Similar to the adversary in $\CSSPA$, we define $\fst$ to be a stopping time for the omniscient adversary if the set of adversarial potential winners $\advPot$ for round $\fst$ is empty. We defer our discussions on the omniscient adversary to \Cref{sec:Omni}. We summarize our findings in \Cref{thm:OmniSummary}.

\begin{theorem} \label{thm:OmniSummary}
    For the omniscient adversary with stake $\alpha$, there exists a constant $\omniconst \approx 0.38$ such that,
    \begin{enumerate}
        \item for $\alpha > \omniconst$, there exists a strategy $\strategy$ such that $\E[\fst]$ is unbounded and $\RewOmni{\strategy} = 1$, and,
        \item for $\alpha \leq \omniconst$ and any strategy $\strategy$, $\E[\fst] \leq \tfrac{1- 3\alpha + 3\alpha^2 - 3 \alpha^3}{(1-3\alpha+\alpha^2)(1-\alpha+\alpha^2)}$ and $\RewOmni{\strategy} \leq \alpha \cdot \big( \tfrac{1 - 2\alpha + \alpha^2 - \alpha^3}{1- 3\alpha + 3\alpha^2 - 3 \alpha^3} \big)$.
    \end{enumerate}
\end{theorem}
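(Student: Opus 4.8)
The plan is to reduce all of \Cref{thm:OmniSummary} to bounding $\E[\fst]$ for the omniscient adversary, and then to show that this optimization collapses to a one-dimensional nonlinear recursion whose tail is exactly geometric, which makes the stated closed form drop out. Recall that for the omniscient adversary $\RewOmni{\strategy}=\E[\liminf_{T}\frac1T\sum_{t\le T}X_t]$ with the boosted rewards $X_t=1$ for all $t<\fst$, and $\RewOmni{\strategy}=1-1/\E[\fst]$ whenever $\E[\fst]<\infty$ by \Cref{thm:ResetST} (using $\sum_{t=1}^{\fst}X_t=\fst-1$); since the $\fst$ of an optimal strategy dominates that of any strategy pointwise on each tree realization, it suffices to analyze optimal play. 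Here is the structure of a round (with $\beta=1$): splitting stake $\alpha$ into $n\to\infty$ equal wallets, \Cref{thm:SmallestScore} makes the $i$-th smallest adversarial score equal $\sum_{\ell\le i}\expd{\alpha}$ (i.i.d.\ increments), while $B$'s score is an independent $\expd{1-\alpha}$ draw; a memorylessness (competing exponentials) argument then shows the number $K$ of adversarial potential winners (\Cref{def:PotentialWinners}) is geometric, $\Pr[K=k]=(1-\alpha)\alpha^k$. The structural point, which I would justify from the perfect randomness of the ideal $\VRF$, is that conditioned on $K=k$ the $k$ adversarial-potential-winner credentials together with $B$'s credential, each used as next round's seed, generate $k+1$ i.i.d.\ copies of the whole game, independent of $K$ — although a credential both determines a score (hence affects $K$) and is a candidate next seed, the randomness seeding the later rounds (the $\VRF$ evaluations \emph{at} that credential) is independent of the credential itself. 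Since the omniscient adversary precomputes the whole tree and may make any potential winner (including $B$) the round-$1$ leader, optimal play sets $\fst=1$ if $K=0$ and $\fst=1+\max$ of the $K+1$ i.i.d.\ continuation-$\fst$'s otherwise. Writing $q_j:=\Pr[\fst\ge j]$ under optimal play (so $q_1=1$ and $\E[\fst]=\sum_{j\ge1}q_j$), this gives
\[
q_{j+1}=\sum_{k\ge1}(1-\alpha)\alpha^k\bigl(1-(1-q_j)^{k+1}\bigr)=\alpha-\frac{\alpha(1-\alpha)(1-q_j)^2}{1-\alpha(1-q_j)}=:g(q_j),\qquad q_2=g(1)=\alpha.
\]

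Next I would record three elementary facts about $g$ on $[0,1]$: (i) $g$ is increasing with $g(0)=0$ and $g([0,1])=[0,\alpha]$ (substitute $u=1-q$ and note $u\mapsto u^2/(1-\alpha u)$ is increasing); (ii) setting $\rho:=\tfrac{\alpha(2-\alpha)}{1-\alpha}=g'(0)$, the algebraic identity $\rho q-g(q)=\tfrac{\alpha}{1-\alpha}\cdot\tfrac{q^2}{1-\alpha(1-q)}\ge0$ holds, so $g(q)\le\rho q$; and (iii) the fixed points of $g$ in $[0,1]$ are $0$ and $q^\star:=1-\tfrac{1-\alpha}{\alpha(2-\alpha)}$, with $q^\star>0\iff 1-3\alpha+\alpha^2<0\iff\alpha>\omniconst:=\tfrac{3-\sqrt5}{2}\approx0.38$, equivalently $\rho<1\iff\alpha<\omniconst$. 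For Part~2 ($\alpha\le\omniconst$) assume $\alpha<\omniconst$ (the boundary is vacuous since there the two claimed bounds are infinite / exceed $1$), so $\rho<1$. Using $q_3=g(q_2)=g(\alpha)$ \emph{exactly} and $q_{j+1}=g(q_j)\le\rho q_j$ for $j\ge3$ yields $q_j\le g(\alpha)\,\rho^{\,j-3}$, hence
\[
\E[\fst]=1+\alpha+\sum_{j\ge3}q_j\ \le\ 1+\alpha+\frac{g(\alpha)}{1-\rho}.
\]
Substituting $g(\alpha)$ and $\rho$ and simplifying (using $(1+\alpha)(1-\alpha+\alpha^2)=1+\alpha^3$) collapses the right-hand side to $\tfrac{1-3\alpha+3\alpha^2-3\alpha^3}{(1-3\alpha+\alpha^2)(1-\alpha+\alpha^2)}$. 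Since this is finite, \Cref{thm:ResetST} gives $\RewOmni{\strategy}=1-1/\E[\fst]$, and monotonicity of $x\mapsto1-1/x$ together with the routine identity $1-\tfrac{(1-3\alpha+\alpha^2)(1-\alpha+\alpha^2)}{1-3\alpha+3\alpha^2-3\alpha^3}=\alpha\cdot\tfrac{1-2\alpha+\alpha^2-\alpha^3}{1-3\alpha+3\alpha^2-3\alpha^3}$ yields the reward bound. (Keeping more of the $q_j$ exact before applying the geometric tail gives arbitrarily sharper two-sided bounds — the ``arbitrarily good approximation'' mentioned in the introduction.)

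For Part~1 ($\alpha>\omniconst$), we have $q^\star\in(0,\alpha)$ (one checks $\alpha-q^\star=\tfrac{(1-\alpha)^3}{\alpha(2-\alpha)}>0$). Since $g$ is increasing with $q_1=1>q^\star$ and $q_2=\alpha$, the sequence $(q_j)$ is nonincreasing and bounded below by $q^\star$, hence converges to a fixed point $\ge q^\star$, i.e.\ to $q^\star>0$; therefore $\E[\fst]=\sum_j q_j=\infty$, so no finite bound holds. Moreover $\Pr[\fst=\infty]=\lim_j q_j=q^\star>0$, and since a forced stopping time leaves the next seed unbiased and thus restarts an independent copy of the game (\Cref{def:FST}, \Cref{thm:FSTisST}, \Cref{thm:ResetST}), the successive ``epochs'' between forced stopping times are i.i.d., each infinite with probability $q^\star$. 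So almost surely only finitely many epochs are finite, hence $X_t=0$ for only finitely many $t$, giving $\liminf_{T}\frac1T\sum_{t\le T}X_t=1$ almost surely and therefore $\RewOmni{\strategy}=1$.

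The main obstacle is entirely in the first paragraph: recognizing that this uncountable-state optimization collapses to the scalar recursion $q_{j+1}=g(q_j)$. This needs both the clean fact that the number of adversarial potential winners is geometric and the $\VRF$-based argument that the $K+1$ continuation subtrees are i.i.d.\ and independent of $K$ (the subtlety that a credential serves simultaneously as a score input and as a next-round seed). Once that is in place, the identity $\rho q-g(q)=\tfrac{\alpha}{1-\alpha}\cdot\tfrac{q^2}{1-\alpha(1-q)}$ is precisely what forces the tail of $(q_j)$ to be geometric with ratio $\rho$ and makes the closed form appear; the remaining algebra, the $\alpha=\omniconst$ boundary, and the renewal argument for Part~1 are routine.
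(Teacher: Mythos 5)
Your proposal is correct and follows essentially the same route as the paper: you rederive the paper's choice-tree recursion (Lemma~\ref{thm:Recursive}, with your $q_j=\prevent{j-1}$ and $g(\alpha)=\prevent{2}$), dominate the tail geometrically with ratio $\rho=\alpha\,\tfrac{2-\alpha}{1-\alpha}$ exactly as in Lemma~\ref{thm:PrStoppingTime}, and sum to the same closed forms (Lemmas~\ref{thm:NoUB} and~\ref{thm:SecondOrderBound}), with your fixed point $q^\star$ playing the role of the paper's $\delta$ in the unboundedness argument. The only cosmetic differences are your explicit monotone-convergence/renewal justifications for the $\alpha>\omniconst$ case, which the paper handles more tersely.
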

We also show a non-closed form upper bound on the optimal omniscient rewards that can be made tight up to an arbitrarily small additive error. The plots comparing our upper bounds to \citet{FHWY22} can be found in \Cref{fig:Omni} in \Cref{sec:Omni}.

The following results on the size of potential winners and first stopping time of the omniscient adversary would be bootstrapped further to get upper bounds on the optimal rewards of the actual adversary in $\CSSPA$.

 \begin{lemma} [Corollary 4.1 from \citealp{FHWY22}] \label{thm:PrPotentialWinner}
     For a random seed $\seed{t} \sim U[0, 1]$ and $i^* \geq 0$, the probability $Pr(|\potWinners_t(\seed{t})| = i^* + 1)$ of the adversary having exactly $i^* +1$ potential winners equals $\alpha^{i^*} \, (1-\alpha)$.
 \end{lemma}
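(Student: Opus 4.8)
The plan is to combine the explicit description of the adversary's sorted scores from \Cref{thm:SmallestScore} with the memorylessness of the exponential distribution. Since \Cref{thm:PrPotentialWinner} is stated for the omniscient adversary, we have $\beta = 1$, so the honest player $\knownHonest$ keeps all of its stake $1-\alpha$ in a single wallet, and by the logarithmic scoring lemma its round-$t$ score is a single draw $Z \sim \expd{1-\alpha}$. On the adversary's side the stake $\alpha$ is split across $n \to \infty$ wallets, and by \Cref{thm:SmallestScore} the sorted scores of these wallets are distributed as $Y_1 = E_1$ and $Y_k = E_1 + \dots + E_k$ with $E_1, E_2, \dots$ i.i.d.\ $\expd{\alpha}$; moreover $Z$ is independent of $(Y_k)_k$ because it is generated by an independent $\VRF$ instance. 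By \Cref{def:PotentialWinners}, $\advPot(\seed{t})$ is exactly the set of adversarial wallets whose score falls below $Z$, so $|\advPot(\seed{t})| = \max\{k \ge 0 : Y_k < Z\}$ (taking $Y_0 := 0$), which is finite almost surely since $Y_k \to \infty$.

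Next I would compute the upper-tail probabilities $\Pr(|\advPot(\seed{t})| \ge k) = \Pr(Y_k < Z)$. The key step is the identity $\Pr(Y_k < Z \mid Y_{k-1} < Z) = \alpha$ for every $k \ge 1$: conditioned on $\{Y_{k-1} < Z\}$, memorylessness of $Z$ gives that the residual $Z - Y_{k-1}$ is again $\expd{1-\alpha}$ and independent of the fresh increment $E_k \sim \expd{\alpha}$, whence the standard ``exponential race'' fact yields $\Pr(E_k < Z - Y_{k-1}) = \tfrac{\alpha}{\alpha + (1-\alpha)} = \alpha$. Telescoping (with $Y_0 = 0 < Z$ trivially) then gives $\Pr(Y_k < Z) = \alpha^k$. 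Alternatively, one can avoid the conditioning argument entirely by conditioning on $Z = z$, using the Gamma CDF $\Pr(Y_k \le z) = 1 - e^{-\alpha z}\sum_{j=0}^{k-1} (\alpha z)^j / j!$, and integrating against the $\expd{1-\alpha}$ density via $\int_0^\infty z^j e^{-z}\,dz = j!$, which again produces $\alpha^k$.

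Finally, differencing the tail probabilities gives $\Pr(|\advPot(\seed{t})| = i^*) = \alpha^{i^*} - \alpha^{i^*+1} = \alpha^{i^*}(1-\alpha)$ for each integer $i^* \ge 0$, and since $\potWinners(\seed{t}) = \advPot(\seed{t}) \cup \{\knownHonest\}$ always contributes exactly one additional element, $\Pr(|\potWinners(\seed{t})| = i^* + 1) = \alpha^{i^*}(1-\alpha)$, as claimed. I expect the only genuine subtlety to be the conditioning/memorylessness step — making precise that, conditioned on $\{Y_{k-1} < Z\}$, the pair (residual $Z - Y_{k-1}$, next increment $E_k$) has the asserted product distribution — but this is precisely the classical competing-exponentials property, and the alternative direct integration sidesteps it cleanly; the passage to the $n \to \infty$ limit is already handled by \Cref{thm:SmallestScore}.
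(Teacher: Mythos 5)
Your proof is correct. Note that the paper does not actually prove this statement — it is imported as Corollary 4.1 of \citealp{FHWY22} — so there is no in-paper argument to compare against; your derivation (sorted adversarial scores $Y_k$ as sums of i.i.d.\ $\expd{\alpha}$ increments via \Cref{thm:SmallestScore}, the memorylessness/exponential-race step giving $\Pr(Y_k < Z) = \alpha^k$ against the honest score $Z \sim \expd{1-\alpha}$, then differencing consecutive tails and adding $\knownHonest$ to pass from $|\advPot|$ to $|\potWinners|$) is exactly the standard competing-exponentials argument underlying the cited result, and the Gamma-CDF integration you offer is a clean way to sidestep the conditioning subtlety you flag. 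One small remark: you specialize to $\beta = 1$, which matches the omniscient setting in which the lemma is stated (and in which it is later invoked, e.g.\ in \Cref{thm:TkCSSPAProb}); the computation itself only uses that the comparison score is distributed as $\expd{1-\alpha}$ and is independent of the adversary's scores, so it applies verbatim whenever the count is taken against the smallest honest score, whose rate is $1-\alpha$ for every $\beta$.
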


\begin{lemma} \label{thm:PrStoppingTime}
    For the omniscient adversary with stake $\alpha < \omniconst$, $Pr(\fst > 0) = 1$ and $Pr(\fst > 1) = \alpha$. For $T \geq 2$,
    $$Pr(\fst > T) \leq \alpha^2 \cdot \tfrac{2 - 2\alpha + \alpha^2}{1-\alpha+\alpha^2} \cdot \big(\alpha \cdot  \tfrac{2-\alpha}{1-\alpha}\big)^{T-2}\text{.}$$
    
\end{lemma}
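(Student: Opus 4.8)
The plan is to pin down the tail probabilities $q_T := \Pr(\fst > T)$ via a one-step recursion exploiting the self-similar tree structure of the omniscient adversary's problem, and then to show this recursion collapses to a geometric bound. The two base cases are quick: $\Pr(\fst > 0)=1$ since $\fst$ is a round index and rounds start at $1$; and for $\Pr(\fst>1)$, the seed $\seed{1}$ is a fresh $U[0,1]$ draw, so by \Cref{thm:PrPotentialWinner} the adversary has exactly $i^*$ adversarial potential winners with probability $\alpha^{i^*}(1-\alpha)$, while round $1$ is a stopping time precisely when $\advPot(\seed{1})=\emptyset$, i.e.\ $i^*=0$; hence $q_1 = \sum_{i^* \ge 1}\alpha^{i^*}(1-\alpha) = \alpha$.

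The heart of the argument is the recursion. I would fix the reward-maximizing strategy (which, by the principle of optimality, plays optimally within each subtree it might enter) and condition on round $1$ having $i^* = k \ge 1$ adversarial potential winners. The adversary designates the round-$1$ leader among the $k+1$ potential winners (its $k$ small-score wallets or $\knownHonest$), and that leader's credential becomes $\seed{2}$. By the perfect-randomness property of the ideal $\VRF$, the $k+1$ candidate values of $\seed{2}$ are conditionally i.i.d.\ $U[0,1]$, and since a $\VRF$ evaluated at a fresh input is uniform regardless of that input, the $k+1$ precomputed credential subtrees are conditionally i.i.d.\ copies of the problem rooted at a fresh seed; hence their optimal stopping times $R^{(1)},\dots,R^{(k+1)}$ are i.i.d., each distributed as $\fst$. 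Writing $j^*$ for the subtree the adversary enters, $\fst = 1 + R^{(j^*)}$, and since $\{R^{(j^*)}\ge T\} \subseteq \bigcup_j \{R^{(j)} \ge T\}$ for any selection rule $j^*$, for $T\ge 2$ I get $\Pr(\fst > T \mid i^* = k) = \Pr(R^{(j^*)} \ge T) \le 1-(1-q_{T-1})^{k+1}$, using $\Pr(R^{(j)} \ge T)=\Pr(\fst>T-1)=q_{T-1}$. Averaging over $k\ge1$ with weights $\alpha^k(1-\alpha)$ (the $i^*=0$ case contributes $0$ for $T\ge1$) and summing the geometric series gives
\[
q_T \;\le\; \alpha - \frac{\alpha(1-\alpha)(1-q_{T-1})^2}{1-\alpha+\alpha\,q_{T-1}} \;=:\; g(q_{T-1}).
\]
(The same induction in fact bounds $\Pr_\pi(\fst>T)$ for every strategy $\pi$, which is all one needs.)

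To close the recursion I would prove the decisive inequality $g(u) \le \rho\,u$ with $\rho := \tfrac{\alpha(2-\alpha)}{1-\alpha}$, valid for all $u \ge 0$: after clearing the positive denominator $1-\alpha+\alpha u$ and expanding, everything cancels except a leftover $0 \le u^2$. A direct computation also gives $g(\alpha) = \tfrac{\alpha^2(2-2\alpha+\alpha^2)}{1-\alpha+\alpha^2} =: C$ (using $(1-\alpha+\alpha^2)-(1-\alpha)^3 = \alpha(2-2\alpha+\alpha^2)$). Hence $q_2 \le g(q_1) = g(\alpha) = C$, and inductively, for $T \ge 3$, $q_T \le g(q_{T-1}) \le \rho\,q_{T-1} \le \rho\cdot C\rho^{T-3} = C\rho^{T-2}$, which is exactly the claimed bound; the hypothesis $\alpha < \omniconst$ is precisely the condition $\rho<1$ (so that $\omniconst = \tfrac{3-\sqrt5}{2}$ is the threshold $\rho=1$) that makes this a genuine exponential decay.

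The step I expect to be the main obstacle is justifying that the $k+1$ subtrees really are i.i.d.\ copies of the fresh-seed problem: the candidate values of $\seed{2}$ are round-$1$ credentials of wallets selected precisely because they are potential winners, so those credentials are not unconditionally uniform; one must argue that the only thing that matters is that each candidate seed is almost surely a fresh input to every wallet's $\VRF$, so that all round-$2$ and later credentials stay uniform and independent given the round-$1$ information — this is the idealized-$\VRF$ modeling inherited from \citet{FHWY22}. Everything else is the routine geometric-series evaluation and the one-line polynomial inequality $g(u) \le \rho u$.
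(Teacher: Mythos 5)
Your proposal is correct and follows essentially the same route as the paper: the paper first establishes the exact recursion $\prevent{t} = \frac{\alpha(2-\alpha)\prevent{t-1} - \alpha(1-\alpha)\prevent{t-1}^2}{(1-\alpha)+\alpha\prevent{t-1}}$ via the branching structure of the omniscient choice tree (\Cref{thm:Recursive}), computes $\prevent{1}=\alpha$ and $\prevent{2}=\alpha^2\tfrac{2-2\alpha+\alpha^2}{1-\alpha+\alpha^2}$, and then closes the induction with the same contraction $\prevent{t+1} \leq \alpha\tfrac{2-\alpha}{1-\alpha}\,\prevent{t}$ that you prove as $g(u)\le \rho u$. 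The only cosmetic difference is that you re-derive the recursion inline as an inequality (which suffices, and your independence-of-subtrees worry is exactly the idealized branching-process modeling the paper adopts in \Cref{def:OmniChoiceTree}), whereas the paper isolates it as an exact lemma.
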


\section{Estimating the Optimal Adversarial Reward}
We proceed to designing simulations that estimate the expected adversarial reward from playing a strategy $\strategy$ in $\CSSPA$. We find the optimal adversarial strategy in $\CSSPA$ quite complex to describe. We reformulate $\CSSPA$ such that a succinct description of the optimal adversarial strategy becomes possible. Then, we propose a simulation that computes the adversary's optimal reward precisely but requires an infinite run-time. We describe a sequence of modifications to the simulation that trades off run-time for precision to get provable bounds on the adversary's optimal rewards.

\subsection{A Linear Version of $\CSSPA$}
Optimizing the adversarial reward $\Rew{\strategy} = \tfrac{\E[\sum_{t = 1}^{\fst} X_t(\strategy)]}{\E[\fst]}$ depends on maintaining a balance between getting a myopic gain in the reward by winning the election in the current round and a long-term gain through delaying the first stopping time. This inherent trade-off between the short-term and long-term gains of actions in $\CSSPA$ makes both describing the optimal adversarial strategy and simulating it hard. We reformulate $\CSSPA$ through an approach  similar to \citet{SSZ16} that allows the adversary to myopically optimize its reward without having to worry about long-term consequences.

We introduce an entry fee $\lambda$ in $\CSSPA$ that the adversary is charged to participate in each round and consider the total adversarial reward instead of the rate at which the adversary is elected.

\begin{definition}[$\LinearCSSPA$] \label{def:LinCSSPA}
    In $\LinearCSSPA$, the network consists of three players- the adversary with stake $\alpha$, two honest players $\knownHonest$ and $\opaqueHonest$ with stakes $\beta \, (1-\alpha)$ and $(1-\beta) \, (1-\alpha)$ respectively. Prior to round $1$, the adversary learns the values of $\alpha$, $\beta$, the entry fee $\lambda$ and that $\knownHonest$ and $\opaqueHonest$ are honest. For $n \xrightarrow{} \infty$, the adversary distributes its stake into a set $\advAccount$ of $n$ wallets, each containing a stake $\tfrac{\alpha}{n}$. The adversary makes the following decisions in round $t$:
    \begin{enumerate}
        \item The adversary pays an entry fee $\lambda$.
        \item The adversary learns the seed $\seed{t}$ of round $t$.
        \item The adversary computes the credentials $\cred{i}{t}$ for all wallets $i \in \advAccount$. Further, the adversary learns the credentials $\cred{\knownHonest}{t}$. The adversary knows that the credential of player $\opaqueHonest$ is drawn from $\expd{(1-\beta) \, (1-\alpha)}$ but does not learn $\cred{C}{t}$.
        \item For any $r \geq 0$ and $(i_t, i_{t+1}, \dots, i_{t+r}) \in \big( \advAccount \cup \{\knownHonest\} \big) \times \advAccount^r$, the adversary precomputes the credentials $\cred{i_{t+r'}}{t+r'}$ for $1 \leq r' \leq r$ assuming $i_{t+\hat{r}}$ is elected to lead in round $t+\hat{r}$ for all $0 \leq \hat{r} < r'$.
        \item The adversary either remains silent or broadcasts the credential of a wallet $i \in \advAccount$.
        \item The game terminates if either $\knownHonest$ or $\opaqueHonest$ have scores smaller than all wallets $i \in \advAccount$, i.e, a stopping time is reached.
    \end{enumerate}
\end{definition}

\begin{definition}[Reward of a Strategy] \label{def:LinRew}
    For a strategy $\strategy$ played by the adversary in $\LinearCSSPA$, let the Bernoulli random variable $X_t(\strategy)$ be $1$ if the adversary is elected in round $t$ and $0$ otherwise. The adversary earns an expected reward
    $$\LinRew{\strategy} = \E \big[\sum_{t = 1}^{\fst} (X_t(\strategy) - \lambda) \big]$$
\end{definition}

We conclude the discussion by relating the rewards $\Rew{\strategy}$ in $\CSSPA$ and $\LinRew{\strategy}$ in $\LinearCSSPA$.
\begin{theorem} \label{thm:Signlambda}
    For an entry fee $\lambda$ and a strategy $\strategy$, $\LinRew{\strategy} > 0$ (resp. $\LinRew{\strategy} < 0$) if and only if $\lambda < \Rew{\strategy}$ (resp. $\lambda > \Rew{\strategy}$). Further, $\LinRew{\strategy} = 0$ when $\lambda = \Rew{\strategy}$.
\end{theorem}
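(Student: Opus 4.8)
The plan is to reduce the statement to a single algebraic identity relating $\LinRew{\strategy}$, $\Rew{\strategy}$, and $\E[\fst]$, and then simply read off the three sign comparisons. Starting from \Cref{def:LinRew}, I would split the reward using linearity of expectation: the accumulated entry fees over an epoch are $\sum_{t=1}^{\fst}\lambda = \lambda\,\fst$, so
\[
\LinRew{\strategy} \;=\; \E\Big[\sum_{t=1}^{\fst} X_t(\strategy)\Big] \;-\; \lambda\,\E[\fst].
\]
The only thing to check before manipulating the two terms separately is that they are finite, which holds exactly when $\strategy$ has a finite expected stopping time: each $X_t \in \{0,1\}$, so the first term is sandwiched between $0$ and $\E[\fst]$. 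This is precisely the hypothesis under which \Cref{thm:ResetST} is stated, so I would carry out the argument in that regime (and remark on the degenerate case $\E[\fst]=\infty$ at the end).

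Next I would invoke \Cref{thm:ResetST}, which gives $\E[\sum_{t=1}^{\fst} X_t(\strategy)] = \Rew{\strategy}\cdot\E[\fst]$. Substituting into the display above yields the clean identity
\[
\LinRew{\strategy} \;=\; \big(\Rew{\strategy} - \lambda\big)\cdot \E[\fst].
\]
Since a (forced) stopping time is by definition at least round $1$, we have $\fst \geq 1$ almost surely, hence $\E[\fst] \geq 1 > 0$. Therefore the sign of $\LinRew{\strategy}$ is exactly the sign of $\Rew{\strategy} - \lambda$: we get $\LinRew{\strategy} > 0 \iff \lambda < \Rew{\strategy}$, $\LinRew{\strategy} < 0 \iff \lambda > \Rew{\strategy}$, and $\LinRew{\strategy} = 0 \iff \lambda = \Rew{\strategy}$, which is the full statement.

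I do not expect a genuine obstacle here; the ``hard part'' is really just bookkeeping about well-definedness and the boundary behaviour, i.e.\ making sure the split of the expectation is legitimate and that $\E[\fst]$ is bounded away from $0$ (which it is, by $\fst \geq 1$). If one wants the statement to also cover strategies with $\E[\fst] = \infty$, I would handle it separately: for $\lambda \le 0$ the claim is immediate since then $X_t(\strategy) - \lambda \geq -\lambda \geq 0$ termwise and $\fst\ge 1$, while for $\lambda > 0$ a monotone-convergence/Fatou comparison between $\sum_{t=1}^{\fst}(X_t(\strategy)-\lambda)$ and the running averages $\tfrac{1}{T}\sum_{t=1}^T X_t(\strategy)$ appearing in \Cref{def:Rew} shows $\LinRew{\strategy}$ and $\Rew{\strategy}-\lambda$ retain matching signs (with $\LinRew{\strategy}=+\infty$ permitted when $\Rew{\strategy}>\lambda$). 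To keep the exposition consistent with the surrounding development, however, I would restrict attention to finite expected stopping times and lean entirely on \Cref{thm:ResetST}.
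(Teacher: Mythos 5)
Your proposal is correct and follows essentially the same route as the paper: both split $\LinRew{\strategy}$ into $\E[\sum_{t=1}^{\fst}X_t(\strategy)] - \lambda\,\E[\fst]$ and apply \Cref{thm:ResetST} to obtain the identity $\LinRew{\strategy} = (\Rew{\strategy}-\lambda)\cdot\E[\fst]$, from which the sign trichotomy is immediate since $\E[\fst]>0$. Your additional remarks on finiteness of $\E[\fst]$ and the degenerate case are reasonable bookkeeping that the paper leaves implicit, but they do not change the argument.
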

A standard binary search would locate the value of $\lambda$ such that $\LinRew{\strategy} = 0$, in which case, $\Rew{\strategy} = \lambda$. We defer the proof to \Cref{sec:ProofofSignLambda}.

\subsection{The Ideal Simulation} \label{sec:Ideal}

$\LinearCSSPA$ has a recursive structure that we exploit while designing simulations to estimate $\LinRew{\strategy}$ of a strategy $\strategy$. In some round $t$, by broadcasting the credential $\cred{i}{t}$ of a wallet $i \in \advAccount$ and winning the election (or remaining silent and letting $\knownHonest$ with credential $\cred{\knownHonest}{t}$ win), the adversary induces an instance of $\LinearCSSPA$ with an initial seed $\seed{0} = \cred{i}{t}$. If the adversary recursively "knew" the expected future reward $r_i$ that it would get from broadcasting $\cred{i}{t}$ for each adversarial potential winner $i$ and the reward $r_0$ from letting $\knownHonest$ win, the adversary can decide its actions just based on $\vecr = \big( r_i \big)_{i \in \N \cup \{0\}}$ and the scores $\vecc = \big( c_i \big)_{i \in \N \cup \{0\}}$ of the wallets in $\advAccount \cup \{\knownHonest\}$\footnote{This is not entirely true. The adversary can still play strategies based on the credentials and rewards from previous rounds. However, given that the adversary's goal is to optimize the total rewards earned across rounds, such strategies can be safely ignored. As we will see, the optimal strategy can be codified in this language.}. Of course, the adversary always runs the risk of the current round being a stopping time, in which case, the total future rewards earned by the adversary equals zero.

Let $\dis{\strategy}$ be the distribution of rewards the adversary achieves by playing the strategy $\strategy$ in $\LinearCSSPA$. We are interested in estimating the expected reward $\E_{s \sim \dis{\strategy}}[s]$. We do so by constructing the $\CDF$ of $\dis{\strategy}$ in $\addl{\strategy, \dis{\strategy}}$ by sampling from $\dis{\strategy}$ infinitely many times. This can be done by setting up infinitely many `induced-instances' of $\LinearCSSPA$. For each of these induced-instances:
\begin{enumerate}
    \item For each $i \geq 1$, we sample the $i\textsuperscript{th}$ smallest score $c_i$ amongst all adversarial wallets using \Cref{thm:SmallestScore}:
    $$c_1 \xleftarrow{} \expd{\alpha}, c_i \xleftarrow{} c_{i-1} + \expd{\alpha}$$
    \item For each $i \geq 1$, we sample the reward $r_i$ earned from broadcasting the credential of the adversarial wallet with the $i\textsuperscript{th}$ smallest score from the distribution $\dis{\strategy}$.
    \item We compute the adversarial reward in expectation over the reward $r_0$ from letting $\knownHonest$ win, the scores $c_0$ and $c_{-1}$ of $\knownHonest$'s and $\opaqueHonest$'s wallets respectively by simulating the behaviour of $\strategy$ for scores $\vecc = \big( c_i \big)_{i \in \N \cup \{0\}}$ and rewards $\vecr = \big( r_i \big)_{i \in \N \cup \{0\}}$.
\end{enumerate}
Estimating the adversarial reward reduces to finding a fixed-point $\dis{\strategy}$ to $\addl{\strategy, \dis{\strategy}}$.

\begin{mdframed}
$\addl{\strategy, \dis{\strategy}}$:
\begin{enumerate}
    \item For $1 \leq \ell \leq \challengeSample{\samples_t = \infty}$.
    \begin{enumerate}
        \item $\drawadv{\dis{\strategy}}$:
        \begin{itemize}
            \item Sample $\AdvR$: Draw $\challengeTk{\coin = \infty}$ rewards $r_1, r_2, \dots, r_{\coin}$ i.i.d~from $\dis{\strategy}$.
            \item Sample $\AdvC$: Draw $\challengeTk{\coin = \infty}$ scores $c_1, c_2, \dots, c_{\coin}$ of adversarial wallet as follows. Draw $c_1 \xleftarrow{} \expd{\alpha}$ and $c_{i+1} \xleftarrow{} c_i + \expd{\alpha}$ (a fresh sample for each $i$) for $1 \leq i \leq \coin -1$. For convenience, set $c_{\coin + 1} = \infty$.
            \item Return $(\AdvR, \AdvC)$.
        \end{itemize}
        \item $\sample{\strategy, \dis{\strategy}}$:
        Simulate the action of the strategy $\strategy$ in the current round given $\dis{\strategy}$, $\AdvR$ and $\AdvC$. Return the reward $s_{\ell}$ in expectation over the reward $r_0$ from letting $\knownHonest$ win, $\knownHonest$'s score $c_0$ and $\opaqueHonest$'s score $c_{-1}$.
    \end{enumerate}
    \item Return $\dis{\strategy}$ to be the uniform distribution over $\{s_1, s_2, \dots, s_{\samples_t}\}$.
\end{enumerate}

$\simulate{\strategy}$:
\begin{enumerate}
    \item \challengeConverge{Compute a fixed-point $\dis{\strategy}$ to $\addl{\strategy, \dis{\strategy}}$.}
    \item Return $\E_{s \sim \dis{\strategy}}[s]$.
\end{enumerate}
\end{mdframed}

See \Cref{sec:Tables} for a summary of the notations and functions used in the simulations.

\subsection{The Optimal Solution}
We describe the optimal strategy $\opt$ in the recursive language introduced in \Cref{sec:Ideal}. Let $\optdis = \dis{\opt}$ be the distribution of rewards from playing $\opt$. At a start of a round $t$, the adversary learns the scores $\vecc$ of wallets in $\advAccount \cup \{\knownHonest\}$ and rewards $\vecr$ from remaining silent and from broadcasting the credential of each wallet in $\advAccount$. We use $\AdvC$ and $\AdvR$ to denote the scores $\big( c_i \big)_{i \in \N}$ and rewards $\big( r_i \big)_{i \in \N}$ associated with the wallets in $\advAccount$. Re-index the adversary's wallets (and therefore, $\AdvC$ and $\AdvR$) in increasing order of its scores. Let $i^*(\vecc) := |\{i > 0 | c_i \leq c_0\}|$ be the number of adversarial potential winners.

We compare the expected future rewards of all possible actions the adversary can take at the start of round $t$.
\begin{enumerate}
    \item Suppose the adversary abstains from broadcasting. It earns a reward $r_0$ unless a stopping time is reached, which happens either when $i^*(\vecc) = 0$ or when $\opaqueHonest$ has a score $c_{-1} \sim \expd{(1-\beta) \, (1-\alpha)}$ smaller than the score $c_0$ of $\knownHonest$. The probability of $\opaqueHonest$ having a larger score than $c_0$ equals $e^{-c_0 \, (1-\beta) \, (1-\alpha)}$. Thus, the expected reward from remaining silent equals $e^{-c_0 \, (1-\beta) \, (1-\alpha)}r_0 \cdot \mathbbm{1}(i^*(\vecc) \neq 0)$. We define
    $$h(c_0, r_0) := e^{-c_0 \, (1-\beta) \, (1-\alpha)}r_0$$
    \item From broadcasting the credential of an adversarial potential winner $i$ with score $c_i$ and future reward $r_i$, the adversary earns a reward $1$ from getting elected in the current round and thus, a total reward $(1+r_i)$. This, once again, is subject to the current round not being a stopping time. The current round is not a forced stopping time if $\opaqueHonest$ has a score larger than $c_i$, which happens with probability $e^{c_i \, (1-\beta) \, (1-\alpha)}$, and if $i^*(\vecc) \neq 0$. The adversary has a potential winner $i$ and $i^*(\vecc)$ is at least $1$ as a consequence. Hence, the expected reward from broadcasting the credential of $i$ equals $e^{c_i \, (1-\beta) \, (1-\alpha)} \, (1+r_i)$. The adversary can broadcast the credential of the wallet that maximizes its reward to earn
    $$g(c_0, \AdvC, \AdvR) = \max_{i \leq i^*(\vecc)} \{e^{-c_i \, (1-\beta) \, (1-\alpha)}(1 + r_i)\}$$
\end{enumerate}
The adversary also pays an entry fee $\lambda$. Between remaining silent and broadcasting its best credential, the adversary wins
\begin{equation*}
    \notag
    \begin{split}
        \max \{h(c_0, r_0) &\mathbbm{1}(i^*(\vecc) \neq 0), g(c_0, \AdvC, \AdvR)\} - \lambda \\
        &= \max \{h(c_0, r_0) \mathbbm{1}(i^*(\vecc) \neq 0), g(c_0, \AdvC, \AdvR) \mathbbm{1}(i^*(\vecc) \neq 0)\} - \lambda \\
        &= \max \{h(c_0, r_0), g(c_0, \AdvC, \AdvR)\} \mathbbm{1}(i^*(\vecc) \neq 0) - \lambda
    \end{split}
\end{equation*}

While using the future rewards from round $t$ to compute the optimal action to take in round $t-1$, the adversary will not know the values $c_0$ and $r_0$ since $\knownHonest$ does not broadcast $\cred{\knownHonest}{t}$ until the start of round $t$. The adversary can only compute the future rewards from playing an action in expectation over $c_0$ and $r_0$. With this in mind, we construct $\optdis$ to be the distribution of (future) rewards in expectation over $c_0$ and $r_0$. Given $\AdvC$ and $\AdvR$, we implement a sampling procedure $\sample{\opt, \optdis}$ by setting the $\ell^{\text{th}}$ sample $s_{\ell}$ to be
$$\E_{c_0 \sim \expd{(1-\beta) \, (1-\alpha)}, r_0 \sim \optdis}[\max \{h(c_0, r_0), g(c_0, \AdvC, \AdvR)\} \mathbbm{1}(i^*(\vec{c}) \neq 0)] - \lambda$$

Finding a fixed-point $\optdis$ for $\addl{\opt, \optdis}$ seems intractable and we resort to heuristic methods instead. One natural heuristic would be to begin at the point-mass distribution $\dist{0}$ at $0$ and iterate infinitely many times to get the sequence $\big( \dist{t} \big)_{t \in \N \cup \{0\}}$ of distributions satisfying $\dist{t+1} = \addl{\opt, \dist{t}}$. We end this section by summarizing the challenges in executing the above heuristic.

\begin{mdframed}
$\addl{\opt, \dist{t-1}}$:
\begin{enumerate}
    \item For $1 \leq \ell \leq \challengeSample{\samples_t = \infty}$.
    \begin{enumerate}
        \item $\drawadv{\dist{t-1}}$:
        \begin{itemize}
            \item Sample $\AdvR$: Draw $\challengeTk{\coin = \infty}$ rewards $r_1, r_2, \dots, r_{\coin}$ i.i.d~from $\dist{t-1}$.
            \item Sample $\AdvC$: Draw $\challengeTk{\coin = \infty}$ scores $c_1, c_2, \dots, c_{\coin}$ of adversarial wallet as follows. Draw $c_1 \xleftarrow{} \expd{\alpha}$ and $c_{i+1} \xleftarrow{} c_i + \expd{\alpha}$ (a fresh sample for each $i$) for $1 \leq i \leq \coin -1$. For convenience, set $c_{\coin + 1} = \infty$.
            \item Return $(\AdvR, \AdvC)$.
        \end{itemize}
        \item $\sample{\opt, \dist{t-1}}$: \\
        Return sample $s_{\ell}$ equal to
        \challengeComp{
        \begin{equation}
        \notag
            \begin{split}
                &\E_{c_0 \sim \expd{\beta \, (1-\alpha)}, r_0 \sim \dist{t-1}}\big[\max_{0 \leq i \leq i^*(\vecc)} \{e^{-c_i\cdot (1-\beta)\cdot(1-\alpha)} \, (r_i+ \mathbbm{1}(i \neq 0)) \} \cdot \mathbbm{1}(i^*(\vecc) \neq 0) \big] -\lambda
            \end{split}
        \end{equation}}
    \end{enumerate}
    \item Return $\dist{t}$ to be the uniform distribution over $\{s_1, s_2, \dots, s_{\samples_t}\}$.
\end{enumerate}

$\simulate{\opt}$:
\begin{enumerate}
    \item Initialize $\dist{0}$ to be the point-mass distribution at $0$.
    \item For $1 \leq t \leq \challengeTk{\round = \infty}$:
    \begin{enumerate}
        \item \challengeConverge{$\dist{t} = \addl{\opt, \dist{t-1}}$.}
    \end{enumerate}
    \item Return $\E_{s \sim \dist{\round}}[s]$.
\end{enumerate}
\end{mdframed}


\begin{enumerate}
    \item \challengeConverge{The iterated-point heuristic does not guarantee convergence. Even if the iteration converges, there could be a multitude of fixed-points and the iteration could converge to a distribution that is not the optimal reward.}
    \item \challengeTk{We run $\addl{\opt, \cdot}$ $\round = \infty$ many times. In each execution of $\addl{\opt, \cdot}$, the adversary can pick one of $\coin = \infty$ actions-- one each for broadcasting credentials of  wallets $i \in \advAccount$ and one for staying silent. $\opt$ compares the rewards of each of these actions before making a decision.}
    \item \challengeSample{Given a distribution $\D_t$, we compute $\addl{\opt, \D_t}$ by constructing $\samples_t = \infty$ samples. As we will see in \Cref{sec:Samples}, the simulation is not even an unbiased estimator of the reward $\E_{s \sim \dist{\round}}[s]$ once we constrain $\samples_t$ to be finite.}
    \item \challengeComp{The sample $s_{\ell}$ is constructed by computing the reward in expectation over $\knownHonest$'s score $c_0$ and reward $r_0$ from remaining silent. This involves calculating a double integral. The integrals can be calculated in finite-time by approximating them by a Riemann sum. However, even a polynomial run-time would not be practical due to the sheer number of samples we construct. We require a linear run-time.}
\end{enumerate}

\subsection{Moving from Ideal to Practical}

\subsubsection{\challengeConverge{Convergence of the Iterated-Point Heuristic}}
We discuss the natural variant $\FinLinCSSPA{\round}$ that terminates after $\round$ rounds if a stopping time has not been reached yet. We will argue that the distribution of optimal rewards $\big( \dist{\round} \big)_{\round \in \N \cup \{0\}}$ satisfies the same recursion as the iterated-point heuristic on $\addl{\opt, \cdot}$ and converges to $\optdis$ as $\round \xrightarrow{} \infty$.

When $\round = 0$, the game terminates even before it starts and the adversary gets a total reward zero. Thus, $\dist{0}$ is the point-mass on zero, identical to the initial point of the iterated-point heuristic. $t$ rounds before termination, by broadcasting the credential $\cred{i}{-t}$ of a wallet $i \in \advAccount$ ($i = \knownHonest$ if the adversary remains silent), the adversary induces an instance of $\FinLinCSSPA{t-1}$ with an initial seed $\seed{0} = \cred{i}{-t}$. Thus, if the adversary recursively knew the rewards $r_i \sim \dist{t-1}$ from each potential winner $i$, the adversary would broadcast the credential (or stay silent) that would maximize its reward from the last $t-1$ rounds. This is the same operation performed by $\addl{\opt, \cdot}$ on $\dist{t-1}$. By induction, the distribution of rewards $\dist{t}$ equals the reward distribution $\addl{\opt, \dist{t-1}}$ output by the $t\textsuperscript{th}$ iteration of the iterated-point method.

As $\round \xrightarrow{} \infty$, the distribution of rewards $\round$ rounds before termination and $\round - 1$ rounds before termination are identical. This is equivalent to claiming $\dist{\round}$ as $\round \xrightarrow{} \infty$ approaches the reward distribution $\optdis$. Thus, the iterated-point method converges and converges to the correct distribution of rewards.

\subsubsection{\challengeTk{Infinite Rounds and Credentials}}

Let $\FinLinCSSPA{\round}$ be the variant of $\LinearCSSPA$ terminating after round $\round$. $\simulate{\opt}$ loops infinitely to construct the distribution of adversarial rewards in $\FinLinCSSPA{\round}$ as $\round \xrightarrow{} \infty$. Further, for each round of the simulation, $\drawadv{\dist{t}}$ samples a score and a reward for each of the infinite wallets the adversary operates. We revisit $\CSSPA$ and argue that terminating $\CSSPA$ after $\round$ rounds and constraining the adversary to broadcasting the credentials of a wallet only if it is amongst the $k$ smallest scores in $\advAccount$ does not cause a significant drop in the adversary's optimal reward. Once established, we can estimate the reward from playing $\optrew{\round, \coin}$, the optimal strategy in $\CSSPA$ that terminates after $\round$ rounds and never uses a score outside the $k$ smallest scores in $\advAccount$, instead of estimating the reward from $\opt$.

We abuse notation to describe the optimal strategy of the adversary in $\CSSPA$ as $\opt$ and its reward distribution by $\optdis$. We say the adversary is $\coin$-scored if the adversary is constrained to either stay silent or broadcast a credential amongst its wallets with the $k$ smallest scores.

\begin{theorem} \label{thm:(Tk)CSSPA}
    For $\alpha \leq 0.29$ and a $k$-scored adversary, the difference in the expected rewards between playing $\opt$ and $\optrew{\round, \coin}$ in $\CSSPA$ satisfies
    $$0 \leq |\Rew{\opt} -\Rew{\optrew{\round, \coin}}| \leq \alpha^2 \cdot \tfrac{2 - 2\alpha + \alpha^2}{1 - \alpha + \alpha^2} \cdot [\alpha \cdot \tfrac{2-\alpha}{1-\alpha}]^{\round-2} + \alpha^\coin$$
\end{theorem}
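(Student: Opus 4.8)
The plan is to bound the two approximations built into $\optrew{\round,\coin}$ --- truncating the horizon to $\round$ rounds and forbidding credentials ranked below the $\coin$ smallest --- separately, and in each case to reduce the loss to a tail bound already established for the omniscient adversary. The left inequality is free: ``terminate after $\round$ rounds'' just means the strategy reverts to honest play (broadcasting only its smallest-score credential) once an epoch reaches round $\round$, and ``$\coin$-scored'' only shrinks the action set, so $\optrew{\round,\coin}$ is the optimum over a \emph{subset} of all $\CSSPA$ strategies and hence $\Rew{\opt}\ge\Rew{\optrew{\round,\coin}}$; thus $|\Rew{\opt}-\Rew{\optrew{\round,\coin}}|=\Rew{\opt}-\Rew{\optrew{\round,\coin}}\ge 0$. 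For the upper bound I would interpolate through $\optrew{\infty,\coin}$, the optimal $\coin$-scored strategy with no horizon cap:
$$\Rew{\opt}-\Rew{\optrew{\round,\coin}}=\big(\Rew{\opt}-\Rew{\optrew{\infty,\coin}}\big)+\big(\Rew{\optrew{\infty,\coin}}-\Rew{\optrew{\round,\coin}}\big),$$
both summands nonnegative by the same monotonicity, and bound the first by $\alpha^{\coin}$ and the second by $\alpha^2\cdot\tfrac{2-2\alpha+\alpha^2}{1-\alpha+\alpha^2}\cdot[\alpha\cdot\tfrac{2-\alpha}{1-\alpha}]^{\round-2}$.

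For the $\coin$-scoring error, since $\optrew{\infty,\coin}$ is optimal among $\coin$-scored strategies it suffices to lower bound the reward of the concrete $\coin$-scored strategy $\bar\strategy$ that runs $\opt$ but, the first time within an epoch (between consecutive stopping times) that $\opt$ would broadcast a credential of rank exceeding $\coin$, reverts to honest play for the remainder of that epoch. Coupling the two executions, $\opt$ and $\bar\strategy$ win exactly the same rounds and reach the same stopping time unless $\opt$ ever calls for a rank-$(>\coin)$ credential before that stopping time; the long-run frequency of rounds on which they can differ is then at most the per-epoch probability of such a call. That probability is at most $\alpha^{\coin}$: a rank-$(>\coin)$ credential can only be desired in a round with more than $\coin$ adversarial potential winners, the first round of an epoch carries an unbiased seed so \Cref{thm:PrPotentialWinner} gives $\Pr(\ge\coin\text{ potential winners})=\sum_{j\ge\coin}\alpha^j(1-\alpha)=\alpha^{\coin}$, and for the biased seeds of later rounds --- which the adversary has selected from its own rank-$\le\coin$ credentials, while its credentials for the new round are nonetheless fresh i.i.d.\ draws by the $\VRF$ property --- one argues (here is where $\alpha\le 0.29$ is used) that the potential-winner count stays geometrically subcritical with the same $\alpha$-rate, so the per-epoch probability remains $O(\alpha^{\coin})$ and the slack in the stated bound absorbs the implied constant.

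For the truncation error, by the same reasoning it suffices to lower bound the reward of the $(\round,\coin)$-restricted strategy obtained from $\optrew{\infty,\coin}$ by reverting to honest play once an epoch reaches round $\round$. Coupled with $\optrew{\infty,\coin}$, it behaves identically on every epoch with $\fst\le\round$, so the reward loss is carried by the rounds that lie at epoch-position beyond $\round$; the long-run frequency of those rounds is $\E[(\fst-\round)^+]/\E[\fst]$, which I would bound by $\Pr(\fst>\round)$ using that for $\alpha\le 0.29$ the mean residual epoch length $\E[\fst-\round\mid\fst>\round]$ is at most $\E[\fst]$ (the bound on $\E[\fst]$ from \Cref{thm:OmniSummary} combined with the geometric tail), and the won ones contribute no more. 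Finally $\Pr(\fst>\round)$ for the real $\coin$-scored adversary is at most $\Pr(\fst>\round)$ for the omniscient adversary: lowering $\beta$ only adds extra chances for the opaque honest player $\opaqueHonest$ to seize the smallest score (so the real stopping time is stochastically dominated by the $\beta=1$ one), and the omniscient adversary --- which sees and precomputes strictly more --- can replay any such strategy and realize the same or a later stopping time, so \Cref{thm:PrStoppingTime} supplies the displayed geometric expression, valid since $\alpha\le 0.29<\omniconst$.

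The main obstacle is precisely the bookkeeping hidden in both coupling steps: $\Rew{\cdot}$ is the long-run won-round frequency, equivalently the ratio $\E[\sum_{t=1}^{\fst}X_t]/\E[\fst]$ of \Cref{thm:ResetST}, not a plain expectation, so a ``bad event of probability $p$'' does not automatically cost only $p$ in reward --- a naive coupling picks up a size-biasing factor such as $\E[\fst\mid\text{bad}]$ or $1/(1-\rho)$ with $\rho=\alpha\tfrac{2-\alpha}{1-\alpha}$, which would inflate the truncation term beyond the clean stated form. Making the two error terms land exactly under $\alpha^2\cdot\tfrac{2-2\alpha+\alpha^2}{1-\alpha+\alpha^2}\cdot[\alpha\cdot\tfrac{2-\alpha}{1-\alpha}]^{\round-2}+\alpha^{\coin}$ is where the restriction $\alpha\le 0.29$ earns its keep: it is exactly the regime in which the residual-epoch-length and potential-winner recursions stay below the relevant unconditional quantities with room to spare, so the lossy constants collapse. (Carrying the whole comparison inside $\LinearCSSPA$ and invoking \Cref{thm:Signlambda} to translate the reward gap into a gap between break-even entry fees --- where the objective $\E[\sum_{t=1}^{\fst}(X_t-\lambda)]$ is additive --- is an alternative route, but it does not by itself remove the size-biasing, so either way the crux is controlling the mean residual epoch length.)
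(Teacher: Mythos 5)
Your overall architecture matches the paper's: decompose the loss into a horizon-truncation term and a $\coin$-scoring term (the paper interpolates through $\optrew{\round,\infty}$ rather than your $\optrew{\infty,\coin}$, an immaterial reordering), bound the truncation term by $\Pr(\fst>\round)$ and import the omniscient tail bound of \Cref{thm:PrStoppingTime}. But there is a genuine gap in your $\coin$-scoring step. The quantity you need is the probability that the adversary \emph{ever}, at any round of an epoch, faces more than $\coin$ adversarial potential winners before the first stopping time, and the theorem's bound requires this to be at most exactly $\alpha^{\coin}$ -- the single unbiased-seed round already contributes $\alpha^{\coin}$ via \Cref{thm:PrPotentialWinner}, so there is no room left for later rounds to add a constant factor. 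Your proposal handles the later (biased-seed) rounds with the assertion that the count ``stays geometrically subcritical with the same $\alpha$-rate, so the per-epoch probability remains $O(\alpha^{\coin})$ and the slack in the stated bound absorbs the implied constant.'' That last clause fails: the other term in the bound decays geometrically in $\round$ and can be made arbitrarily smaller than $(C-1)\alpha^{\coin}$, so it cannot absorb any constant $C>1$. The paper closes exactly this hole with \Cref{thm:TkCSSPAProb}: it models an omniscient adversary whose sole goal is to reach a round with more than $\coin$ potential winners, derives the fixed-point equation $\prindevent=(1-\alpha)+\sum_{i^*=1}^{\coin}(1-\alpha)\alpha^{i^*}\prindevent^{i^*+1}$, and proves via the root-location argument of \Cref{thm:RootLocate} that the relevant root satisfies $\alpha\prindevent\geq\alpha-\alpha^{\coin+1}$, i.e.\ $\prindevent\geq 1-\alpha^{\coin}$; this is precisely where $\alpha\leq 0.29$ is used, not in a generic subcriticality statement. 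Without this lemma (or an equivalent), your $\alpha^{\coin}$ term is unproven.

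Secondarily, the ``size-biasing'' obstacle you flag at the end is left unresolved in your write-up, whereas the paper's accounting simply avoids it: in both \Cref{thm:T-CSSPA} and \Cref{thm:TkCSSPA} it conditions on the bad event ($\fst>\round$, resp.\ seeing more than $\coin$ potential winners), compares against a strategy that coincides with the unrestricted one on the good event and terminates on the bad one, and bounds the conditional reward difference crudely by $1$ (rewards lie in $[0,1]$), so the loss is at most the bad-event probability -- no mean-residual-epoch-length or $\E[\fst\mid\text{bad}]$ estimates are needed. Adopting that bookkeeping would remove the part of your argument you yourself identify as the crux, but the missing whole-epoch potential-winner bound remains the essential gap.
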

\noindent We defer the proof to \Cref{sec:Proofof(Tk)CSSPA}.

We estimate the $\coin$-scored adversary's optimal reward in $\FinLinCSSPA{\round}$ and binary search over $\lambda$ to estimate the adversarial reward $\Rew{\optTk}$ in $\CSSPA$. We can then upper bound and lower bound the optimal reward $\Rew{\opt}$ by
\begin{equation}
    \notag
    \Rew{\optTk} \leq \Rew{\opt} \leq \Rew{\optTk} + \alpha^2 \cdot \tfrac{2 - 2\alpha + \alpha^2}{1 - \alpha + \alpha^2} \cdot [\alpha \cdot \tfrac{2-\alpha}{1-\alpha}]^{\round-2} + \alpha^\coin
\end{equation}

We abuse notation as usual and call the $\coin$-scored adversary's optimal strategy in $\FinLinCSSPA{\round}$ as $\optTk$. Let $\dist{\round, \coin}$ be our estimate of the distribution of rewards from playing $\optTk$.  We modify the simulation to terminate after $\round$ rounds and bake the $\coin$-scored adversary into $\addl{\optTk, \cdot}$.

\begin{mdframed}
$\kaddl{\dist{t-1, k}}$:
\begin{enumerate}
    \item For $1 \leq \ell \leq \challengeSample{\samples_t = \infty}$.
    \begin{enumerate}
        \item $\drawadv{\dist{t-1, k}}$:
        \begin{itemize}
            \item Sample $\AdvR$: Draw $\coin$ rewards $r_1, r_2, \dots, r_{\coin}$ i.i.d~from $\dist{t-1}$.
            \item Sample $\AdvC$: Draw $\coin$ scores $c_1, c_2, \dots, c_{\coin}$ of adversarial wallet as follows. Draw $c_1 \xleftarrow{} \expd{\alpha}$ and $c_{i+1} \xleftarrow{} c_i + \expd{\alpha}$ (a fresh sample for each $i$) for $1 \leq i \leq \coin -1$. For convenience, set $c_{\coin + 1} = \infty$.
            \item Return $(\AdvR, \AdvC)$.
        \end{itemize}
        \item $\sample{\optTk, \dist{t-1, k}}$: Return sample $s_{\ell}$
        \end{enumerate}
    \item Return $\estdist{t, \coin}$ to be the uniform distribution over $\{s_1, s_2, \dots, s_{\samples_t}\}$.
\end{enumerate}

$\tsimulate$:
\begin{enumerate}
    \item Initialize $\dist{0, \coin}$ to be the point-mass distribution at $0$.
    \item For $1 \leq t \leq \round$:
    \begin{enumerate}
        \item $\dist{t, \coin} = \kaddl{\dist{t-1, k}}$.
    \end{enumerate}
    \item Return $\E_{s \sim \dist{\round, \coin}}[s]$.
\end{enumerate}

\end{mdframed}

\subsubsection{\challengeSample{Constructing Infinitely many Samples for $\kaddl{\cdot}$}} \label{sec:Samples}

We address the infinite run-time for $\kaddl{\cdot}$ from constructing infinitely many samples to perfectly describe the $\CDF$ of its output. For an input distribution $\D_0$, we want to approximate the sequence of distributions $\big(\D_t\big)_{0 \leq t \leq \round}$ such that $\D_{t} := \kaddl{\D_{t-1}}$ while constructing only finitely many samples for each of them. In \Cref{sec:McD}, we discuss the challenges from using the most natural technique to bound the error from estimation in our simulations-- Chernoff bounds or McDiarmid's inequality followed by a union bound. Even more importantly, we also find that the estimator that arises from constructing finite number of samples might not even be unbiased (\Cref{sec:Bias}).

We tackle the above challenges by maintaining two distributions, $\UBD{t}$ that dominates $\D_t$  and $\LBD{t}$ that is dominated by $\D_t$. We sketch our method for constructing an empirical distribution that is dominated by the true distribution. Suppose for a sufficiently large number of samples, we can guarantee that, for all values $r$, the quantile $\tilde{q}$ in the empirical distribution constructed by sampling $\samples$ times and the quantile $\q$ in the true distribution satisfy $\tilde{q} \in [\q - \delta, \q + \delta]$. Dropping the $\delta$ smallest (strongest) quantiles and replacing them with $\delta \samples$ samples of the infimum of the true distribution would give us a new estimated distribution that is dominated by the true distribution. We call this process deflation. We will compute $\LBD{t}$ by first computing $\kaddl{\LBD{t-1}}$ and then deflating the outcome by a suitable parameter $\delta$. By a straightforward induction, $\LBD{t}$ is dominated by $\D_t$.

We can construct $\UBD{t}$ from $\UBD{t-1}$ through an analogous inflation procedure. However, the error in the estimated reward due to inflation is much larger than the error due to deflation. An inflated reward with a quantile $\Tilde{\q} \in [0, \delta]$ is much more likely to be chosen by an adversary, since the adversary picks its optimal future rewards). This leaves a bigger impact on the estimated reward and influences the rewards in successive rounds too. This differs from deflate since a deflated reward with a quantile $\tilde{q} \in [1-\delta, 1]$ is very likely to be ignored by the adversary since the reward from a different wallet is likely to be higher. To mitigate the strong credentials from drifting the estimated reward far way from $\E_{s \sim \D_t}[s]$, we perform a more nuanced inflation procedure.

\begin{mdframed}
    $\Defl{\chernoff, \D}:$ Given an input $D$ drawn uniformly from $\samples$ samples,
    \begin{enumerate}
        \item Delete the largest $\samples \cdot \sqrt{\tfrac{\ln \chernoff^{-1}}{2\samples}}$ samples from $D$
        \item Append $\samples \cdot \sqrt{\tfrac{\ln \chernoff^{-1}}{2\samples}}$ copies of $-\lambda$ to $\D$
    \end{enumerate}
    $\Infl{\chernoff, \strat, \D}$: Given an input $D$ drawn uniformly from $\big( s_{\ell} \big)_{1 \leq \ell \leq \samples}$ (in descending order),
    \begin{enumerate}
        \item Delete the smallest $\samples \cdot \sqrt{\tfrac{\ln \chernoff^{-1}}{2\samples}}$ samples from $\D$
        \item Append $\strat \samples$ copies of $t\,(1-\lambda)$ to $\D$
        \item For $1 \leq \ell < \frac{\samples}{\strat \, \samples} \cdot \sqrt{\tfrac{\ln \chernoff^{-1}}{2 \samples}}$:
        \begin{itemize}
            \item Append $\strat \samples$ copies of $s_{\ell}$
        \end{itemize}
    \end{enumerate}
\end{mdframed}

\begin{theorem} \label{thm:SampleSummary}
    Let $\tsimulateextraparameters{\estdist{t - 1, \coin}}$ output an upper bound $\estUBD$ and a lower bound $\estLBD$. Then,
    \begin{enumerate}
        \item With probability at least $1- \round \, \Big(\chernoff +  \tfrac{e^{-\strat \samples}}{\strat} \, \sqrt{\tfrac{\ln \chernoff^{-1}}{2\samples}} \Big)$, $\E_{s \sim \estUBD}[s] \geq \E_{s \sim \dist{\round, \coin}}[s]$.
        \item With probability at least $1 - \round \, \chernoff$, $\E_{s \sim \estLBD}[s] \leq \E_{s \sim \dist{\round, \coin}}[s]$.
    \end{enumerate}
\end{theorem}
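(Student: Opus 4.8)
The plan is to run, alongside the ideal recursion $\dist{t,\coin}=\kaddl{\dist{t-1,\coin}}$ (which uses $\samples_t=\infty$), the two finite-sample recursions $\LBD{t,\coin}=\Defl{\chernoff,\kaddl{\LBD{t-1,\coin}}}$ and $\UBD{t,\coin}=\Infl{\chernoff,\strat,\kaddl{\UBD{t-1,\coin}}}$, where now each internal call to $\kaddl{\cdot}$ draws only $\samples$ i.i.d.\ samples, and to prove by induction on $t$ that, on a good event, $\LBD{t,\coin}$ is first-order stochastically dominated by $\dist{t,\coin}$ while $\UBD{t,\coin}$ first-order stochastically dominates $\dist{t,\coin}$. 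The base case $t=0$ is trivial since all three are the point mass at $0$. Since first-order stochastic dominance implies the corresponding inequality of means and all these distributions have bounded support, the two claims about $\E_{s\sim\estUBD}[s]$ and $\E_{s\sim\estLBD}[s]$ follow from the $t=\round$ case, with the stated failure probabilities coming from a union bound over the $\round$ rounds.

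Two structural facts drive the induction. (i) The \emph{ideal} operator $\kaddl{\cdot}$, viewed as a map on distributions, is monotone for stochastic dominance: the per-round sample $s_\ell$ computed inside $\kaddl{\cdot}$ is a coordinatewise non-decreasing function of the reward vector $(r_0,\dots,r_\coin)$ --- raising any $r_i$ only raises $e^{-c_i(1-\beta)(1-\alpha)}(r_i+\mathbbm 1(i\neq 0))$, hence the inner $\max$, hence the outer expectation over $c_0,r_0$ --- while the scores $\vecc$, the count $i^*(\vecc)$, and the indicator $\mathbbm 1(i^*(\vecc)\neq 0)$ are generated independently of the rewards; coupling the exponential draws and the reward quantiles then gives $F\preceq G\Rightarrow\kaddl{F}\preceq\kaddl{G}$. (ii) The one-sided DKW/Massart inequality: for $\samples$ i.i.d.\ draws from a CDF $F$, setting $\delta:=\sqrt{\ln\chernoff^{-1}/(2\samples)}$, the empirical CDF $\hat F$ satisfies $\sup_x(F(x)-\hat F(x))\le\delta$ with probability $\ge 1-\chernoff$, and $\sup_x(\hat F(x)-F(x))\le\delta$ with probability $\ge 1-\chernoff$.

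The lower bound closes quickly. Assume $\LBD{t-1,\coin}\preceq\dist{t-1,\coin}$. On the good DKW event for round $t$, the empirical draw $\hat{\D}$ of $\kaddl{\LBD{t-1,\coin}}$ has $\hat F(x)\ge F(x)-\delta$ everywhere, where $F$ is the CDF of the ideal $\kaddl{\LBD{t-1,\coin}}$. Deflation deletes the top $\delta\samples$ samples and inserts $\delta\samples$ copies of the support infimum $-\lambda$, keeping the sample count at $\samples$, so for every $x\ge-\lambda$ the new CDF obeys $\tilde F(x)=\min\{1,\hat F(x)+\delta\}\ge F(x)$, and $\tilde F(x)\ge 0=F(x)$ for $x<-\lambda$; hence $\LBD{t,\coin}\preceq\kaddl{\LBD{t-1,\coin}}$. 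Combined with monotonicity and the inductive hypothesis, $\LBD{t,\coin}\preceq\kaddl{\dist{t-1,\coin}}=\dist{t,\coin}$. A union bound over the $\round$ rounds gives failure probability $\round\chernoff$, which is part (2).

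The upper bound is the crux, and explains the extra term $\tfrac{e^{-\strat\samples}}{\strat}\sqrt{\ln\chernoff^{-1}/(2\samples)}$. Deleting the bottom $\delta\samples$ empirical samples and re-inserting $\delta\samples$ copies of the supremum $t(1-\lambda)$ would, by the second DKW bound, already yield a distribution dominating $\kaddl{\UBD{t-1,\coin}}$; but its mean overshoots badly, since a reward in the top $\delta$-quantile is precisely what the outer $\max$ in $\kaddl{\cdot}$ selects, so the error compounds across the $\round$ iterations. The refined Inflate instead inserts only $\strat\samples$ copies of the supremum and boosts the top of the empirical distribution more gently, appending $\strat\samples$ extra copies of each of the $\approx\delta/\strat$ largest empirical samples $s_1>s_2>\cdots$; this preserves just enough top mass to still dominate after the renormalization while keeping the inflated mean close to $\E_{s\sim\dist{t,\coin}}[s]$. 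Proving that this refined output still dominates $\kaddl{\UBD{t-1,\coin}}$ requires, beyond the uniform DKW bound, multiplicative control of the empirical count in each of the $\approx\delta/\strat$ top quantile blocks of true width $\strat$: such a count is $\mathrm{Binomial}(\samples,\strat)$, within a constant factor of $\strat\samples$ except with probability $e^{-\Omega(\strat\samples)}$, and a union bound over the $\delta/\strat$ blocks gives the extra per-round failure probability $\tfrac{\delta}{\strat}\,e^{-\Omega(\strat\samples)}=\tfrac{e^{-\Omega(\strat\samples)}}{\strat}\sqrt{\ln\chernoff^{-1}/(2\samples)}$. On the intersection of this event with the DKW event, a block-by-block comparison of CDFs gives $\UBD{t,\coin}\succeq\kaddl{\UBD{t-1,\coin}}\succeq\dist{t,\coin}$, and a union bound over the $\round$ rounds yields part (1). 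The main obstacle is exactly this refined-inflation argument: choosing the block decomposition and the Chernoff parameters so that the re-weighted empirical top quantile provably sits above $F$, while the bookkeeping of deletions, appended suprema, appended duplicates, renormalization, and atoms of $\kaddl{\UBD{t-1,\coin}}$ stays consistent; everything else is a coupling plus union bounds.
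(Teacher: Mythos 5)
Your proposal takes essentially the same route as the paper: DKW-based deflation for the lower bound, the block-wise inflation argument for the upper bound, and an induction over the $\round$ rounds that combines per-round dominance of the inflated/deflated empirical estimate with monotonicity of $\kaddl{\cdot}$ under first-order stochastic dominance (which the paper leaves implicit as ``a straightforward induction'' and you rightly spell out via a coupling), finished by a union bound. Two points fall short of the paper's proof. First, you assert that $-\lambda$ is the infimum of the reward distributions; this is precisely the step the paper must prove for part (2), since a priori the adversary could pay the entry fee every round without ever being elected and end up as low as $-t\lambda$, and if the true distribution placed mass below $-\lambda$ then replacing the top quantiles by $-\lambda$ in Deflate would not yield a dominated distribution. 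The paper closes this by exhibiting a strategy (keep broadcasting some adversarial potential winner until the stopping time) guaranteeing reward at least $-\lambda$, so the optimal reward distributions are indeed supported on $[-\lambda,\, t(1-\lambda)]$. Second, for Inflate the event you invoke --- multiplicative control of the $\mathrm{Binomial}(\samples,\strat)$ count in each of the top $\tfrac{1}{\strat}\sqrt{\ln\chernoff^{-1}/(2\samples)}$ quantile blocks --- is both stronger than necessary and only yields $e^{-\Omega(\strat\samples)}$ per block, which does not recover the stated constant. All that is needed is that each block contains at least one sample, which fails with probability $(1-\strat)^{\samples}\le e^{-\strat\samples}$ per block: then the $m$\textsuperscript{th} largest empirical sample has true quantile at most $m\strat$, so appending $\strat\samples$ copies of it covers the quantile block $(m\strat,(m+1)\strat]$, and combining this with the DKW bound on the remaining quantiles gives dominance with exactly the claimed failure probability. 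With these two repairs your induction coincides with the paper's argument.
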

We defer the proof to \Cref{sec:InflDefl}.

Our estimate is not precisely equal to $\dist{\round, \coin}$ and to differentiate the two, we denote our estimation of $\dist{\round, \coin}$ by $\estdist{\round, \coin}$. We update the simulations to contain inflate and deflate.

\begin{mdframed}
$\finsampleaddl{\estdist{t-1, \coin}}$:
\begin{enumerate}
    \item For $1 \leq \ell \leq \samples$.
    \begin{enumerate}
        \item $\drawadv{\estdist{t-1, k}}$.
        \item $\sample{\optTk, \estdist{t-1, k}}$: Return sample $s_{\ell}$
    \end{enumerate}
    \item $\tilde{\D}^{\optimal}_{t, \coin}$ be the uniform distribution over $\big(s_{\ell}\big)_{1 \leq \ell \leq \samples}$ (in descending order)
    \item Inflate while computing the upper bound and deflate while computing the lower bound.
    \begin{itemize}
        \item $\Defl{\chernoff, \tilde{\D}^{\optimal}_{t, \coin}}:$
        \begin{enumerate}
            \item Delete the largest $\samples \cdot \sqrt{\tfrac{\ln \chernoff^{-1}}{2\samples}}$ samples from $\tilde{\D}^{\optimal}_{t, \coin}$
            \item Append $\samples \cdot \sqrt{\tfrac{\ln \chernoff^{-1}}{2\samples}}$ copies of $-\lambda$ to $\tilde{\D}^{\optimal}_{t, \coin}$
        \end{enumerate}
        \item $\Infl{\chernoff, \strat, \tilde{\D}^{\optimal}_{t, \coin}}$:
        \begin{enumerate}
            \item Delete the smallest $\samples \cdot \sqrt{\tfrac{\ln \chernoff^{-1}}{2\samples}}$ samples from $\tilde{\D}^{\optimal}_{t, \coin}$
            \item Append $\strat \samples$ copies of $t\,(1-\lambda)$ to $\tilde{\D}^{\optimal}_{t, \coin}$
            \item For $1 \leq \ell < \frac{\samples}{\strat \, \samples} \cdot \sqrt{\tfrac{\ln \chernoff^{-1}}{2 \samples}}$:
            \begin{itemize}
                \item Append $\strat \samples$ copies of $s_{\ell}$
            \end{itemize}
        \end{enumerate}
    \end{itemize}
    
    \item Return $\estdist{t, \coin}$ to be the uniform distribution over $\{s_1, s_2, \dots, s_{\samples_t}\}$.
\end{enumerate}

$\tsimulateextraparameters{\estdist{t - 1, \coin}}$:
\begin{enumerate}
    \item Initialize $\estdist{0, \coin}$ to be the point-mass distribution at $0$.
    \item For $1 \leq t \leq \round$:
    \begin{enumerate}
        \item $\estdist{t, \coin} = \finsampleaddl{\estdist{t-1, \coin}}$.
    \end{enumerate}
    \item Return $\E_{s \sim \estdist{\round, \coin}}[s]$.
\end{enumerate}

\end{mdframed}

\subsubsection{\challengeComp{Computing Expectations}}

In this section, we tackle the final challenge of needing to compute integrals accurately while constructing the sample $s_{\ell}$. This involves computing a double integral, one over the score $c_0$ of $\knownHonest$ and the reward $r_0$ from staying silent. Naively integrating over the reward distribution from the previous round described by $\samples$ samples would result in a run-time of $\Omega(\samples)$ for each of the $\samples$ samples, and consequently $\Omega(\samples^2)$ for simulating one round. Even though this is $\poly(\samples)$, it still turns out to be intractable to run for the extremely large number of samples we expect to handle each round. Instead, we aim to reduce the run-time to $\Tilde{O}(\samples)$.

To begin with, observe that for each sample $s_{\ell}$, we are drawing $\coin$ scores for the wallets of the $\coin$-scored adversary. Thus, we end up with a run-time of $\Omega(\coin \cdot \samples)$ no matter what we do. Any additional compute that we perform for each sample is only going to increase the order of the run-time. Thus, our aim is to run as much pre-compute as possible before even constructing the first sample, and minimize the number of fresh computations needed to be performed with each sample. We get a practical run-time through a combination of pre-computes and by computing the integrals involved with taking expectations over $r_0$ and $c_0$ as a discrete sum. We defer the details to \Cref{sec:PreComp}. Note that we introduce two parameters $\epsilon$ and $\eta$ that reflect the precision to which we discretize the distributions constructed during the simulations and the precision to which we approximate the two integrals. We do all our pre-computations through the function $\Precomp{\apxd}$.

\begin{lemma} \label{thm:PrecompRunTime}
    $\Precomp{\apxd{}}$ terminates in $O(\frac{t}{\epsilon \, \eta})$ time.
\end{lemma}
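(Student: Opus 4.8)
The plan is to unpack what $\Precomp{\apxd{}}$ must prepare and to count the cost of each ingredient, showing every piece is $O(1/(\epsilon\eta))$ and that only $O(t)$ distinct ``layers'' of precomputation are needed. Recall the obstacle the precompute is designed to defeat: each sample $s_\ell$ requires the double integral $\E_{c_0 \sim \expd{\beta(1-\alpha)},\, r_0 \sim \dist{t-1}}[\max_i\{e^{-c_i(1-\beta)(1-\alpha)}(r_i+\mathbbm 1(i\neq 0))\}\mathbbm 1(i^*(\vecc)\neq 0)]-\lambda$, and the whole point of $\Precomp{}$ is to tabulate enough partial information that, once $\AdvC$ and $\AdvR$ are drawn, the value can be assembled in amortized $O(1)$ work per sample. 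So the first step is to itemize: (i) a discretized grid for $c_0$ of width $\eta$ over the effective support (which is $O(1/\eta)$ grid points, since the exponential tail is cut at a point depending only on $\eta$ and the fixed stakes); (ii) a discretized grid for the reward values of width $\epsilon$ over the reward range $[-\lambda, t(1-\lambda)]$, which has $O(t/\epsilon)$ points but, crucially, after normalizing by the fixed reward scale is $O(t/\epsilon)$; and (iii) for each layer $\le t$, the CDF of the previous round's (discretized) reward distribution $\apxd{t-1}$, represented on the $\epsilon$-grid, together with prefix sums of $r_0$ against the $\expd{\beta(1-\alpha)}$ weights on the $\eta$-grid.

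The second step is the cost accounting. Building the $c_0$-grid and tabulating $e^{-c_0(1-\beta)(1-\alpha)}$ and the exponential density weights at each of the $O(1/\eta)$ nodes is $O(1/\eta)$. For the inner expectation over $r_0$: because $h(c_0,r_0)=e^{-c_0(1-\beta)(1-\alpha)}r_0$ is \emph{linear} in $r_0$, the integral $\int h(c_0,r_0)\,d\apxd{t-1}(r_0)$ over the region where $h$ beats $g$ reduces to evaluating, at the single threshold $r_0 = g(c_0,\AdvC,\AdvR)\,e^{c_0(1-\beta)(1-\alpha)}$, one prefix-sum of the reward CDF and one prefix-sum of $\E[r_0\mid r_0 \le \cdot]$; both of these are tables of size $O(1/\epsilon)$ that can be built once per layer in $O(1/\epsilon)$ time (one linear pass over the sorted $\epsilon$-grid). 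Summing over the $O(1/\eta)$ values of $c_0$, the per-layer precompute is $O(1/(\epsilon\eta))$. With $t$ layers this is $O(t/(\epsilon\eta))$, which is the claimed bound. One should also note the outer $\mathbbm 1(i^*(\vecc)\neq 0)$ factor contributes nothing: it depends only on the drawn scores, not on any precomputed table, so it is applied per-sample, not in $\Precomp{}$.

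The third step is to confirm that nothing smuggled into ``etc.'' blows the bound up: the discretization of the distributions $\apxd{}$ themselves lives on the same $\epsilon$-grid, so storing and updating them is $O(1/\epsilon)$ per layer; the Inflate/Deflate bookkeeping touches $O(\samples\sqrt{\ln\chernoff^{-1}/(2\samples)})$ samples but that is part of $\finsampleaddl{}$, not $\Precomp{}$, so it is out of scope here. The main obstacle I anticipate is the second step — specifically, verifying that the inner integral really does collapse to $O(1)$ table lookups once the prefix-sum tables are in place, which hinges on $h$ being linear in $r_0$ and on $g$ being the same constant (in $r_0$) across the whole integration region for fixed $\AdvC,\AdvR,c_0$; this is exactly where the logarithmic scoring choice and the separable form of $h$ and $g$ are doing the work, and it deserves a careful line or two. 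Once that collapse is established, the $O(t/(\epsilon\eta))$ total follows by the straightforward sum over layers and over the $c_0$-grid.
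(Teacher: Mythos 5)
Your headline bound matches the paper's, but the accounting that produces the factor $t$ is wrong, and the two errors happen to cancel. $\Precomp{\apxd{}}$ is a \emph{single-round} routine: it is invoked once per round, with only the previous round's estimated distribution as input, so there are no ``$t$ layers'' to sum over inside one call. (Indeed, the paper sums per-round precompute costs only when proving \Cref{thm:RunTime}, and the total across $\round$ rounds is $O(\round^2/(\epsilon\eta))$ --- if your reading ``$O(1/(\epsilon\eta))$ per layer, times $t$ layers'' were right, that total would be $O(\round/(\epsilon\eta))$, contradicting \Cref{thm:RunTime}.) The $t$ in the statement instead comes from the width of the round-$t$ reward support: rewards lie in $[-t\lambda,\, t(1-\lambda)]$, so the grid of thresholds $\theta$ --- equivalently of possible values of $g(i^*,\AdvC,\AdvR)$, which is unknown at precompute time and must be tabulated over --- has $O(t/\epsilon)$ points, not the $O(1/\epsilon)$ you use when costing the prefix-sum tables (you even note the $O(t/\epsilon)$ grid in your first step and then drop the $t$). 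The dominant cost in a single call, for $\beta\notin\{0,1\}$, is exactly the two-dimensional table $G(c,g)$: for each of the $O(t/\epsilon)$ discretized values of $g$, a cumulative sum over an $O(1/\eta)$-point grid for the $c_0$-integral, giving $O(t/(\epsilon\eta))$; the one-dimensional tables $d$, the CDF, $E(\theta)$ and $E_{\max}(\theta)$ cost only $O(t/\epsilon)$. So the proof as written is not correct: the per-call cost you derive is off by a factor of $t$, and the ``$t$ layers'' multiplication is not a legitimate step for this lemma.

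A secondary issue is how you get an $O(1/\eta)$-point grid for the $c_0$-integral. The paper obtains it by the change of variables $\zeta=e^{-c_0\,\beta\,(1-\alpha)}\in[0,1]$ and discretizes $\zeta$ in steps of $\eta$; truncating the domain $[0,\infty)$ ``at a point depending only on $\eta$,'' as you propose, yields on the order of $\ln(1/\eta)/\eta$ grid points plus an unanalyzed truncation error, so it does not cleanly deliver the stated bound. Your observation that the inner expectation collapses, via linearity of $h$ in $r_0$, to the quantity $E_{\max}(\theta)=\theta\,\apxd{}(\theta)+E(\theta)$ evaluated at a single threshold is exactly the paper's mechanism and is fine; the gap is purely in the size of the threshold grid and in where the factor $t$ enters.
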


$\tcsimulate$ in \Cref{sec:Summary} reflects the updates in terms of the precomputations. \Cref{sec:Summary} also compiles the changes made to the simulations across various stages and presents a summary.

\begin{lemma} \label{thm:RunTime}
    A single execution of $\tcsimulate$ terminates in time $O(Tkn + \tfrac{T^2}{\epsilon \, \eta})$.
\end{lemma}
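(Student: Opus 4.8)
The plan is to bound the runtime contribution of each of the two nested phases of the simulation separately --- the precomputation phase and the sampling phase --- and then sum over the $\round$ rounds. First I would recall the structure of $\tcsimulate$: it initializes $\estdist{0,\coin}$ as a point mass and then, for each $t$ from $1$ to $\round$, calls $\finsampleaddl{\estdist{t-1,\coin}}$, which itself (i) runs $\Precomp{\estdist{t-1,\coin}}$ once, (ii) constructs $\samples$ samples, each via $\drawadv{\estdist{t-1,k}}$ followed by $\sampleFromPrecomp{\estdist{t-1,k}}$, and (iii) applies $\Defl{\chernoff,\cdot}$ and $\Infl{\chernoff,\strat,\cdot}$ to the resulting empirical distribution.

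For the precomputation phase I would invoke \Cref{thm:PrecompRunTime} directly: the call to $\Precomp{\apxd}$ in round $t$ takes $O\!\big(\tfrac{t}{\epsilon\,\eta}\big)$ time. Summing over rounds, $\sum_{t=1}^{\round} O\!\big(\tfrac{t}{\epsilon\,\eta}\big) = O\!\big(\tfrac{\round^2}{\epsilon\,\eta}\big)$, which is the second term in the claimed bound. For the sampling phase I would argue that in each round $t$ we draw $\samples$ samples, and each sample requires $O(\coin)$ work: $\drawadv{\cdot}$ draws $\coin$ rewards and $\coin$ scores (each score a single fresh $\expd{\alpha}$ draw added to the previous), and $\sampleFromPrecomp{\cdot}$ computes the running maxima $g(i^*,\AdvC,\AdvR)$ for $1 \le i^* \le \coin$ and then a sum of $\coin$ terms, each of which is an $O(1)$ lookup into the precomputed tables ($E_{\max}$ or $G$) --- this is precisely the point of having done the precomputation, so no per-sample integral is recomputed. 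Hence each round costs $O(\coin\,\samples)$ for sampling, plus $O(\samples\cdot\sqrt{\ln\chernoff^{-1}/(2\samples)}) = O(\samples)$ for the delete/append operations of $\Defl{\cdot}$ and $\Infl{\cdot}$ (the appended blocks are of total size $O(\samples)$, absorbed into $O(\coin\,\samples)$). Summing over $\round$ rounds gives $O(\round\,\coin\,\samples)$, the first term.

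Adding the two phases yields the total runtime $O\!\big(\round\,\coin\,\samples + \tfrac{\round^2}{\epsilon\,\eta}\big)$, as claimed. The main obstacle --- and the step that genuinely requires care rather than bookkeeping --- is justifying that $\sampleFromPrecomp{\cdot}$ really is $O(\coin)$ per sample rather than $O(\coin)$ table lookups each of non-constant cost, i.e.\ that the precomputed tables $E_{\max}(\cdot)$ and $G(\cdot,\cdot)$ are indexed so that the value needed for a given $(c_i, g(i^*,\AdvC,\AdvR))$ can be retrieved in $O(1)$ (after rounding the arguments to the nearest multiple of $\epsilon$ and $\eta$ respectively); this is where the choice to discretize $c_0$ and $r_0$ and to tabulate against a fixed grid pays off, and it is the content that \Cref{sec:PreComp} must supply. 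Everything else is a straightforward accounting of the two nested loops against \Cref{thm:PrecompRunTime}.
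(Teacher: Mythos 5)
Your proposal is correct and follows essentially the same route as the paper's proof: invoke \Cref{thm:PrecompRunTime} and sum over rounds to get the $O(\tfrac{\round^2}{\epsilon\,\eta})$ precomputation term, observe that each sample costs $O(\coin)$ via constant-time lookups into the precomputed tables so sampling costs $O(\round\,\coin\,\samples)$ in total, and note that inflate/deflate is dominated. Your added remarks on $O(1)$ indexing into the discretized tables and on the cost of the delete/append steps are just slightly more explicit bookkeeping of what the paper's argument already relies on.
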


\subsection{Locating the Optimal Expected Reward and the Optimal Adversarial Strategy}

Remember that the end goal of the simulations is to compute the optimal reward for an adversary playing $\CSSPA$. We denote the optimal $k$-scored adversarial strategy in $\FinLinCSSPA{\round}$ by $\optTk(\lambda)$ and in $\CSSPA$ that terminates in $\round$ rounds by $\optTk$. Let $\LinLamRew{\lambda}{\strategy}$ be the expected reward from playing $\strategy$ in $\FinLinCSSPA{\round}$. Remember that $\Rew{\round}$ is the expected reward from playing $\strategy$ in $\CSSPA$. By \Cref{thm:Signlambda}, $\LinLamRew{\lambda}{\optTk} \geq 0$ iff $\lambda \leq \Rew{\optTk}$ with equality holding precisely when $\lambda = \Rew{\optTk}$. To estimate $\Rew{\optTk}$, we run \\$\tcstrategysimulate{\optTk(\lambda)}$ and binary search over $\lambda$ until the expected reward output by the simulation is approximately zero. More precisely, we search for $\lambda$ such that the expected values of the upper bound $\UBD{\round, \coin}(\lambda)$ and the lower bound $\LBD{\round, \coin}(\lambda)$ are slightly larger and slightly smaller than zero. However, the binary search could potentially output any $\lambda$ such that $\LinLamRew{\lambda}{\optTk(\lambda)} \approx 0$ even if $\lambda$ is far off from $\Rew{\optTk}$. Such an error will become all the more likely given that the simulations only produce an interval $\big[\E_{s \sim \UBD{\round, \coin}(\lambda_1)}[s], \E_{s \sim \LBD{\round, \coin}(\lambda_1)}[s] \big]$ such that $\LinLamRew{\lambda}{\optTk(\lambda)}$ lies in this interval, instead of exactly computing the rewards. The following theorem rules out such scenarios.

\begin{theorem} \label{thm:BinSearchAPX}
    Let $\tclamstrategysimulate{\lambda_1}$ output the upper bound and lower bound distributions $\UBD{\round, \coin}(\lambda_1)$ and $\LBD{\round, \coin}(\lambda_1)$ respectively, such that $\E_{s \sim \UBD{\round, \coin}(\lambda_1)}[s] - \E_{s \sim \LBD{\round, \coin}(\lambda_1)}[s] \leq \delta$. Suppose for some $r \in \Big[\E_{s \sim \UBD{\round, \coin}(\lambda_1)}[s], \E_{s \sim \LBD{\round, \coin}(\lambda_1)}[s] \Big]$,
    $|r-\LinLamRew{\lambda_2}{\optTk(\lambda_2)}| \leq \zeta\text{.}$
    Then, $|\lambda_1 - \lambda_2| \leq \zeta + \delta$ with probability at least $1 - \Big(2 \, \round \, \chernoff + \round \, \tfrac{e^{-\strat \samples}}{\strat} \, \sqrt{\tfrac{\ln \chernoff^{-1}}{2\samples}} \Big)$.
\end{theorem}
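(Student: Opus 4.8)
The plan is to marry the probabilistic sandwiching of \Cref{thm:SampleSummary} with a deterministic ``slope at least one'' property of the optimal linear reward as a function of the entry fee. First I would apply \Cref{thm:SampleSummary} to the single run $\tclamstrategysimulate{\lambda_1}$: a union bound over its two parts shows that, except with probability at most $2\,\round\,\chernoff + \round\,\tfrac{e^{-\strat\samples}}{\strat}\sqrt{\tfrac{\ln \chernoff^{-1}}{2\samples}}$, we have simultaneously $\E_{s \sim \LBD{\round,\coin}(\lambda_1)}[s] \le \E_{s \sim \dist{\round,\coin}}[s]$ and $\E_{s \sim \dist{\round,\coin}}[s] \le \E_{s \sim \UBD{\round,\coin}(\lambda_1)}[s]$. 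Since $\dist{\round,\coin}$ is, by the construction/convergence argument of the preceding subsections, exactly the true reward distribution of $\optTk(\lambda_1)$ in $\FinLinCSSPA{\round}$, its mean equals $\LinLamRew{\lambda_1}{\optTk(\lambda_1)}$. Hence on this event both $\LinLamRew{\lambda_1}{\optTk(\lambda_1)}$ and the hypothesised $r$ lie between $\E_{s \sim \LBD{\round,\coin}(\lambda_1)}[s]$ and $\E_{s \sim \UBD{\round,\coin}(\lambda_1)}[s]$; as the gap between those endpoints is at most $\delta$, we get $|r - \LinLamRew{\lambda_1}{\optTk(\lambda_1)}| \le \delta$, and the triangle inequality with the hypothesis $|r - \LinLamRew{\lambda_2}{\optTk(\lambda_2)}| \le \zeta$ yields $|\LinLamRew{\lambda_1}{\optTk(\lambda_1)} - \LinLamRew{\lambda_2}{\optTk(\lambda_2)}| \le \zeta + \delta$.

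The remaining, purely deterministic, step is to convert a small gap in optimal rewards into a small gap in entry fees. Write $R^\ast(\lambda) := \LinLamRew{\lambda}{\optTk(\lambda)} = \max_{\strategy} \LinLamRew{\lambda}{\strategy}$, the upper envelope over feasible ($\coin$-scored, $\round$-round) strategies. For each fixed $\strategy$, \Cref{def:LinRew} gives $\LinLamRew{\lambda}{\strategy} = \E\!\left[\sum_{t=1}^{\fst} X_t(\strategy)\right] - \lambda\,\E[\fst]$, which is affine in $\lambda$ with slope $-\E[\fst]$; because every play of $\FinLinCSSPA{\round}$ completes at least round $1$ we have $\fst \ge 1$ (reading $\fst$ as $\min\{\fst,\round\}$ in the truncated game), so each slope is $\le -1$. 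Thus for $\lambda_1 \le \lambda_2$, taking $\strategy^{\ast\ast}$ optimal at $\lambda_2$, $R^\ast(\lambda_1) \ge \LinLamRew{\lambda_1}{\strategy^{\ast\ast}} = \LinLamRew{\lambda_2}{\strategy^{\ast\ast}} + (\lambda_2 - \lambda_1)\,\E_{\strategy^{\ast\ast}}[\fst] \ge R^\ast(\lambda_2) + (\lambda_2 - \lambda_1)$, and symmetrically if $\lambda_2 \le \lambda_1$. Either way $|\lambda_1 - \lambda_2| \le |R^\ast(\lambda_1) - R^\ast(\lambda_2)| \le \zeta + \delta$ on the good event, which is the claim.

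The main obstacle is not any single hard estimate but rather two bookkeeping points that need care: (a) certifying that the mean of the \emph{true} distribution $\dist{\round,\coin}$ produced inside the simulation is literally $\LinLamRew{\lambda_1}{\optTk(\lambda_1)}$, so that \Cref{thm:SampleSummary} can be chained with the hypotheses on $r$ and $\lambda_2$; and (b) checking that the slope-$\le -1$ bound, which rests on $\E[\fst] \ge 1$, survives both the $\round$-round truncation and the $\coin$-scored restriction, since each of these only shrinks the set of strategies over which the envelope is taken and leaves the affine-with-slope-$\le-1$ structure of each term intact. No fresh concentration argument beyond \Cref{thm:SampleSummary} is needed, because the inequality involving $\lambda_2$ is given as a hypothesis rather than something the simulation must itself certify.
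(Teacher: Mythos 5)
Your proposal is correct and follows essentially the same route as the paper: the paper also chains \Cref{thm:SampleSummary} with the triangle inequality to get $|\LinLamRew{\lambda_1}{\optTk(\lambda_1)} - \LinLamRew{\lambda_2}{\optTk(\lambda_2)}| \leq \zeta + \delta$, and then invokes a separate lemma (\Cref{thm:BinSearch}) whose proof is exactly your ``play the optimum for one fee under the other fee and save at least $\lambda_2-\lambda_1$ over at least one round'' argument, phrased by contradiction instead of via your envelope-of-affine-functions / slope-at-most-$-1$ framing. The two presentations are interchangeable, so no substantive difference remains.
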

We defer the proof to \Cref{sec:ProofOfBinSearch}. By locating the optimal adversarial reward (approximately), we also uncover a very succinct description of the adversary's (near) optimal strategy, which we describe in \Cref{sec:AdvStrategy}.

\section{Simulation Results} \label{sec:SimPlots}

Below, we summarize the results from our simulations. 
In \Cref{fig:BoundComp}, we compare the bounds from previous works against our results. We see that our bounds on the adversarial rewards of both the omniscient adversary (blue) and the actual adversary in $\CSSPA$ (green) is significantly tighter than the bound for the omniscient adversary from \citet{FHWY22} (orange). The bounds plotted in \Cref{fig:BoundComp} is empirical, and are not provably correct, since we did not inflate samples when constructing the reward distributions. However, they are fairly representative of the scale of the marginal rewards the adversary achieves from being strategic. For instance, even with an extremely large stake of $0.2$, we get the marginal rewards to be in the range $[0.0068, 0.0078]$ (for a simulation with deflated and inflated sampling), which is not considerable. Further, observe that the rewards from the $1$-lookahead strategy (red), which is much more tractable than the optimal strategy to describe and compute, is already close to the optimal adversarial reward. In \Cref{fig:AlphaRange}, we plot the marginal rewards of the adversary against its stake for various values of $\beta$. Observe that the network connectivity $\beta$ plays an important role in the rewards and the adversary has significant gains for larger values of $\beta$. For instance, at $\alpha = 0.2$ the marginal utility when $\beta = 1$ is at least $0.0068$ (from a simulation constructed from deflated sampling) and at most $0.0021$ (from a simulation constructed from inflated sampling) when $\beta = 0$. Finally, \Cref{fig:BetaRange} compares the adversary's marginal rewards as a function of $\beta$ for a stake $\alpha = 0.25$, once again, highlighting the role of connectivity in strategic manipulation in blockchain protocols.

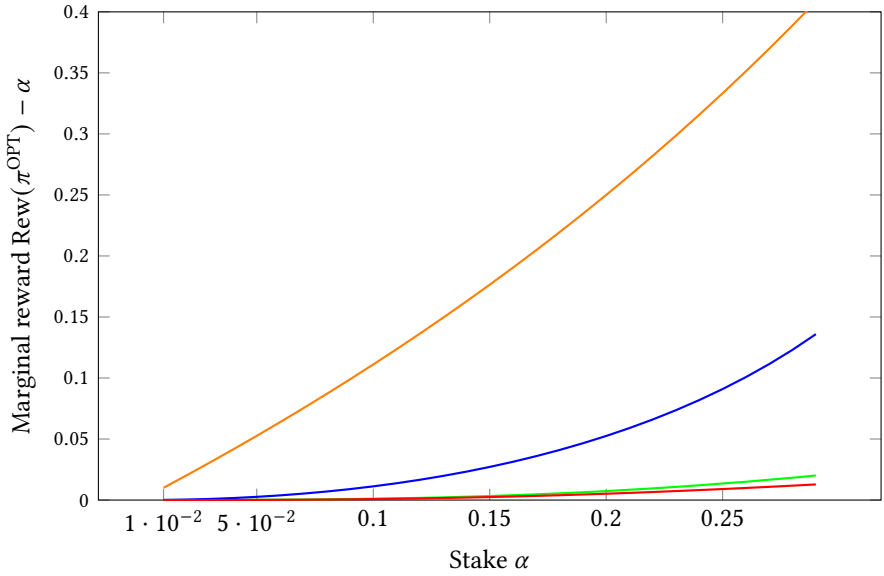
\begin{figure}[h]
    \centering
\begin{tikzpicture}
\begin{axis}[
    xlabel={Stake $\alpha$},
    ylabel={Marginal reward $\Rew{\opt}-\alpha$},
    legend style={at={(0.5,-0.15)}, anchor=north, legend columns=-1},
    width=12cm,
    height=8cm,
    scaled y ticks=false,
    yticklabel style={/pgf/number format/fixed, /pgf/number format/precision=7, font=\footnotesize},
    ytick={0, 0.05, 0.1, 0.15, 0.2, 0.25, 0.3, 0.35, 0.4},
    xtick={0.01, 0.05, 0.1, 0.15, 0.2, 0.25},
    ymin=0,
    ymax=0.4,
]

\addplot[
    color=green,
    thick
    ]
    coordinates {
        (0.01, 0.00007897713) (0.02, 0.00008827107) (0.03, 0.0001752715649) (0.04, 0.0001676452968)
        (0.05, 0.0002671508817) (0.06, 0.00032864042) (0.07, 0.00045785787) (0.08, 0.00062070801)
        (0.09, 0.0008144293) (0.1, 0.0011224448) (0.11, 0.0014291196) (0.12, 0.0017796834)
        (0.13, 0.0022155453) (0.14, 0.0027446558) (0.15, 0.0033048884) (0.16, 0.0039770307)
        (0.17, 0.0046970561) (0.18, 0.0055053986) (0.19, 0.006383743) (0.2, 0.0073420575)
        (0.21, 0.0084461967) (0.22, 0.0095549904) (0.23, 0.0107875803) (0.24, 0.0121093058)
        (0.25, 0.013584692) (0.26, 0.0149942352) (0.27, 0.0166813375) (0.28, 0.0182794178)
        (0.29, 0.0201351471)
    };

\addplot[
    color=blue,
    thick
    ]
    coordinates {
        (0.01, 0.0001010212071) (0.02, 0.0004083278602) (0.03, 0.0009286749167) (0.04, 0.001669356119)
        (0.05, 0.002638243461) (0.06, 0.003843832635) (0.07, 0.005295296435) (0.08, 0.007002547284)
        (0.09, 0.008976310376) (0.1, 0.01122820925) (0.11, 0.01377086609) (0.12, 0.01661801959)
        (0.13, 0.01978466399) (0.14, 0.02328721391) (0.15, 0.02714370072) (0.16, 0.03137400807)
        (0.17, 0.03600015644) (0.18, 0.04104664944) (0.19, 0.04654089938) (0.2, 0.05251375499)
        (0.21, 0.05900016344) (0.22, 0.06604001055) (0.23, 0.07367920213) (0.24, 0.08197107656)
        (0.25, 0.09097828243) (0.26, 0.1007753232) (0.27, 0.1114520836) (0.28, 0.1231188462)
        (0.29, 0.1359136444)
    };

\addplot[
    color=orange,
    thick
    ]
    coordinates {
        (0.01, 0.0101010101) (0.02, 0.02040816327) (0.03, 0.03092783505) (0.04, 0.04166666667)
        (0.05, 0.05263157895) (0.06, 0.06382978723) (0.07, 0.0752688172) (0.08, 0.08695652174)
        (0.09, 0.0989010989) (0.1, 0.1111111111) (0.11, 0.1235955056) (0.12, 0.1363636364)
        (0.13, 0.1494252874) (0.14, 0.1627906977) (0.15, 0.1764705882) (0.16, 0.1904761905)
        (0.17, 0.2048192771) (0.18, 0.2195121951) (0.19, 0.2345679012) (0.2, 0.25)
        (0.21, 0.2658227848) (0.22, 0.2820512821) (0.23, 0.2987012987) (0.24, 0.3157894737)
        (0.25, 0.3333333333) (0.26, 0.3513513514) (0.27, 0.3698630137) (0.28, 0.3888888889)
        (0.29, 0.4084507042)
    };

\addplot[
    color=red,
    thick
    ]
    coordinates {
        (0.01, 9.80E-07) (0.02, 7.68E-06) (0.03, 2.54E-05) (0.04, 5.90E-05)
        (0.05, 0.0001128398949) (0.06, 0.0001909371702) (0.07, 0.0002968558079) (0.08, 0.000433779405)
        (0.09, 0.0006045174963) (0.1, 0.00081152213) (0.11, 0.001056904354) (0.12, 0.001342450511)
        (0.13, 0.001669638255) (0.14, 0.002039652217) (0.15, 0.002453399275) (0.16, 0.002911523365)
        (0.17, 0.003414419814) (0.18, 0.003962249164) (0.19, 0.004554950478) (0.2, 0.005192254095)
        (0.21, 0.005873693864) (0.22, 0.006598618826) (0.23, 0.00736620436) (0.24, 0.008175462804)
        (0.25, 0.009025253555) (0.26, 0.009914292655) (0.27, 0.01084116189) (0.28, 0.01180431738)
        (0.29, 0.01280209777)
    };

\end{axis}
\end{tikzpicture}
    \caption{Marginal reward vs adversarial stake. Legend: orange--upper bound from \citealp{FHWY22}; blue-- tight upper bound for the omniscient adversary; green-- un-inflated simulated upper bound for $\beta = 1$; red-- reward from the $1$-lookahead strategy in \citealp{FHWY22}.}
    \label{fig:BoundComp}
\end{figure}

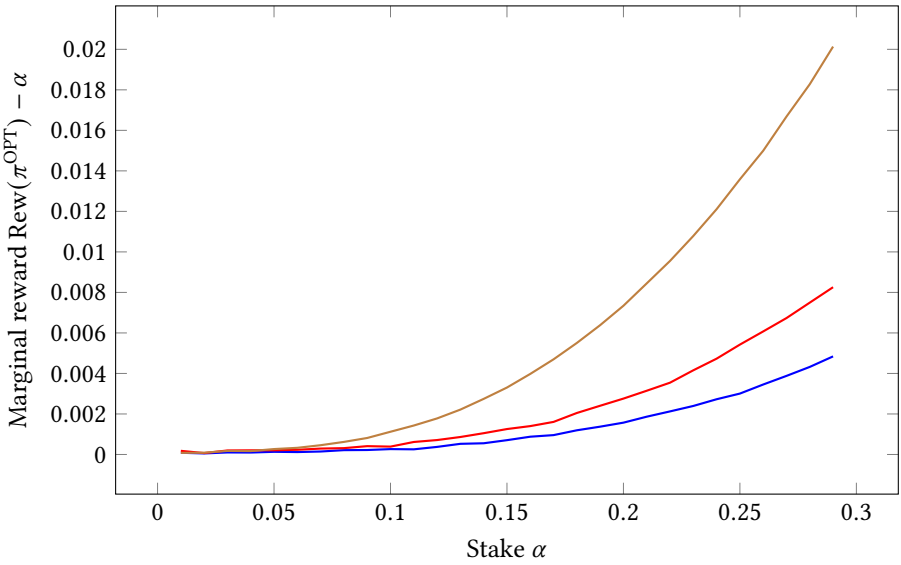
\begin{figure}[h]
    \centering
    \begin{tikzpicture}
    \begin{axis}[
        xlabel={Stake $\alpha$},
        ylabel={Marginal reward $\Rew{\opt}-\alpha$},
        legend style={at={(0.5,-0.15)}, anchor=north, legend columns=-1},
        width=12cm,
        height=8cm,
        scaled y ticks=false,
        yticklabel style={/pgf/number format/fixed, /pgf/number format/precision=7},
        ytick={0, 0.002, 0.004, 0.006, 0.008, 0.01, 0.012, 0.014, 0.016, 0.018, 0.02},
        scaled x ticks=false,
        xticklabel style={/pgf/number format/fixed, /pgf/number format/precision=7},
        xtick={0, 0.05, 0.1, 0.15, 0.2, 0.25, 0.3}
    ]

    \addplot[
        color=blue,
        thick
        ]
        coordinates {
            (0.01, 0.00008419501) (0.02, 0.00005181442) (0.03, 0.0001002968) (0.04, 0.00009583248) 
            (0.05, 0.00013142909) (0.06, 0.00011889049) (0.07, 0.00014662485) (0.08, 0.00021367417) 
            (0.09, 0.00022086477) (0.1, 0.0002655091) (0.11, 0.000254117) (0.12, 0.0003740958) 
            (0.13, 0.0005259905) (0.14, 0.000553322) (0.15, 0.0007065732) (0.16, 0.000877303) 
            (0.17, 0.0009560486) (0.18, 0.0011940648) (0.19, 0.0013731628) (0.2, 0.0015711942) 
            (0.21, 0.0018654795) (0.22, 0.0021266046) (0.23, 0.0023953144) (0.24, 0.0027265613) 
            (0.25, 0.0030064592) (0.26, 0.0034539642) (0.27, 0.0038794858) (0.28, 0.0043231806) 
            (0.29, 0.0048422916)
        };

    \addplot[
        color=red,
        thick
        ]
        coordinates {
            (0.01, 0.00017981546) (0.02, 0.00006858395) (0.03, 0.00019819211) (0.04, 0.00020497246) 
            (0.05, 0.00020898409) (0.06, 0.00023128619) (0.07, 0.00029196718) (0.08, 0.00031279048) 
            (0.09, 0.00041135283) (0.1, 0.0003884035) (0.11, 0.0006171874) (0.12, 0.0007121927) 
            (0.13, 0.0008629341) (0.14, 0.0010505406) (0.15, 0.0012553034) (0.16, 0.0014008845) 
            (0.17, 0.0016081481) (0.18, 0.00204855) (0.19, 0.0024053253) (0.2, 0.0027587233) 
            (0.21, 0.0031465013) (0.22, 0.0035441349) (0.23, 0.0041557134) (0.24, 0.0047302622) 
            (0.25, 0.005424438) (0.26, 0.0060739323) (0.27, 0.0067346216) (0.28, 0.0074965401) 
            (0.29, 0.0082572688)
        };

    \addplot[
        color=brown,
        thick
        ]
        coordinates {
            (0.01, 0.00007897713) (0.02, 0.00008827107) (0.03, 0.0001752715649) (0.04, 0.0001676452968) 
            (0.05, 0.0002671508817) (0.06, 0.00032864042) (0.07, 0.00045785787) (0.08, 0.00062070801) 
            (0.09, 0.0008144293) (0.1, 0.0011224448) (0.11, 0.0014291196) (0.12, 0.0017796834) 
            (0.13, 0.0022155453) (0.14, 0.0027446558) (0.15, 0.0033048884) (0.16, 0.0039770307) 
            (0.17, 0.0046970561) (0.18, 0.0055053986) (0.19, 0.006383743) (0.2, 0.0073420575) 
            (0.21, 0.0084461967) (0.22, 0.0095549904) (0.23, 0.0107875803) (0.24, 0.0121093058) 
            (0.25, 0.013584692) (0.26, 0.0149942352) (0.27, 0.0166813375) (0.28, 0.0182794178) 
            (0.29, 0.0201351471)
        };

    \end{axis}
    \end{tikzpicture}
    \caption{Marginal reward vs adversarial stake. Legend: brown-- un-inflated simulated upper bounds for $\beta = 1$; red-- un-inflated simulated upper bounds for $\beta = 0.5$; blue-- un-inflated simulated upper bounds for $\beta = 0$.}
    \label{fig:AlphaRange}
\end{figure}

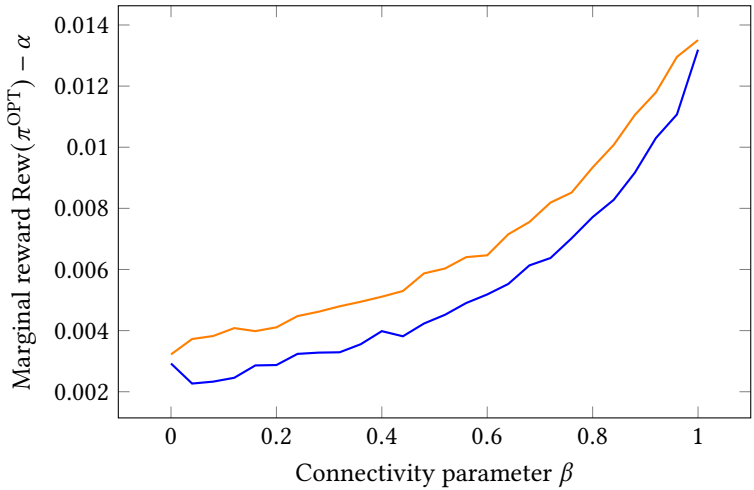
\begin{figure}
\centering
\begin{tikzpicture}
\begin{axis}[
    xlabel={Connectivity parameter $\beta$},
    ylabel={Marginal reward $\Rew{\opt}-\alpha$},
    legend style={at={(0.5,-0.15)}, anchor=north, legend columns=-1},
    width=10cm,
    height=7cm,
    ytick={0, 0.002, 0.004, 0.006, 0.008, 0.01, 0.012, 0.014},
    scaled y ticks=false,
    yticklabel style={/pgf/number format/fixed, /pgf/number format/precision=3},
    yticklabels={0, 0.002, 0.004, 0.006, 0.008, 0.01, 0.012, 0.014}
]

\addplot[
    color=orange,
    thick
    ]
    coordinates {
        (0, 0.0032207012) (0.04, 0.0037244252) (0.08, 0.0038230371) (0.12, 0.0040811953) (0.16, 0.0039825645) 
        (0.2, 0.0041072692) (0.24, 0.0044743816) (0.28, 0.0046179552) (0.32, 0.0047952462) (0.36, 0.0049439587) 
        (0.4, 0.0051094458) (0.44, 0.005295178) (0.48, 0.0058709972) (0.52, 0.0060326259) (0.56, 0.0064028961) 
        (0.6, 0.0064643485) (0.64, 0.0071520172) (0.68, 0.0075532467) (0.72, 0.0081886361) (0.76, 0.0085160112) 
        (0.8, 0.0093385273) (0.84, 0.0100759752) (0.88, 0.0110541723) (0.92, 0.0117926977) (0.96, 0.0129569557) 
        (1, 0.0135052933)
    };

\addplot[
    color=blue,
    thick
    ]
    coordinates {
        (0, 0.0029234149) (0.04, 0.0022675023) (0.08, 0.0023304179) (0.12, 0.0024563372) (0.16, 0.0028608255) 
        (0.2, 0.0028732275) (0.24, 0.0032383423) (0.28, 0.0032795805) (0.32, 0.0032907035) (0.36, 0.0035548058) 
        (0.4, 0.0039823102) (0.44, 0.00381407) (0.48, 0.0042316791) (0.52, 0.004518937) (0.56, 0.0049020085) 
        (0.6, 0.0051832394) (0.64, 0.0055266489) (0.68, 0.0061335121) (0.72, 0.0063745533) (0.76, 0.0070187423) 
        (0.8, 0.0077139142) (0.84, 0.0082787147) (0.88, 0.0091647299) (0.92, 0.0102998497) (0.96, 0.0110733687) 
        (1, 0.0131923059)
    };

\end{axis}

\end{tikzpicture}
    \caption{Marginal reward vs network connectivity. Legend: orange-- un-inflated simulated upper bound for $\alpha = 0.25$, blue-- un-deflated simulated lower bound for $\alpha = 0.25$}
    \label{fig:BetaRange}
\end{figure}

\bibliographystyle{apalike}
\bibliography{MasterBib}

\appendix

\section{Characterizing Balanced Scoring Functions} \label{sec:ScoringRule}
We begin the discussion on balanced scoring functions by proving \Cref{thm:CanScoringRules}.

\begin{proof}[Proof of \Cref{thm:CanScoringRules}]
    Let $\alpha = \sum_{i = 1}^n \alpha_i$, $U$ be the random variable denoting $S(X, \alpha)$. Let $V_1^i, \dots, V_{\Gamma}^i$ be $\Gamma$ independent copies of $S(X_i, \alpha_i)$ and $V_j$ denote the random variable $\min \{V_j^1, \dots, V_j^n\}$.

    From the definition of a balanced scoring function,
    $$Pr\Big(\min_{1 \leq j \leq \Gamma, 1 \leq i \leq n} \{S(X, \alpha), S(X_i^j, \alpha_i)\} = S(X,\alpha)\Big) = \frac{\alpha}{(\Gamma + 1) \, \alpha} = \frac{1}{(\Gamma + 1)}$$
    Here, we use $X_i^j$ to denote the random variable copy of $X_i$ corresponding to $V_j$ (i.e, the copy that determines $V_j^i$). Rephrasing,
    \begin{equation}
    \notag
        \begin{split}
            \frac{1}{\Gamma+1} &= Pr\Big(U \leq V_j^1, \dots, V_j^n \text{ for all } 1 \leq j \leq \Gamma\Big) \\
            &= Pr\Big(U \leq V_1, \dots, V_\Gamma\Big) \\
            &= Pr\Big(U \leq V_1, \dots, V_\Gamma\Big)
        \end{split}
    \end{equation}
    We move to the quantile space of the i.i.d.~variables $V_1, \dots, V_\Gamma$. Let $F$ be the $\CDF$ of $V_i$ and let the quantile $q_i$ be the random variable given by $V_i = F^{-1}(1-q_i)$. Thus, the random variables $q_1, \dots, q_n$ satisfy $q_i = 1-F(V_i)$. Note that $q_i \sim U[0, 1]$. Let $q_0$ be the random variable denoting $1 - F(U)$, distributed according to the distribution $G$ with a density function $g$. Since $F(\cdot)$ is an increasing function, $1-F(\cdot)$ is decreasing, and,
    \begin{equation}
        \notag
        \begin{split}
            \frac{1}{\Gamma+1} &= Pr(U \leq V_1, \dots, V_n) \\
            &= Pr(q_0 \geq q_1, \dots, q_n) \\
            &= \int_{0}^1 \Big(\Pi_{i=1}^\Gamma Pr(q_i \leq q_0)\Big) g(q_0) \,d q_0 \\
            &= \int_0^1 q_0^\Gamma g(q_0) \,dq_0 \\
            &= \E_{q_0 \sim G}[q_0^\Gamma]
        \end{split}
    \end{equation}
    The third equality follows since $q_i \sim U[0, 1]$ for $1 \leq i \leq \Gamma$. From the above, we know that the $\Gamma\textsuperscript{th}$ moment of the distribution $G$ equals $\frac{1}{\Gamma+1}$ which is the $\Gamma\textsuperscript{th}$ moment of $U[0, 1]$. $U[0, 1]$ has a well-defined moment generating function and thus, any distribution with the same moments as the uniform distribution is the uniform distribution.

    Thus, $q_0 = 1-F(U) \sim U[0, 1]$, the same as $1-F(V_i)$ and thus, $U$ is distributed identically to $V_1, \dots, V_{\Gamma}$. In other words, $S(X, \sum_{i = 1}^n \alpha_i) \text{ and } \min_{1 \leq i \leq n} \{S(X_i, \alpha_i)\}$ are identically distributed.
\end{proof}

\noindent We characterize all ``different'' balanced scoring rules in \Cref{thm:UniqueScore}.
\begin{theorem} \label{thm:UniqueScore}
    Let $S$ be a balanced scoring function. Then, the distribution of $S(X, 1)$ for $X \sim U[0, 1]$ uniquely determines the distribution of $S(X, \alpha)$ for all $\alpha \geq 0$. In particular, $Pr(S(X, \alpha) \geq s) = Pr(S(X, 1) \geq s)^{\alpha}$.
\end{theorem}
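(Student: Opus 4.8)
The plan is to work directly with survival functions. Fix $s \in \R$ and set $\overline{G}_\alpha(s) := Pr_{X \sim U[0,1]}(S(X,\alpha) \geq s)$; since a distribution is determined by its survival function, it suffices to establish $\overline{G}_\alpha(s) = \overline{G}_1(s)^\alpha$ for every fixed $s$ and every $\alpha \geq 0$. The one structural ingredient I would invoke is \Cref{thm:CanScoringRules}: instantiated with $n = 2$, $\alpha_1 = \alpha$, $\alpha_2 = \alpha'$, it says that $S(X,\alpha+\alpha')$ and $\min\{S(X_1,\alpha), S(X_2,\alpha')\}$ are identically distributed for independent $X_1, X_2 \sim U[0,1]$. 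Passing to survival functions and using independence yields the multiplicative law
\begin{equation}
\notag
\overline{G}_{\alpha+\alpha'}(s) \;=\; \overline{G}_\alpha(s)\cdot\overline{G}_{\alpha'}(s) \qquad \text{for all } \alpha,\alpha' \ge 0 .
\end{equation}

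Fixing $s$ and abbreviating $\phi(\alpha) := \overline{G}_\alpha(s) \in [0,1]$, I would next read off the formula on the rationals. The displayed identity makes $\phi$ non-increasing (since $\phi(\alpha+\alpha') = \phi(\alpha)\phi(\alpha') \le \phi(\alpha)$), and induction gives $\phi(m\beta) = \phi(\beta)^m$ for $m \in \N$, $\beta \ge 0$; taking $\beta = 1/q$, $m = q$ gives $\phi(1) = \phi(1/q)^q$, hence $\phi(p/q) = \phi(1/q)^p = \phi(1)^{p/q}$ for every positive rational $p/q$.

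To pass from rational to real $\alpha > 0$ I would sandwich using monotonicity: choosing rationals $p_n/q_n \uparrow \alpha$ and $p'_n/q'_n \downarrow \alpha$, monotonicity gives $\phi(1)^{p'_n/q'_n} \le \phi(\alpha) \le \phi(1)^{p_n/q_n}$, and letting $n \to \infty$ (continuity of $t \mapsto c^t$ for $c \in (0,1]$) forces $\phi(\alpha) = \phi(1)^\alpha$. The degenerate cases are then dispatched by hand: if $\phi(1) = 0$ then $\phi(1/q) = 0$ for all $q$, so $\phi$ vanishes at every positive rational and, by monotonicity, at every positive real, matching $0^\alpha$; and $\phi(0) = \phi(0)^2 \in \{0,1\}$, equal to $1 = \phi(1)^0$ whenever $\phi$ is not identically $0$ on $(0,\infty)$, because $\phi(\alpha_0) > 0$ forces $\phi(\alpha_0/2^k) = \phi(\alpha_0)^{1/2^k} > 0$ and hence $\phi > 0$ everywhere (so also $\phi(1) > 0$). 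In every case $\overline{G}_\alpha(s) = \overline{G}_1(s)^\alpha$, which simultaneously gives the stated identity and shows the law of $S(X,1)$ determines that of $S(X,\alpha)$.

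I expect essentially all the substance to be carried by \Cref{thm:CanScoringRules}, which is already in hand; the one genuine step is the rational-to-real extension, i.e.\ recognizing the multiplicative law as Cauchy's functional equation and using monotonicity (equivalently, boundedness) to conclude its solution is an exponential, together with routine bookkeeping for the boundary values $\phi(1) \in \{0,1\}$ and $\alpha = 0$.
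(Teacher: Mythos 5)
Your proposal is correct and follows essentially the same route as the paper: both deduce the identity from \Cref{thm:CanScoringRules} on integers/rationals (via the independent-minimum decomposition) and then extend to real $\alpha$ by monotonicity and a rational sandwich, the paper proving monotonicity as a separate corollary (\Cref{thm:IncStake}) via the same $n=2$ instantiation you use to get the multiplicative law. Your packaging as Cauchy's functional equation and the explicit treatment of the degenerate cases $\phi(1)=0$ and $\alpha=0$ are only cosmetic refinements of the paper's argument.
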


To prove \Cref{thm:UniqueScore}, we require the following intuitive corollary of \Cref{thm:CanScoringRules}: for a user with stake $\alpha$, we expect the user's probability of getting a large score (and, thus not get elected) to decrease with an increase in the stake.
\begin{corollary} \label{thm:IncStake}
    Let $S$ be a balanced scoring function. Then, $Pr(S(X, \alpha) \geq s)$ is monotonically non-increasing in $\alpha$.
\end{corollary}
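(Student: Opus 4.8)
The plan is to read this off directly from \Cref{thm:CanScoringRules}. Fix an arbitrary $s \in \R$ and take any $0 \le \alpha \le \alpha'$. Write $\alpha' = \alpha_1 + \alpha_2$ with $\alpha_1 = \alpha$ and $\alpha_2 = \alpha' - \alpha \ge 0$, and apply \Cref{thm:CanScoringRules} with $n = 2$: this gives that $S(X,\alpha')$ and $\min\{S(X_1,\alpha_1),\, S(X_2,\alpha_2)\}$ are identically distributed for $X, X_1, X_2 \sim U[0,1]$ i.i.d.

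Since $X_1$ and $X_2$ are independent, the random variables $S(X_1,\alpha_1)$ and $S(X_2,\alpha_2)$ are independent, so the event $\{\min\{S(X_1,\alpha_1),S(X_2,\alpha_2)\} \ge s\}$ is the intersection of the independent events $\{S(X_1,\alpha_1) \ge s\}$ and $\{S(X_2,\alpha_2) \ge s\}$. Hence
\[
Pr\big(S(X,\alpha') \ge s\big) \;=\; Pr\big(S(X_1,\alpha) \ge s\big)\cdot Pr\big(S(X_2,\alpha'-\alpha) \ge s\big) \;\le\; Pr\big(S(X,\alpha) \ge s\big),
\]
where the last inequality just uses $Pr\big(S(X_2,\alpha'-\alpha) \ge s\big) \le 1$. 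As $s$ was arbitrary, $Pr(S(X,\alpha) \ge s)$ is non-increasing in $\alpha$, which is the claim.

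There is essentially no obstacle beyond invoking \Cref{thm:CanScoringRules}; the only things to keep in mind are the degenerate cases, and they are automatically consistent. When $\alpha' = \alpha$ we have $\alpha_2 = 0$, so $S(X_2,0) = \infty \ge s$ (almost) surely and the bound is an equality; when $\alpha = 0$ we have $S(X,0) = \infty$, so $Pr(S(X,0) \ge s) = 1$ is already the maximal possible value. Note also that the factorization of $Pr(\min \ge s)$ into a product of marginals is valid regardless of whether the scores carry point masses, so property~1 of \Cref{def:ScoringFunctions} is not needed for this argument.
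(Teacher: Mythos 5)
Your proposal is correct and follows essentially the same route as the paper: split $\alpha' = \alpha + \epsilon$, invoke \Cref{thm:CanScoringRules} with $n=2$, and bound the tail of the minimum by the tail of $S(X_1,\alpha)$. The only cosmetic difference is that you factor $Pr(\min \geq s)$ using independence and drop the second factor, while the paper simply uses the event containment $\{\min \geq s\} \subseteq \{S(X_1,\alpha) \geq s\}$, which needs neither independence nor the no-point-mass property.
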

\begin{proof}
    For $\alpha, \epsilon \geq 0$,
    \begin{equation}
        \notag
        \begin{split}
            Pr(S(X, \alpha + \epsilon) \geq s) &= Pr\Big(\min \{S(X_1, \alpha), S(X_2, \epsilon)\} \geq s\Big) \\
            &\leq Pr(S(X, \alpha) \geq s)
        \end{split}
    \end{equation}
\end{proof}

The following lemma characterizes the set of all "different" balanced scoring functions.

\begin{proof}[Proof of \Cref{thm:UniqueScore}]
    We show the claim in three steps.
    \begin{enumerate}
        \item $\alpha \in \N \cup \{0\}$. Then, by \Cref{thm:CanScoringRules}, $$Pr(S(X, \alpha) \geq s) = Pr\big(\min_{1 \leq i \leq \alpha} \{S(X_i, 1)\} \geq s\big) = Pr(S(X, 1) \geq s)^{\alpha}$$
        \item $\alpha \in \Q_{\geq 0}$. Let $\alpha = \tfrac{p}{q}$ for natural numbers $p$ and $q (\neq 0)$. Then, by $\Cref{thm:CanScoringRules}$,
        $$Pr(S(X, p) \geq s) = Pr\big(\min_{1 \leq i \leq q} \{S(X_i, \frac{p}{q}) \} \geq s\big) = Pr(S(X, \frac{p}{q}) \geq s)^q$$
        From the previous case, $Pr(S(X, p) \geq s) = Pr(S(X, 1) \geq s)^p$ and thus, $Pr(S(X, \frac{p}{q}) \geq s) = Pr(S(X, 1) \geq s)^{\frac{p}{q}}$.
        \item $\alpha \in \R_{\geq 0}$. Let $\big(\alpha_i\big)_{i \in \N}$ be a sequence of rational numbers converging to $\alpha$. Then,
        $$\lim_{i \xrightarrow{} \infty} Pr(S(X, \alpha_i) \geq s) = \lim_{i \xrightarrow{} \infty} Pr(S(X, 1) \geq s)^{\alpha_i} = Pr(S(X, 1) \geq 1)^{\alpha}$$
        If we show that $Pr(S(X, \cdot) \geq s)$ is continuous, then we can conclude $Pr(S(X, \alpha) \geq s) = Pr(S(X, 1) \geq s)^{\alpha}$. By \Cref{thm:IncStake}, $Pr(S(X, a))$ is monotone non-increasing in $a$. Thus, for $\alpha_1, \alpha_2 \in \Q$ and $\alpha \in [\alpha_1, \alpha_2]$,
        $$Pr(S(X, 1) \geq s)^{\alpha_1} = Pr(S(X, \alpha_1) \geq s) \geq Pr(S(X, \alpha) \geq s) \geq Pr(S(X, \alpha_2) \geq s) = Pr(S(X, 1) \geq s)^{\alpha_2}$$
        For the sequences of rationals $\big(\alpha_1^i\big)_{i \in \N}$ such that $\alpha_1^i < \alpha$ and $\big(\alpha_2^i\big)_{i \in \N}$ satisfying $\alpha_2^i > \alpha$, both converging to $\alpha$
        $$Pr(S(X, 1) \geq s)^{\alpha_1^i} \geq Pr(S(X, \alpha) \geq s) \geq Pr(S(X, 1) \geq s)^{\alpha_2^i}$$
        As $i \xrightarrow{} \infty$, both $Pr(S(X, 1) \geq s)^{\alpha_1^i}$ and $Pr(S(X, 1) \geq s)^{\alpha_2^i}$ converge to $Pr(S(X, 1) \geq s)^\alpha$. Thus, $Pr(S(X, \alpha) \geq s) = Pr(S(X, 1) \geq s)^\alpha$.
    \end{enumerate}
\end{proof}

\citet{FHWY22} conjectured that all balanced scoring functions $S(X, \alpha)$ are continuous in $\alpha$. Even though $Pr(S(X, \alpha) \geq s)$ is continuous in $\alpha$ (\Cref{thm:UniqueScore}), $S(X, \alpha)$ need not be continuous in $\alpha$.

\begin{example}
    Consider
    \begin{equation}
        \notag
        S(X, \alpha) = 
        \begin{cases}
            \tfrac{- \ln X}{\alpha} & \text{if } \tfrac{- \ln X}{\alpha} \leq 3\\
            \tfrac{- \ln X}{\alpha}+1 & \text{if } \tfrac{- \ln X}{\alpha} > 3
        \end{cases}
    \end{equation}
    It can be easily verified that $S(X, \alpha)$ is indeed a balanced scoring function. However, $S(X, \alpha)$ is not continuous in $\alpha$.
\end{example}

\section{A More General $\CSSPA$} \label{sec:GenCSSPA}
In this section, we review the general version of $\CSSPA$ defined in \citet{FHWY22}.

\begin{definition}[$\ParCSSPA{\vec{\alpha}}$] \label{def:Strategy}
    In $\ParCSSPA{\vec{\alpha}}$, the network consists of the adversary with stake $\alpha$ and honest players with stakes given by $\vec{\alpha}$. Prior to round $1$, the adversary learns the values of $\alpha, \vec{\alpha}$ and $\beta$ and that the network apart from the adversary is honest. For a choice $n \geq 1$, the adversary distributes its stake arbitrarily over a set $\advAccount$ of $n$ wallets. The adversary makes the following decisions in round $t$:
    \begin{enumerate}
        \item The adversary learns the seed $\seed{t}$ of round $t$.
        \item The adversary computes the credentials $\cred{i}{t}$ for all wallets $i \in \advAccount$. The adversary chooses a subset of honest players $\knownHonest$ with total stake at most $\beta \, (1-\alpha)$ and learns the credentials $\cred{i}{t}$ for all $i \in \knownHonest$. For all wallets $i \in \opaqueHonest$, the adversary knows that $\cred{i}{t}$ will be drawn independently from $\expd{\alpha_i}$.
        \item For any $r \geq 0$ and $(i_t, i_{t+1}, \dots, i_{t+r}) \in \big(\advAccount \cup \knownHonest \big) \times \advAccount^r$, the adversary precomputes the credentials $\cred{i_{t+r'}}{t+r'}$ for $1 \leq r' \leq r$ assuming $i_{t+\hat{r}}$ is elected to lead in round $t+\hat{r}$ for all $0 \leq \hat{r} < r'$.
        \item The adversary either remains silent or broadcasts the credentials of some subset $\advAccount_t$ of the adversarial wallets $\advAccount$.
    \end{enumerate}
\end{definition}

From the above definition of $\ParCSSPA{\vec{\alpha}}$, \citet{FHWY22} make a series of refinements that lead to $\CSSPA$ without compromising on the adversarial reward.

\begin{lemma}[Observation 3.2 from \citealp{FHWY22}] \label{thm:TwoHonest}
    For any $\alpha, \vec{\alpha}, \beta$, define $\vec{\alpha}'$ to have two honest players with stakes $\alpha_1 = \beta \, (1-\alpha)$ and $\alpha_2 = (1-\beta) \, (1-\alpha)$ respectively. For any strategy $\strategy$ in $\ParCSSPA{\vec{\alpha}}$, there exists a strategy $\strategy'$ in $\ParCSSPA{(\alpha_1, \alpha_2)}$ such that $\Rew{\alpha, (\alpha_1, \alpha_2), \beta; \strategy'} = \Rew{\alpha, \vec{\alpha}, \beta; \strategy}$.
\end{lemma}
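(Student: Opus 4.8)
The plan is to prove \Cref{thm:TwoHonest} by building $\strategy'$ so that it \emph{simulates} $\strategy$ round by round, coupling the two games so that the elected‑leader indicator $X_t$ of \Cref{def:Rew} is literally the same random variable in both; equality of $\Rew{\cdot}$ is then immediate from the definition. The conceptual starting point is that the honest players affect the adversary's play and reward only through two per‑round quantities: (i) the smallest honest score among the \emph{observed} set $\knownHonest$, which the adversary sees and reacts to, and (ii) the smallest honest score among \emph{all} honest wallets, which — together with the adversary's own wallet scores — decides who is elected and hence whether the round is a forced stopping time. No other feature of the individual honest credentials is ever used: within a round only the minimum score matters for election, and the only remaining appearance of honest credentials in \Cref{def:Strategy} is the precomputation of hypothetical future credentials following a win by a wallet in $\knownHonest$, which cannot influence the law of $(X_t)_t$ because any round in which a non‑adversarial wallet holds the globally smallest score is a forced stopping time (\Cref{def:FST}), hence a stopping time (\Cref{thm:FSTisST}), and by \Cref{thm:ResetST} the continuation past that point is an independent fresh copy of the game.

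By \Cref{thm:CanScoringRules} (equivalently, because the minimum of independent $\expd{\alpha_i}$ variables is $\expd{\sum_i\alpha_i}$), quantity (i) is distributed as $\expd{s_B}$ with $s_B := \sum_{i\in\knownHonest}\alpha_i \le \beta(1-\alpha)$, and, conditioned on it, quantity (ii) is its minimum with an independent $\expd{s_C}$ variable, $s_C := (1-\alpha)-s_B$; so the entire honest side, as seen by the reward, is parametrized by the pair $(s_B, s_C)$. The two‑honest‑player instance $\ParCSSPA{(\alpha_1,\alpha_2)}$ with $\alpha_1 = \beta(1-\alpha) \ge s_B$ and $\alpha_2 = (1-\beta)(1-\alpha)$ can reproduce this interface for every admissible $s_B$: let $\strategy'$ observe player $1$, read its score $M \sim \expd{\alpha_1}$, and, using memorylessness of the exponential to ``split'' a minimum, set $A := M$ with probability $s_B/\alpha_1$ and $A := M + \expd{s_B}$ (a fresh draw) otherwise, so that $A \sim \expd{s_B}$ and $M = \min(A, B')$ for an implicit $B' \sim \expd{\alpha_1 - s_B}$ independent of $A$. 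Then $\strategy'$ takes, in every round, exactly the action $\strategy$ would take when its observed honest minimum is $A$, reusing the same adversarial wallet credentials and internal randomness.

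It remains to verify that the coupling matches everything the reward sees. In $\ParCSSPA{(\alpha_1,\alpha_2)}$ the true smallest honest score is $\min\!\big(M, \expd{\alpha_2}\big) = \min\!\big(A,\ \min(B', \expd{\alpha_2})\big)$, and $\min(B', \expd{\alpha_2})$ is independent of $A$ with rate $(\alpha_1 - s_B) + \alpha_2 = s_C$ — precisely the conditional law of (ii) given (i) in $\ParCSSPA{\vec{\alpha}}$. Since the adversary's wallet scores have the same law in both games and $\strategy'$ reproduces $\strategy$'s decisions as a function of (observed minimum, own scores, history), $X_t$ agrees in the two games for every $t$ under the coupling, as do the fresh seeds reseeded at each stopping time; hence $\Rew{\alpha,(\alpha_1,\alpha_2),\beta;\strategy'} = \Rew{\alpha,\vec{\alpha},\beta;\strategy}$. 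The step I expect to be the real obstacle is the reduction in the first paragraph — rigorously arguing that the adversary's reward depends on the honest wallets only through the two per‑round minima, which is where the forced‑stopping‑time machinery does the work; the distributional bookkeeping (the exponential split and matching the rates) is then routine, aside from the degenerate cases $s_B = 0$ or $\beta \in \{0,1\}$, where $\knownHonest$ or $\opaqueHonest$ is empty and the construction simplifies.
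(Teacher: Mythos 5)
This lemma is imported from \citet{FHWY22} (their Observation 3.2) and is not proved in this paper, so your proposal can only be judged on its own terms. On those terms there is a genuine gap, and it is exactly where you predicted: the reduction claiming that the honest side matters only through the two per-round minima. Your justification for discarding the precomputed lookahead following wallets in $\knownHonest$ is that ``any round in which a non-adversarial wallet holds the globally smallest score is a forced stopping time,'' hence the continuation is a fresh copy. But that statement does not cover the case that actually carries all of the $\beta$-dependence: the adversary may abstain (or broadcast a losing credential) so that the minimal wallet of $\knownHonest$ wins \emph{even though an adversarial wallet holds the globally smallest score}. By \Cref{def:FST} such a round is \emph{not} a forced stopping time, the next seed is precisely the credential $\cred{i^\star}{t}$ the adversary observed, and the adversary's round-$(t+1)$ credentials are deterministic functions of it --- which is why the decision of $\strategy$ in round $t$ may depend on the precomputed tree following $\cred{i^\star}{t}$ (this is the mechanism behind the profitable One-Lookahead strategy). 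So the lookahead does influence the law of $(X_t)_t$, and a coupling that only matches the per-round minima $(A,\min(B',N))$ does not by itself force the continuations, nor the decisions of $\strategy$, to agree.

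The construction can be repaired, but it needs more than the exponential split: in the branch $A=M$ you must additionally couple the lookahead tree following the minimal observed wallet in $\ParCSSPA{\vec{\alpha}}$ with the lookahead tree following player $1$ in $\ParCSSPA{(\alpha_1,\alpha_2)}$ (both are i.i.d.\ uniform VRF outputs independent of the current scores, so this identification is legitimate), while in the branch $A>M$, and for non-minimal wallets of $\knownHonest$, the lookaheads can never materialize as seeds (only the minimal honest score can win) and can be fed to $\strategy$ as private randomness of $\strategy'$. With that, the decisions, the elected-leader indicators, and the seeds agree round by round, and the argument goes through; you should also note that $\knownHonest$ (hence $s_B$) may be chosen adaptively and round-by-round in \Cref{def:Strategy}, so the split must be performed per round with the history-dependent rate. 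As written, though, the proposal dismisses the lookahead with an incorrect argument rather than coupling it, so the proof is incomplete at its central step.
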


\begin{lemma}[Lemma 3.1 from \citealp{FHWY22}] \label{thm:Split}
    Let $\strategy$ be a strategy in $\ParCSSPA{(\alpha_1, \alpha_2)}$ where the adversary splits its stake into $n$ wallets. Then there exists a strategy $\strategy'$ such that the adversary divides its stake into $2n$ wallets and $\Rew{\strategy'} \geq \Rew{\strategy}$.
\end{lemma}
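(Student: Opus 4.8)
The plan is to let the $2n$-wallet adversary \emph{simulate} $\strategy$ by splitting each of its wallets in two. Say $\strategy$ splits the adversary's stake $\alpha$ across wallets $1,\dots,n$ with stakes $w_1,\dots,w_n$; in $\strategy'$, replace each wallet $j$ by a pair $(j_1,j_2)$, each of stake $w_j/2$. The arithmetic behind the reduction is the map $\psi(a,b):=(\max\{a,b\})^2$, which pushes $U[0,1]^2$ forward to $U[0,1]$ and, under the logarithmic scoring function, satisfies: the score of $\psi(c_{j_1},c_{j_2})$ at stake $w_j$ equals $\min\{S(c_{j_1},w_j/2),\,S(c_{j_2},w_j/2)\}$. (This is the pathwise form of the fact that the minimum of two $\expd{w_j/2}$ variables is distributed as $\expd{w_j}$, itself a special case of \Cref{thm:CanScoringRules}.) First I would use $\psi$ to realize the $\ParCSSPA{(\alpha_1,\alpha_2)}$ instance that $\strategy$ plays as a deterministic, measure-preserving image of the instance that $\strategy'$ plays, and then have $\strategy'$ mimic $\strategy$ on that image.

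In detail: a $\ParCSSPA{(\alpha_1,\alpha_2)}$ adversary precomputes at the start of each round a tree of hypothetical credentials indexed by sequences of its own wallets (\Cref{def:Strategy}). The $2n$-wallet adversary's tree is indexed by sequences over $\{1_1,1_2,\dots,n_1,n_2\}$, and I would embed the $n$-wallet adversary's tree (sequences over $\{1,\dots,n\}$) into it by the injective map $\iota$ that replaces each wallet $i$ occurring in a sequence with whichever of $i_1,i_2$ has the \emph{smaller} score at the $2n$-tree node already assigned to the prefix (ties broken arbitrarily): this is well-defined by downward recursion since the choice depends only on credentials at ancestor nodes, and injective since $i$ is recovered as the pair containing its image. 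Declaring wallet $j$'s credential at an $n$-tree node to be $\psi$ applied to the two credentials of pair $j$ at the image node, and copying the honest players' credentials (including that of $\opaqueHonest$, whose stake $(1-\beta)(1-\alpha)$ is unchanged) verbatim, produces a coupling under which: (i) the induced $n$-wallet instance has the correct distribution --- $\psi$-images of i.i.d.\ uniforms are i.i.d.\ uniform, freshness of VRF outputs at distinct seeds (property~3 of \Cref{def:VRF}) is inherited, and injectivity of $\iota$ keeps distinct tree nodes independent; and (ii) at every node the score of wallet $j$ equals the minimum score inside pair $j$, so the adversarial potential winners, the minimum adversarial score, all honest scores, the identity of the round's leader, and whether the round is a forced stopping time coincide across the two instances. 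Letting $\strategy'$, in each round, broadcast the smaller-scoring member of each pair $j$ for which $\strategy$ broadcasts wallet $j$ (and stay silent precisely when $\strategy$ does), we get $X'_t=X_t$ for every round (re-coupling after each stopping time) and the stopping-time sequences agree, hence $\Rew{\strategy'}=\Rew{\strategy}$ by \Cref{def:Rew} (or \Cref{thm:ResetST} when $\fst$ is integrable); in particular $\Rew{\strategy'}\geq\Rew{\strategy}$. Note $\strategy'$ is a legitimate strategy because the $n$-wallet tree it consults is a deterministic function of the $2n$-wallet tree it actually observes.

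The main obstacle is step (i): checking that the coupled $n$-wallet instance is genuinely distributed like a fresh run of $\ParCSSPA{(\alpha_1,\alpha_2)}$ rather than some degenerate object. The delicate point is that once a split wallet $j_1$ wins a round, the realized next seed in the $2n$-wallet instance is $\cred{j_1}{t}$ while the corresponding seed in the $n$-wallet instance is $\cred{j}{t}=(\cred{j_1}{t})^2\neq\cred{j_1}{t}$, so the two instances feed syntactically different seeds into the VRF from that round onward. This is exactly where perfect randomness (property~3 of \Cref{def:VRF}) is essential: for any previously unused seed the resulting credentials are i.i.d.\ $U[0,1]$ regardless of the seed's value, so the $\psi$-pairing can be re-imposed at each node independently of the seed, and the ``unused'' sibling subtrees of the $2n$-wallet tree (the continuations in which the worse member of a pair would have won) carry independent fresh randomness that never influences the play of $\strategy'$. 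The remaining checks --- that ``potential winner'', ``forced stopping time'', and ``leader'' are measurable functions of the coupled credentials, and that $\iota$ transports the recursive ``children are fresh given the parent'' structure of the $n$-tree onto a sub-collection of $2n$-tree nodes --- are routine bookkeeping.
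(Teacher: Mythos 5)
The paper does not supply its own proof of this lemma; it is imported verbatim from \citet{FHWY22}, and your argument is essentially the same simulation idea used there: split each wallet into two half-stake wallets, have the $2n$-wallet adversary broadcast the smaller-scoring member of a pair exactly when the original strategy broadcasts that wallet, and use the facts that the pair's minimum score is distributed as the original wallet's score and that the ideal VRF's outputs at fresh seeds are i.i.d.\ uniform to re-establish the coupling each round. Your explicit pathwise coupling via $\psi(a,b)=(\max\{a,b\})^2$ and the tree embedding $\iota$ is a correct (and somewhat more detailed) formalization of that argument, including the key point that the realized seeds differ syntactically but the induced $n$-wallet instance still has the right distribution.
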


\begin{lemma}[Observation 3.1 from \citealp{FHWY22}] \label{thm:Broadcast1}
    For any strategy $\strategy$ in $\ParCSSPA{(\alpha_1, \alpha_2)}$ that distributes the adversarial stake across $n$ wallets, there exists an adversarial strategy $\strategy'$ that also distributes the stake across the same number of wallets, broadcasts the credential of at most one wallet in $\advAccount$ and results in exactly the same leaders as $\strategy$, thereby getting the same reward as $\strategy$.
\end{lemma}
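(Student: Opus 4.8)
The plan is to take $\strategy'$ to be the \emph{pruned simulation} of $\strategy$: in every round, $\strategy'$ internally runs $\strategy$, computes the subset $\advAccount_t \subseteq \advAccount$ of adversarial wallets whose credentials $\strategy$ broadcasts, and then broadcasts only the single wallet $j^\star_t := \arg\min_{i \in \advAccount_t} S(\cred{i}{t}, \alpha_i)$ (staying silent whenever $\advAccount_t = \emptyset$ or $\strategy$ stays silent). The conceptual content is the observation that the leader of a round is the $\arg\min$ of the scores of all broadcast credentials, so including adversarial credentials whose scores are not smallest among the broadcast adversarial ones can never change that $\arg\min$; broadcasting only $j^\star_t$ therefore preserves the leader, and since the seed of the next round is the winning credential, it preserves all future seeds as well.

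First I would fix the coupling. Since $\strategy'$ controls the identical set $\advAccount$ of $n$ wallets with the identical $\VRF$ key pairs, and the round-$t$ seed is a deterministic function of the round-$(t-1)$ winner's credential, it suffices to prove by induction on $t$ that the seed $\seed{t}$ and the leader $\ell_t$ agree across the two executions. The base case is immediate because $\seed{1}$ is the same externally produced value. For the inductive step, once $\seed{t}$ agrees, every credential $\cred{i}{t}$ for $i \in \advAccount \cup \{\knownHonest\}$ agrees, $\opaqueHonest$'s unrevealed credential can be coupled to be identical, and hence $\strategy'$ can faithfully reconstruct $\strategy$'s entire state --- including every precomputed hypothetical future credential, which depends only on secret keys $\strategy'$ also holds --- and thus the set $\advAccount_t$. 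It remains to compare leaders: under $\strategy$ the leader is the wallet minimizing $S(\cdot,\cdot)$ over $\{i \in \advAccount_t\}$ together with the broadcasting honest wallets, while under $\strategy'$ it minimizes over $\{j^\star_t\}$ together with the same honest wallets; since $S(\cred{j^\star_t}{t}, \alpha_{j^\star_t})$ is by definition the minimum over $\advAccount_t$, both minimizations return the same wallet, so $\ell_t$ and $\seed{t+1} = \cred{\ell_t}{t}$ coincide. This closes the induction.

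Having coupled the two executions so that they produce the identical (random) sequence of leaders, and since $\Rew{\strategy}$ depends only on the indicator sequence $X_t$ recording whether $\ell_t$ is an adversarial wallet, we get $\Rew{\strategy'} = \Rew{\strategy}$. By construction $\strategy'$ uses the same number $n$ of wallets and broadcasts at most one credential of $\advAccount$ per round, which completes the proof.

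I do not expect a genuine obstacle here; the work is bookkeeping. The one point requiring care is to argue that suppressing the extra broadcasts never deprives $\strategy'$ of information it needs --- this holds precisely because the only externally observable effect of a broadcast is on the current leader and next seed, which we have shown are unchanged --- together with the routine handling of the measure-zero event of tied scores (broken by any fixed rule) and of the coupling of $\opaqueHonest$'s hidden credential across the two executions.
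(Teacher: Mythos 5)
Your argument is correct: the leader is determined solely by the minimum score among broadcast credentials, so replacing the broadcast set by its lowest-scoring adversarial member preserves the leader, hence the next seed, and by induction the entire coupled execution and reward. Note that the paper itself offers no proof of this lemma (it is imported as Observation 3.1 from \citealp{FHWY22}); your write-up reproduces exactly the intended argument from that source, including the relevant bookkeeping about state reconstruction and coupling.
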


The following conclusion is straightforward from the above lemmas.

\begin{theorem} \label{thm:RefeqUnref}
    The optimal adversarial reward in $\ParCSSPA{\vec{\alpha}}$ is at most the optimal adversarial reward in $\CSSPA$ for all $\alpha, \vec{\alpha}$ and $\beta$.
\end{theorem}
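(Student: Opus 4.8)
The plan is to prove \Cref{thm:RefeqUnref} by chaining the three structural reductions \Cref{thm:TwoHonest}, \Cref{thm:Broadcast1}, and \Cref{thm:Split} in an order that terminates exactly at $\CSSPA$. Each of these lemmas rewrites an arbitrary adversarial strategy in a more permissive game as a strategy in a more restrictive one without lowering the reward, so composing them turns any strategy of the general model into a strategy of $\CSSPA$ of at least the same reward. Concretely, it suffices to fix an arbitrary strategy $\strategy$ in $\ParCSSPA{\vec{\alpha}}$ and produce a strategy in $\CSSPA$ with reward at least $\Rew{\strategy}$; taking the supremum over $\strategy$ then yields the claimed inequality between optimal rewards.

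First I would apply \Cref{thm:TwoHonest} to collapse the honest population $\vec{\alpha}$ into two honest players $\knownHonest$ and $\opaqueHonest$ with stakes $\alpha_1 = \beta(1-\alpha)$ and $\alpha_2 = (1-\beta)(1-\alpha)$, obtaining a strategy $\strategy_1$ in $\ParCSSPA{(\alpha_1,\alpha_2)}$ with $\Rew{\strategy_1} = \Rew{\strategy}$; the honest side now matches that of $\CSSPA$ verbatim. Next I would apply \Cref{thm:Broadcast1} to get a strategy $\strategy_2$, on the same number of wallets, that broadcasts the credential of at most one adversarial wallet per round and induces the same leader sequence, so $\Rew{\strategy_2} = \Rew{\strategy_1}$; this matches the broadcasting rule of $\CSSPA$. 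Finally I would invoke \Cref{thm:Split} repeatedly: from $\strategy_2$ on $n_0$ wallets it produces a strategy on $2n_0$ wallets of weakly larger reward, and iterating gives a sequence of strategies on $2^j n_0$ wallets whose rewards are non-decreasing in $j$. Since the adversary in $\CSSPA$ is precisely the $n \to \infty$ equal-split limit of this refinement, the optimal reward of $\CSSPA$ upper bounds every term of the sequence, giving $\Rew{\strategy} = \Rew{\strategy_2} \le$ (optimal reward in $\CSSPA$) and closing the chain.

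The hard part will be the final limiting argument; the earlier steps are one-line appeals to prior lemmas. Two things need to be pinned down: (i) that the doubling construction of \Cref{thm:Split}, started from $\strategy_2$'s (possibly unequal) wallet configuration, converges to the uniform split into $n \to \infty$ wallets of stake $\alpha/n$ that $\CSSPA$ prescribes --- e.g.\ by first subdividing each wallet of $\strategy_2$ along a common fine grid so the configuration is already equal-split before iterating, or by checking that \Cref{thm:Split} splits each wallet into equal halves so that equal-split configurations are preserved under iteration; and (ii) a monotone-limit (lower-semicontinuity) statement ensuring the optimal reward of the limiting game $\CSSPA$ is at least the supremum of the rewards along the refinement sequence --- equivalently, reading ``the optimal reward in $\CSSPA$'' as $\sup_n$ over equal splits into $n$ wallets, for which the doubling sequence is cofinal. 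I would also be careful about the order of the reductions: \Cref{thm:Split} is stated for $\ParCSSPA{(\alpha_1,\alpha_2)}$ without the single-broadcast restriction, so one clean route is to perform all splitting first and impose the single-broadcast restriction (via \Cref{thm:Broadcast1}) only at the very end, after which the resulting strategy lies in $\CSSPA$.
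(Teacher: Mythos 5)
Your proposal matches the paper's intended argument: the paper gives no explicit proof, stating only that the theorem is "straightforward from the above lemmas," i.e.\ precisely the chaining of \Cref{thm:TwoHonest}, \Cref{thm:Split}, and \Cref{thm:Broadcast1} that you carry out. Your additional care about the order of reductions and the $n\to\infty$ equal-split limit simply fills in details the paper leaves implicit, so the approach is essentially the same.
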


\section{The Omniscient Adversary} \label{sec:Omni}
\citet{FHWY22} abstract the game faced by an omniscient adversary into a Galton-Watson branching process. The branching process maintains a tree with nodes corresponding to possible choices the adversary can make across rounds. A node $\cred{i_t}{t}$ in level $t$ of the tree corresponds to a potential winner $i_t$ in round $t$ for some sequence of leaders $i_1, i_2, \dots, i_{t-1}$ with credentials $\cred{i_1}{1}, \cred{i_2}{2}, \dots, \cred{i_{t-1}}{t-1}$ in the first $t-1$ rounds. A node $\cred{i_{t+1}}{t+1}$ in level $t+1$ is a child of $\cred{i_t}{t}$ if $i_{t+1}$ belongs to the set of potential winners $\potWinners_{t+1}(\seed{t+1})$ assuming $i_t$ gets elected in round $t$ and $\seed{t+1} = \cred{i_t}{t}$. By \Cref{thm:PrPotentialWinner}, the probability that a node will have exactly $i^* + 1$ children equals $\alpha^{i^*} \, (1-\alpha)$. If node $\cred{i_t}{t}$ at level $t$ has exactly one child $\cred{i_{t+1}}{t+1}$ in the choice tree, this corresponds to the adversarial potential winners $\advPot_{t+1}(\seed{t+1})$ being empty for $\seed{t+1} = \cred{i_t}{t}$, which signifies a stopping time. The tree stops branching at $\cred{i_{t}}{t}$.

The height $\fst$ of the choice tree corresponds to the sequence of leaders $i_1, \dots, i_{\fst}$ that causes the maximum delay in the first stopping time $\fst$. We show upper bounds on the expected height $\fst$ of the choice tree.

\begin{definition}[Omniscient Choice Tree] \label{def:OmniChoiceTree}
The omniscient choice tree $\omniChoiceTree$ is built by the following stochastic process:
\begin{enumerate}
    \item Level $0$ contains the root $q_0$ of the tree $\omniChoiceTree$.
    \item For each node $q$ at level $t \geq 0$, $q$ has $i^* + 1$ children with probability $\alpha^{i^*} \, (1-\alpha)$ for all $i^* \geq 0$.
    \item A node stops branching if it is the only child of its parent.
    \item $\omniChoiceTree$ becomes extinct at height $\fst$ when all nodes at level $\fst$ have stopped branching.
\end{enumerate}
    
\end{definition}

\subsection{Extinction of the Choice Tree}
Let $\event{t}$ be the event that the choice tree does not terminate on or before round $t$, i.e, $\fst > t$, and $\nevent{t}$ be its complement. Let $\prevent{t}$ and $\prnevent{t} = 1 - \prevent{t}$ be the probabilities of $\event{t}$ and $\nevent{t}$ respectively. Indeed $\prevent{0} = 1$ and $\prnevent{0} = 0$. We compute a recursive relation between $\prevent{t-1}$ and $\prevent{t}$.

\begin{lemma} \label{thm:Recursive}
    The probabilities $\big(\prevent{t} \big)_{t \in \N \cup \{0\}}$ satisfy $\prevent{0} = 1$ and
    $$\prevent{t} = \frac{\alpha(2-\alpha)\prevent{t-1} - \alpha(1-\alpha)\prevent{t-1}^2}{(1-\alpha) + \alpha \prevent{t-1}}$$
    for all $t \geq 1$.
\end{lemma}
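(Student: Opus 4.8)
The plan is to exploit the self-similar branching structure of $\omniChoiceTree$ (Definition~\ref{def:OmniChoiceTree}) and condition on the number of children of the root. The key structural observation is that for any \emph{live} node $q$ of $\omniChoiceTree$ — that is, any node that is not the unique child of its parent, the root in particular — the subtree hanging below $q$ is an independent copy of $\omniChoiceTree$ itself: $q$ spawns $i^\ast+1$ children with probability $\alpha^{i^\ast}(1-\alpha)$, and each such child is live iff $q$ had at least two children, precisely the rule governing the root. Hence, writing $k$ for the number of children of the root (so $\Pr(k=j)=\alpha^{j-1}(1-\alpha)$ for $j\ge 1$), the height $\fst$ of $\omniChoiceTree$ admits the description: if $k=1$, the single child is the unique child of its parent, so it stops branching and $\fst=1$; if $k=j\ge 2$, all $j$ children are live, each heads an independent copy of $\omniChoiceTree$ of height $\fst^{(1)},\dots,\fst^{(j)}$ with $\fst^{(i)}\sim\fst$ i.i.d., and $\fst=1+\max_{i\le j}\fst^{(i)}$.

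Next I would convert this into the claimed recursion for $\prevent{t}=\Pr(\event{t})=\Pr(\fst>t)$. The base case $\prevent{0}=1$ is immediate. Fix $t\ge 1$ and condition on $k$: when $k=1$ we have $\fst=1\le t$, so this case contributes nothing; when $k=j\ge 2$, independence gives
$$\Pr\big(\fst>t \,\big|\, k=j\big)=\Pr\big(\exists\, i\le j:\ \fst^{(i)}>t-1\big)=1-\big(1-\prevent{t-1}\big)^{j},$$
since $\Pr(\fst^{(i)}>t-1)=\prevent{t-1}$. Summing over $j$,
$$\prevent{t}=\sum_{j\ge 2}\alpha^{j-1}(1-\alpha)\Big(1-\big(1-\prevent{t-1}\big)^{j}\Big).$$
Both series here are geometric-type and converge because $0\le \alpha(1-\prevent{t-1})<1$: evaluating them yields $\sum_{j\ge2}\alpha^{j-1}(1-\alpha)=\alpha$ and $\sum_{j\ge 2}\alpha^{j-1}(1-\alpha)(1-\prevent{t-1})^{j}=\dfrac{\alpha(1-\alpha)(1-\prevent{t-1})^{2}}{1-\alpha+\alpha\,\prevent{t-1}}$, so that $\prevent{t}=\alpha-\dfrac{\alpha(1-\alpha)(1-\prevent{t-1})^{2}}{1-\alpha+\alpha\,\prevent{t-1}}$.

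Finally I would clear denominators: over the common denominator $(1-\alpha)+\alpha\,\prevent{t-1}$ the numerator is $\alpha\big[(1-\alpha+\alpha\,\prevent{t-1})-(1-\alpha)(1-\prevent{t-1})^{2}\big]$, and expanding the bracket collapses it to $\prevent{t-1}(2-\alpha)-\prevent{t-1}^{2}(1-\alpha)$, giving exactly
$$\prevent{t}=\frac{\alpha(2-\alpha)\prevent{t-1}-\alpha(1-\alpha)\prevent{t-1}^{2}}{(1-\alpha)+\alpha\,\prevent{t-1}}.$$
The main obstacle is not the algebra — which is a one-line generating-function computation — but getting the structural step exactly right: arguing carefully that the subtree below a live node is a faithful fresh copy of $\omniChoiceTree$, and isolating the degenerate case of a root with a single (hence immediately stopped) child, since that edge case is precisely what produces the effective shift in the offspring law and is easy to mishandle when passing to $\max_i \fst^{(i)}$.
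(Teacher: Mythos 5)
Your proposal is correct and follows essentially the same route as the paper: condition on the number of children of the root, use the i.i.d.\ self-similar structure of the subtrees below each child, and sum the resulting geometric series. The only cosmetic difference is that you work directly with $\prevent{t}$ (via $1-(1-\prevent{t-1})^{j}$) whereas the paper writes the recursion for the complement $\prnevent{t}$ and then substitutes $\prnevent{t}=1-\prevent{t}$; the algebra is equivalent.
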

\begin{proof}
    We establish a recursion on $\big(\prnevent{t} \big)_{t \in \N \cup \{0\}}$ instead. Let $\potWinners_q$ be a random variable denoting the number of children of a node $q \in \omniChoiceTree$ and $\h_q$ be the height of the sub-tree below $q$.

    $\nevent{t}$ is the event that the choice tree goes extinct before or as soon as reaching a height $t$ from the root $q_0$. This occurs when either $|\potWinners_{q_0}| = 1$ and the game ends in the first round or every child $q_j$ of $q_0$ satisfies $\h_{q_j} \leq t-1$ for $1 \leq j \leq |W_{q_0}|$. Hence,
    \begin{equation} \label{eqn:RecPr}
        \prnevent{t} = Pr(|W_{q_0}| = 1) + \sum_{i^* = 1}^{\infty} Pr(|W_{q_0}| = i^* + 1) \times Pr(\bigcup_{1 \leq j \leq i^* + 1} (\h_{q_j} \leq t-1) | |W_{q_0}| = i^*+1)
    \end{equation}
    The evolution of the choice tree under distinct nodes are independent and identically distributed random processes. Therefore,
    \begin{equation*}
        \notag
        \begin{split}
            Pr(\bigcup_{1 \leq j \leq i^* + 1} (\h_{q_j} \leq t-1) | |W_{q_0}| = i^*+1) &= Pr(\h_q \leq t-1)^{i^* + 1} \\
            &= \prnevent{t-1}^{i^* + 1}
        \end{split}
    \end{equation*}
    The probability that $|W_{q_0}|$ equals $i^* + 1$ equals $\alpha^{i^*} \, (1-\alpha)$ (by \Cref{def:OmniChoiceTree}).

    Plugging all of the above into \Cref{eqn:RecPr},
    \begin{equation*}
        \notag
        \begin{split}
            \prnevent{t} &= (1-\alpha) + \sum_{i^* = 1}^{\infty} \alpha^{i^*} \, (1-\alpha) \times \prnevent{t-1}^{i^* + 1} \\
            &= (1-\alpha) + \frac{\alpha (1-\alpha) \prnevent{t-1}^2}{1 - \alpha \prnevent{t-1}}
        \end{split}
    \end{equation*}
    Substituting $\prnevent{t} = 1 - \prevent{t}$ and $\prnevent{t-1} = 1 - \prevent{t-1}$ completes the proof.
\end{proof}
 \noindent \Cref{thm:PrStoppingTime} is a direct consequence of \Cref{thm:Recursive}. 
 \begin{proof}[Proof of \cref{thm:PrStoppingTime}]
     We know $\prevent{0} = 1$. $\prevent{1}$ and $\prevent{2}$ can be computed recursively to be equal to $\alpha$ and $\alpha^2 \cdot \tfrac{2-2\alpha +\alpha^2}{1 - \alpha + \alpha^2}$ respectively.
     
     We induct on $t$ for $t \geq 3$. Let $\prevent{t} \leq \alpha^2 \cdot \tfrac{2 - 2\alpha + \alpha^2}{1-\alpha+\alpha^2} \cdot \big(\alpha \cdot  \tfrac{2-\alpha}{1-\alpha}\big)^{t-2}$. Then,
     \begin{equation*}
         \notag
         \begin{split}
             \prevent{t+1} &= \frac{\alpha(2-\alpha)\prevent{t} - \alpha(1-\alpha)\prevent{t}^2}{(1-\alpha) + \alpha \prevent{t}} \\
            &\leq  \big( \alpha \cdot \tfrac{2-\alpha}{1-\alpha} \big) \prevent{t} \\
            &\leq \big( \alpha \cdot \tfrac{2-\alpha}{1-\alpha} \big) \times \alpha^2 \cdot \tfrac{2 - 2\alpha + \alpha^2}{1-\alpha+\alpha^2} \cdot \big(\alpha \cdot  \tfrac{2-\alpha}{1-\alpha}\big)^{t-2} \\
            &= \alpha^2 \cdot \tfrac{2 - 2\alpha + \alpha^2}{1-\alpha+\alpha^2} \cdot \big(\alpha \cdot  \tfrac{2-\alpha}{1-\alpha}\big)^{t-1}
         \end{split}
     \end{equation*}
 \end{proof}

\subsection{Proof of \Cref{thm:OmniSummary}}

Let $\omniOPT$ denote the optimal strategy for the omniscient adversary that broadcasts the credentials of the wallets in the longest path in $\omniChoiceTree$ starting at the root $q_0$.

\begin{lemma} \label{thm:NoUB}
    There exists a constant $\omniconst \approx 0.38$ such that expected first stopping time $\E[\fst]$ of $\omniOPT$ is unbounded for all $\alpha > \omniconst$. The optimal reward $\RewOmni{\omniOPT}$ for the omniscient adversary equals $1$.
\end{lemma}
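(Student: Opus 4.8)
The plan is to read off from the recursion of \Cref{thm:Recursive} the exact value of $\alpha$ above which the branching tree $\omniChoiceTree$ survives forever with positive probability, and then to argue that survival forces the reward to be $1$. Write the recursion as $\prevent{t}=g(\prevent{t-1})$ with $\prevent{0}=1$ and $g(y):=\frac{\alpha(2-\alpha)y-\alpha(1-\alpha)y^2}{(1-\alpha)+\alpha y}$, and recall $\E[\fst]=\sum_{t\ge 0}\Pr(\fst>t)=\sum_{t\ge 0}\prevent{t}$. I would first solve $g(y)=y$: clearing the positive denominator gives $y\big([(1-\alpha)-\alpha(2-\alpha)]+\alpha(2-\alpha)\,y\big)=0$, so the fixed points in $[0,1]$ are $y=0$ and $y^\star:=1-\frac{1-\alpha}{\alpha(2-\alpha)}=-\frac{\alpha^2-3\alpha+1}{\alpha(2-\alpha)}$. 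Hence $y^\star>0$ exactly when $\alpha^2-3\alpha+1<0$, i.e.\ when $\alpha>\omniconst:=\frac{3-\sqrt{5}}{2}\approx 0.382$, while $y^\star<1$ for every $\alpha<1$. I would also record two elementary facts: $g$ is strictly increasing on $[0,1)$ (a one-line derivative check, the numerator of $g'$ factoring as $\alpha(1-\alpha)\big[(2-\alpha)-2(1-\alpha)y-\alpha y^2\big]\ge 0$ on $[0,1]$), and $g(1)=\alpha$ with $\alpha>y^\star$ because $(1-\alpha)\big(1-\alpha(2-\alpha)\big)=(1-\alpha)^3>0$.

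\textbf{Step 2 ($\E[\fst]$ is unbounded for $\alpha>\omniconst$).} Fix $\alpha>\omniconst$, so $0<y^\star<1$. Starting from $\prevent{0}=1$ we get $\prevent{1}=g(1)=\alpha\in(y^\star,1)$, and inductively, since $y^\star$ is a fixed point and $g$ is strictly increasing on $[0,1)$, $\prevent{t}>y^\star$ implies $\prevent{t+1}=g(\prevent{t})>g(y^\star)=y^\star$; thus $\prevent{t}>y^\star$ for all $t$. Consequently $\E[\fst]=\sum_{t\ge 0}\prevent{t}\ge\sum_{t\ge 0}y^\star=\infty$, proving the first claim. Moreover $\Pr(\fst=\infty)=\lim_{t\to\infty}\Pr(\fst>t)\ge y^\star>0$, so $\omniChoiceTree$ is infinite with positive probability.

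\textbf{Step 3 (the reward equals $1$).} Since the offspring law $\Pr(N=j)=\alpha^{j-1}(1-\alpha)$ is finite almost surely, $\omniChoiceTree$ is a.s.\ locally finite, so when it is infinite it contains an infinite path (K\"onig's lemma); let $\omniOPT$ broadcast along such a path, and along the longest path when the tree is finite. Because every forced stopping time produces an \emph{unbiased} seed and all later credentials are fresh $\VRF$ outputs, a run decomposes into i.i.d.\ segments between consecutive forced stopping times, each distributed as a fresh $\omniChoiceTree$ and handled identically by $\omniOPT$ (this is the reset underlying \Cref{thm:ResetST}). Let $N$ be the number of finite segments before the first infinite one: $N$ is geometric with success probability $\Pr(\fst=\infty)>0$, hence a.s.\ finite, and these segments occupy an a.s.\ finite total of $S$ rounds, over which $\sum_{t\le S}X_t=S-N$ (the omniscient adversary has $X_t=1$ on all but the last round of each finite segment). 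From round $S+1$ on, the adversary is in the infinite segment and $X_t=1$ for every $t$, so for $T>S$ we get $\frac1T\sum_{t=1}^{T}X_t=\frac{(S-N)+(T-S)}{T}=1-\frac{N}{T}\to 1$. Hence $\liminf_{T\to\infty}\frac1T\sum_{t=1}^{T}X_t=1$ almost surely, so $\RewOmni{\omniOPT}=\E[1]=1$; and since no strategy can exceed $1$, $\omniOPT$ is optimal.

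\textbf{Main obstacle.} The delicate point is Step 3. One cannot simply invoke $\RewOmni{\strategy}=1-1/\E[\fst]$, since that identity (through \Cref{thm:ResetST}) presupposes a finite expected stopping time, which is exactly what fails here. The reward must instead be computed directly from its definition as an expected $\liminf$ of running averages, and the quantitative input that makes this work is that, because survival has positive probability, the adversary almost surely reaches an infinite segment after only finitely many finite segments, so those finite segments contribute a vanishing fraction of the rounds.
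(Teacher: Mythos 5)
Your proof is correct, and it splits into a part that mirrors the paper and a part that genuinely departs from it. For the unboundedness of $\E[\fst]$ you do essentially what the paper does: the paper fixes an arbitrary $\delta \in \bigl(0, -\tfrac{1-3\alpha+\alpha^2}{\alpha(2-\alpha)}\bigr)$ and shows $\prevent{t}\geq\delta$ by induction on the recursion of \Cref{thm:Recursive}, using monotonicity of the recursion in $\prevent{t-1}$; you instead identify that upper endpoint as the nonzero fixed point $y^\star$ of the recursion and show $\prevent{t}>y^\star$ by the same monotonicity-plus-induction argument, which is the same mechanism with the fixed point made explicit (and your derivative and fixed-point computations check out). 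Where you diverge is the second claim. The paper's proof is a one-liner: it treats $\RewOmni{\omniOPT}=1-\tfrac{1}{\E[\fst]}$ (asserted when the omniscient adversary is introduced, via \Cref{thm:ResetST}) as given and plugs in $\E[\fst]=\infty$. You decline to use that identity on the grounds that \Cref{thm:ResetST} is stated for strategies with finite expected stopping time, and instead argue directly from the $\liminf$ definition of reward: survival probability at least $y^\star>0$, K\"onig's lemma to extract an infinite path in an infinite locally finite choice tree, an i.i.d.\ renewal decomposition into segments between forced stopping times, a geometric number of finite segments before the first infinite one, and then the running average tends to $1$ almost surely. This buys rigor exactly at the point where the paper is most informal (extending a formula derived under a finiteness hypothesis to the infinite-expectation regime, i.e.\ reading $1/\infty=0$), at the cost of extra probabilistic machinery; the paper's route is shorter because it leans on the renewal formula it has already adopted as the definition of the omniscient reward.
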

\begin{proof}
    Denote the smaller root of $1 - 3x + x^2$ by $\omniconst \approx 0.38$. Choose an arbitrary $\delta \in \big( 0, -\tfrac{1- 3\alpha + \alpha^2}{\alpha \, (2-\alpha)} \big)$. Since the larger root of $1-3x+x^2$ is greater than $1$, the interval is non-degenerate for all $\omniconst \leq \alpha \leq 1$.
    
    Suppose we show $\prevent{t} \geq \delta$ for all $t \geq 0$. Then,
    $$\E[\fst] = \sum_{t = 0}^{\infty} \prevent{t} \geq \sum_{t = 0}^{\infty} \delta$$
    which is unbounded, as needed.
    
    We will prove $\prevent{t} \geq \delta$ by induction on $t$. Indeed, $\prevent{0} = 1 \geq \delta$. Suppose that $\prevent{t-1} \geq \delta$.
    \begin{equation*}
        \notag
        \begin{split}
            \prevent{t} &= \frac{\alpha(2-\alpha)\prevent{t-1} - \alpha(1-\alpha)\prevent{t-1}^2}{(1-\alpha) + \alpha \prevent{t-1}} \\
            &\geq \frac{\alpha(2-\alpha)\delta - \alpha(1-\alpha)\delta^2}{(1-\alpha) + \alpha \delta} \\
            &= \delta \cdot \Big(\frac{\alpha (2-\alpha) - \alpha (1-\alpha) \delta}{(1-\alpha) + \alpha \delta} \Big) \\
            & \geq \delta
        \end{split}
    \end{equation*}
    The inequality in the second line holds since $\prevent{t}$ is monotonously increasing in $\prevent{t-1}$. For the choice of $\delta$, $\Big(\frac{\alpha (2-\alpha) - \alpha (1-\alpha) \delta}{(1-\alpha) + \alpha \delta} \Big) \geq 1$ (follows by simple rearrangement) and the last inequality follows.

    Since $\E[\fst]$ is unbounded,
    $$\RewOmni{\omniOPT} = 1 - \frac{1}{\E[\fst]} = 1$$
\end{proof}

\begin{lemma} \label{thm:SecondOrderBound}
    Let $\omniconst \approx 0.38$ be the smaller root of $1-3x+x^2$. For a stake $\alpha \leq \omniconst$, the expected first stopping time $\E[\fst]$ of the optimal adversarial strategy $\omniOPT$ is at most $\tfrac{1- 3\alpha + 3\alpha^2 - 3 \alpha^3}{(1-3\alpha+\alpha^2) \, (1-\alpha+\alpha^2)}$. The optimal adversarial reward $\RewOmni{\omniOPT}$ is at most $\alpha \cdot \big( \tfrac{1 - 2\alpha + \alpha^2 - \alpha^3}{1- 3\alpha + 3\alpha^2 - 3 \alpha^3} \big)$.
\end{lemma}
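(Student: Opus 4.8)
The plan is to bound $\E[\fst]$ directly from the tail-sum identity $\E[\fst] = \sum_{t=0}^{\infty} \prevent{t}$ and then convert that into a reward bound via $\RewOmni{\omniOPT} = 1 - \tfrac{1}{\E[\fst]}$ (established in \Cref{sec:OmniMain}). First I would record the exact values $\prevent{0}=1$, $\prevent{1}=\alpha$, $\prevent{2} = \alpha^2 \cdot \tfrac{2-2\alpha+\alpha^2}{1-\alpha+\alpha^2}$ coming from the recursion of \Cref{thm:Recursive}, together with the geometric upper bound $\prevent{t} \le \alpha^2 \cdot \tfrac{2-2\alpha+\alpha^2}{1-\alpha+\alpha^2} \cdot \big(\alpha \cdot \tfrac{2-\alpha}{1-\alpha}\big)^{t-2}$ for $t \ge 2$ from \Cref{thm:PrStoppingTime}.

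Next I would sum the geometric tail. The ratio $\alpha \cdot \tfrac{2-\alpha}{1-\alpha}$ is strictly below $1$ precisely when $\alpha(2-\alpha) < 1-\alpha$, i.e. when $1-3\alpha+\alpha^2 > 0$, i.e. when $\alpha < \omniconst$; this is exactly where the hypothesis is used. Under it, $\sum_{t \ge 2}\big(\alpha \cdot \tfrac{2-\alpha}{1-\alpha}\big)^{t-2} = \tfrac{1-\alpha}{1-3\alpha+\alpha^2}$, so
\[
\E[\fst] \le 1 + \alpha + \alpha^2 \cdot \frac{2-2\alpha+\alpha^2}{1-\alpha+\alpha^2}\cdot\frac{1-\alpha}{1-3\alpha+\alpha^2}.
\]
Putting the right-hand side over the common denominator $(1-\alpha+\alpha^2)(1-3\alpha+\alpha^2)$ and using the identity $(1+\alpha)(1-\alpha+\alpha^2) = 1+\alpha^3$ collapses the numerator to $1-3\alpha+3\alpha^2-3\alpha^3$, which gives the claimed bound on $\E[\fst]$. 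Call this upper bound $\overline{E}$.

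Finally, for the reward: since $x \mapsto 1 - \tfrac{1}{x}$ is increasing on $x > 0$, and since the denominators $1-\alpha+\alpha^2$ and $1-3\alpha+\alpha^2$ are both positive for $\alpha \le \omniconst$ (so $\overline{E}$ is a genuine positive real), we obtain $\RewOmni{\omniOPT} = 1 - \tfrac{1}{\E[\fst]} \le 1 - \tfrac{1}{\overline{E}} = \tfrac{\overline{E}-1}{\overline{E}}$. Computing $\overline{E}-1$ by subtracting $(1-\alpha+\alpha^2)(1-3\alpha+\alpha^2) = 1-4\alpha+5\alpha^2-4\alpha^3+\alpha^4$ from the numerator $1-3\alpha+3\alpha^2-3\alpha^3$ yields $\alpha(1-2\alpha+\alpha^2-\alpha^3)$; dividing $\overline{E}-1$ by $\overline{E}$ cancels the factor $(1-\alpha+\alpha^2)(1-3\alpha+\alpha^2)$ and leaves $\alpha \cdot \tfrac{1-2\alpha+\alpha^2-\alpha^3}{1-3\alpha+3\alpha^2-3\alpha^3}$, as required.

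There is no deep obstacle here: the content is (a) noting that the geometric series converges exactly on the stated parameter range $\alpha < \omniconst$, and (b) the bookkeeping of the polynomial simplifications (the sum-of-cubes identity and one quartic expansion). The only point needing a moment of care is confirming positivity of the denominators of $\overline{E}$ throughout $\alpha \in (0,\omniconst]$, so that inverting $\E[\fst] \le \overline{E}$ into $1 - \tfrac{1}{\E[\fst]} \le 1 - \tfrac{1}{\overline{E}}$ is legitimate; this is immediate since $1-\alpha+\alpha^2 > 0$ always and $1-3\alpha+\alpha^2 > 0$ is precisely the hypothesis.
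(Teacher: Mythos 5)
Your proposal is correct and follows essentially the same route as the paper: tail-sum $\E[\fst]=\sum_{t\ge 0}\prevent{t}$, the exact values of $\prevent{0},\prevent{1}$ and the geometric bound from \Cref{thm:PrStoppingTime}, summing the geometric series (convergent exactly when $1-3\alpha+\alpha^2>0$), and then applying $\RewOmni{\omniOPT}=1-\tfrac{1}{\E[\fst]}$ with the same polynomial simplifications. The only difference is that you spell out the algebra and the monotonicity of $x\mapsto 1-\tfrac{1}{x}$, which the paper leaves implicit.
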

\begin{proof}
    From \Cref{thm:PrStoppingTime}, $\prevent{0} = 1$, $\prevent{1} = \alpha$ and $\prevent{t} \leq \alpha^2 \cdot \tfrac{2 - 2\alpha + \alpha^2}{1-\alpha+\alpha^2} \cdot \big(\alpha \cdot  \tfrac{2-\alpha}{1-\alpha}\big)^{t-2}$.

    \begin{equation*}
        \notag
        \begin{split}
            \E[\fst] &= \sum_{t = 0}^{\infty} \prevent{t} \\
            &\leq 1 + \alpha + \sum_{t = 2}^{\infty} \alpha^2 \cdot \tfrac{2 - 2\alpha + \alpha^2}{1-\alpha+\alpha^2} \cdot \big(\alpha \cdot  \tfrac{2-\alpha}{1-\alpha}\big)^{t-2} \\
            &= \tfrac{1- 3\alpha + 3\alpha^2 - 3 \alpha^3}{(1-3\alpha+\alpha^2) \, (1-\alpha+\alpha^2)}
        \end{split}
    \end{equation*}
    Further,
    $$\RewOmni{\omniOPT} = 1 - \frac{1}{\E[\fst]} \leq \alpha \cdot \big( \tfrac{1 - 2\alpha + \alpha^2 - \alpha^3}{1- 3\alpha + 3\alpha^2 - 3 \alpha^3} \big)$$
\end{proof}
\Cref{thm:NoUB} and \Cref{thm:SecondOrderBound} conclude the proof of \Cref{thm:OmniSummary}.

\begin{remark}
    \citet{FHWY22} show that $\E[\fst] \leq \tfrac{1-\alpha}{1-3\alpha +\alpha^2}$ and $\RewOmni{\omniOPT} \leq \alpha \cdot \tfrac{2-\alpha}{1-\alpha}$. The bounds for both $\E[\fst]$ and $\RewOmni{\omniOPT}$ in \Cref{thm:SecondOrderBound} are tighter. See \Cref{fig:Omni} for a comparison between the two bounds.
\end{remark}

\Cref{thm:Recursive} gives an explicit method to compute $\prevent{t}$ for all $t \in \N \cup \{0\}$ and thus, $\E[\fst]$ and $\RewOmni{\omniOPT}$. However, we believe $\prevent{t}$ does not admit a closed form solution. Instead, for a sufficiently large $\round_{\delta}$ such that $\E[\fst] - \sum_{t = 0}^{\round_{\delta}} \prevent{t} < \delta$, we compute $\big(\prevent{t} \big)_{0 \leq t \leq \round_{\delta}}$ and $\sum_{t = 0}^{\round_{\delta}} \prevent{t}$ to get a tight bound on $\E[\fst]$ up to an additive error $\delta$.

\begin{lemma}
    For $\alpha < 0.38$, $\delta = 10^{-7}$ and $\round_{\delta} = 3000$,
    $$\E[\fst] - \sum_{t = 0}^{\round_{\delta}} \prevent{t} < \delta \text{.}$$
\end{lemma}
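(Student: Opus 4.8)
The plan is to identify the truncation error with the tail of a geometric series and then verify the resulting closed-form bound numerically. First recall, as in the proof of \Cref{thm:NoUB}, that $\E[\fst] = \sum_{t=0}^{\infty}\prevent{t}$ (valid since $\alpha<\omniconst$ makes this series converge), so the quantity to bound is exactly the tail $\sum_{t=\round_\delta+1}^{\infty}\prevent{t}$. By \Cref{thm:PrStoppingTime}, for every $t\ge 2$ we have $\prevent{t}\le \alpha^2\cdot\tfrac{2-2\alpha+\alpha^2}{1-\alpha+\alpha^2}\cdot\rho^{\,t-2}$, where $\rho := \alpha\cdot\tfrac{2-\alpha}{1-\alpha}$. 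The hypothesis $\alpha<\omniconst$ is precisely the condition $1-3\alpha+\alpha^2>0$, which rearranges to $\rho<1$, so the tail is dominated by a convergent geometric series:
\[
\E[\fst]-\sum_{t=0}^{\round_\delta}\prevent{t}\;=\;\sum_{t=\round_\delta+1}^{\infty}\prevent{t}\;\le\;\alpha^2\cdot\frac{2-2\alpha+\alpha^2}{1-\alpha+\alpha^2}\cdot\frac{\rho^{\,\round_\delta-1}}{1-\rho}.
\]

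It then remains to show the right-hand side is below $\delta=10^{-7}$ for all $\alpha\in[0,0.38]$ with $\round_\delta=3000$. I would reduce to the endpoint $\alpha=0.38$ by monotonicity: $\rho$ is strictly increasing in $\alpha$ on $[0,\omniconst)$ (the derivative of $\alpha\mapsto\alpha(2-\alpha)/(1-\alpha)$ has numerator $2-2\alpha+\alpha^2=1+(1-\alpha)^2>0$); the map $\rho\mapsto\rho^{n}/(1-\rho)$ is increasing on $(0,1)$ for $n\ge 1$ (its derivative has numerator $\rho^{n-1}\,(n-(n-1)\rho)>0$); $\alpha^2$ is increasing; and $\tfrac{2-2\alpha+\alpha^2}{1-\alpha+\alpha^2}=1+\tfrac{1-\alpha}{1-\alpha+\alpha^2}\le 2$. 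Hence the displayed bound is at most $2\alpha^2\,\rho^{\,\round_\delta-1}/(1-\rho)$, a product of nonnegative nondecreasing functions of $\alpha$, so it is maximized at $\alpha=0.38$. There $\rho=0.6156/0.62\approx 0.99290$, so $\rho^{2999}=e^{2999\ln\rho}<e^{-21}$ and $1/(1-\rho)<141$, giving a bound of roughly $2\cdot(0.38)^2\cdot e^{-21}\cdot 141\approx 2\times 10^{-8}<10^{-7}$. In the write-up I would replace $\rho^{2999}<e^{-21}$ and $1/(1-\rho)<141$ by explicit rational estimates sufficient to make the final inequality rigorous.

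The only step requiring care is this last estimate: as $\alpha\to\omniconst\approx 0.382$ the factor $1/(1-\rho)$ blows up, so the argument works only because $0.38$ is bounded away from $\omniconst$, keeping $\rho$ below $0.993$ and hence $\rho^{2999}$ astronomically small. The large gap ($e^{-21}\ll 10^{-7}/141$) is exactly what makes $\round_\delta=3000$ comfortably sufficient; everything else is routine.
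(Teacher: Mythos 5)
Your proposal is correct and follows essentially the same route as the paper: write the error as the tail $\sum_{t=\round_\delta+1}^{\infty}\prevent{t}$, bound it via \Cref{thm:PrStoppingTime} by the geometric sum $\alpha^2\cdot\tfrac{2-2\alpha+\alpha^2}{1-\alpha+\alpha^2}\cdot\tfrac{1-\alpha}{1-3\alpha+\alpha^2}\cdot\bigl(\alpha\cdot\tfrac{2-\alpha}{1-\alpha}\bigr)^{\round_\delta-1}$ (your $1/(1-\rho)$ is exactly the paper's $\tfrac{1-\alpha}{1-3\alpha+\alpha^2}$), and then verify numerically at $\alpha=0.38$ using monotonicity in $\alpha$. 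Your explicit handling of the monotonicity (bounding the non-monotone factor $\tfrac{2-2\alpha+\alpha^2}{1-\alpha+\alpha^2}$ by $2$) is a small tidying of a step the paper merely asserts, but the argument is otherwise the same.
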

\begin{proof}
    \begin{equation*}
        \notag
        \begin{split}
            \E[\fst] - \sum_{t = 0}^{\round_{\delta}} \prevent{t} &= \sum_{t = \round_{\delta} + 1}^{\infty} \prevent{t} \\
            &\leq \sum_{t = \round_{\delta} + 1}^{\infty} \alpha^2 \cdot \tfrac{2 - 2\alpha + \alpha^2}{1-\alpha+\alpha^2} \cdot \big(\alpha \cdot  \tfrac{2-\alpha}{1-\alpha}\big)^{t-2} \\
            &= \alpha^2 \cdot \tfrac{2 - 2\alpha + \alpha^2}{1-\alpha+\alpha^2} \cdot \tfrac{1-\alpha}{1-3\alpha + \alpha^2} \cdot \big(\alpha \cdot \tfrac{2-\alpha}{1-\alpha}\big)^{\round_{\delta} -1}
        \end{split}
    \end{equation*}
    The inequality follows from \Cref{thm:PrStoppingTime}. We will verify that $\alpha^2 \cdot \tfrac{2 - 2\alpha + \alpha^2}{1-\alpha+\alpha^2} \cdot \tfrac{1-\alpha}{1-3\alpha + \alpha^2} \cdot \big(\alpha \cdot \tfrac{2-\alpha}{1-\alpha}\big)^{\round_{\delta} -1} < 10^{-7}$ for $\alpha = 0.38$. Since the above expression is monotonically increasing in $\alpha$, we have $\E[\fst] - \sum_{t = 0}^{\round_{\delta}} \prevent{t} < 10^{-7}$ for all $\alpha \leq 0.38$. Substituting $\round_{\delta} = 3000$, we have $\big(\alpha \cdot \tfrac{2-\alpha}{1-\alpha}\big)^{\round_{\delta} - 1} < 10^{-9}$ and $\alpha^2 \cdot \tfrac{2 - 2\alpha + \alpha^2}{1-\alpha+\alpha^2} \cdot \tfrac{1-\alpha}{1-3\alpha + \alpha^2} < 37$. Combining the two inequalities, we get the required result.
\end{proof}

The optimal omniscient reward satisfies $$\RewOmni{\omniOPT} = 1 - \frac{1}{\E[\fst]} \leq 1 - \frac{1}{\sum_{t=0}^{3000} \prevent{t} + 10^{-7}}$$
We plot the non-closed form bound thus obtained in \Cref{fig:Omni}.

\begin{figure}[ht]
    \centering
    \includegraphics[width = 0.8\textwidth]{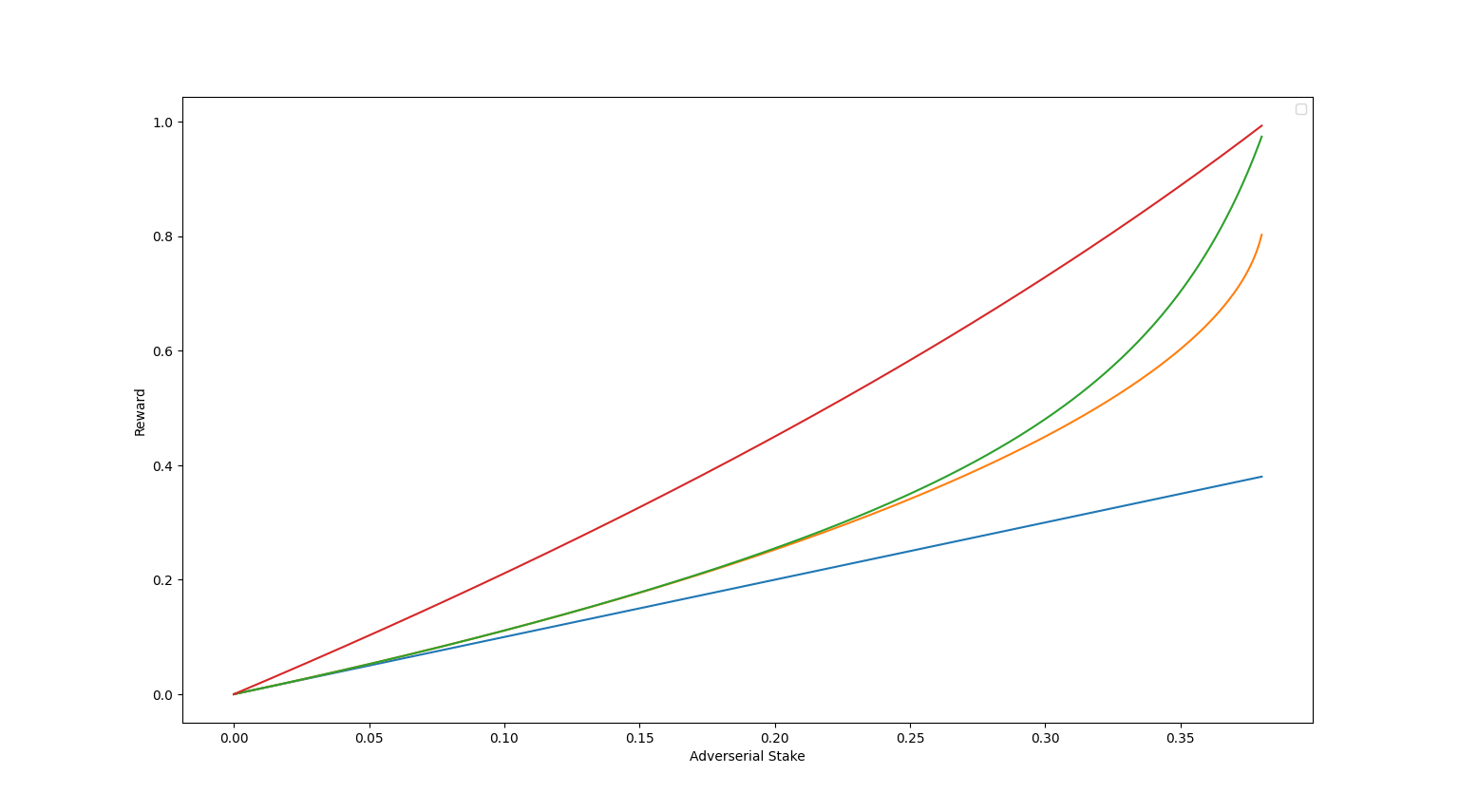}
    \caption{Omniscient adversarial reward vs adversarial stake. The blue line maps the reward of the honest strategy while the red curve maps the upper bound on the omniscient adversarial reward from \citet{FHWY22}. The yellow and the green curve are the non-closed form upper bound and the upper bound in \Cref{thm:OmniSummary} respectively. The non-closed form upper bound is tight up to an additive error of $10^{-7}$.}
    \label{fig:Omni}
\end{figure}

\subsection{A Restricted Omniscient Adversary}

The omniscient adversary discussed in \Cref{sec:OmniMain} is rewarded solely for delaying the first stopping time and might be elected in only a small fraction of rounds despite the game having a large first stopping time. In this section, we consider a restricted omniscient adversary that can precompute future credentials of $\knownHonest$, but is not rewarded for rounds in which $\knownHonest$ is elected the leader ($X_t(\strategy) = 1$ only if the leader $i$ of round $t$ is a wallet in $\advAccount$).

\begin{theorem} \label{thm:RestOmni}
    There exists a constant $\restrictedOmniConst$ such that $\omniconst \approx 0.38 \leq \restrictedOmniConst \leq \frac{1}{2}$ and a strategy $\strategy$ for the restricted omniscient adversary with stake $\alpha > \restrictedOmniConst$ such that $\Rew{\strategy} = 1$.
\end{theorem}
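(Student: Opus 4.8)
The plan is to sandwich $\phi$ between $\kappa$ and $\tfrac12$ by proving two independent bounds. I would set $\phi := \inf\{\alpha_0 : \text{for every } \alpha \in (\alpha_0,1] \text{ the restricted omniscient adversary of stake } \alpha \text{ admits a reward-}1\text{ strategy}\}$, so that the theorem's assertion (``for every $\alpha>\phi$ there is a reward-$1$ strategy'') holds by the definition of the infimum; it then remains only to show (i) no reward-$1$ strategy exists when $\alpha<\kappa$, and (ii) a reward-$1$ strategy exists whenever $\alpha>\tfrac12$.

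For (i), the key point is that the restricted omniscient adversary has the same action space and the same forced stopping time $\fst$ as the unrestricted omniscient adversary of \Cref{sec:OmniMain}. Hence for any strategy $\strategy$ and any realization the leader of round $\fst$ is honest, so $\sum_{t=1}^{\fst}X_t(\strategy)\le \fst-1$, and therefore its restricted reward satisfies $\Rew{\strategy}\le \RewOmni{\strategy} = 1 - 1/\E[\fst]$; since $\omniOPT$ maximizes $\E[\fst]$, this is at most $\RewOmni{\omniOPT}$. When $\alpha<\kappa$, \Cref{thm:SecondOrderBound} gives $\E[\fst]<\infty$ under $\omniOPT$, so $\RewOmni{\omniOPT}<1$; thus reward $1$ is unattainable for the restricted adversary when $\alpha<\kappa$, which forces $\phi\ge\kappa$.

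For (ii), fix $\alpha>\tfrac12$; since $\beta=1$, the only honest wallet is $\knownHonest$ (player $\opaqueHonest$ has zero stake). I would introduce a restricted choice tree $\Gamma'$, defined exactly like the omniscient choice tree of \Cref{def:OmniChoiceTree} except that the children of a node are only its \emph{adversarial} potential winners. By \Cref{thm:PrPotentialWinner} and the same independence-across-distinct-seeds reasoning used for $\omniChoiceTree$, each node of $\Gamma'$ independently has $k$ children with probability $\alpha^k(1-\alpha)$, so $\Gamma'$ is a Galton--Watson tree whose offspring distribution is geometric with mean $\tfrac{\alpha}{1-\alpha}>1$, i.e.\ supercritical. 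Consider the strategy $\strategy$ that, each round, broadcasts the credential of a child of the current $\Gamma'$-node lying on an infinite branch if one exists (and otherwise a child with a deepest finite subtree), and remains silent when the node has no child (a forced stopping time). Under $\strategy$, every round $t<\fst$ has an adversarial potential winner, whose credential beats $\knownHonest$'s broadcast credential, so $X_t(\strategy)=1$, while $X_\fst(\strategy)=0$; thus $\sum_{t=1}^{\fst}X_t(\strategy)=\fst-1$ and $\fst=1+\mathrm{height}(\Gamma')$, which has infinite expectation because a supercritical $\Gamma'$ is infinite with positive probability.

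The remaining step --- deducing $\Rew{\strategy}=1$ from $\E[\fst]=\infty$ --- is the one place where \Cref{thm:ResetST} does not apply directly, and I expect it to be the main obstacle. I would argue it by renewal/extinction reasoning: the play decomposes into i.i.d.\ episodes between consecutive forced stopping times (each reset draws a fresh $U[0,1]$ seed), and each episode is finite with probability $q<1$ (the extinction probability of the supercritical process) and infinite with probability $1-q>0$. Almost surely only finitely many finite episodes precede the first infinite one; within that infinite episode the adversary stays on an infinite branch of $\Gamma'$, which exists by K\"onig's lemma since $\Gamma'$ is locally finite, and hence wins every round thereafter. Therefore $\sum_{t=1}^{T}X_t(\strategy)\ge T-O(1)$ almost surely, so $\liminf_{T\to\infty}\tfrac1T\sum_{t=1}^{T}X_t(\strategy)=1$ almost surely and $\Rew{\strategy}=1$. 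This produces a reward-$1$ strategy for every $\alpha>\tfrac12$, giving $\phi\le\tfrac12$ and finishing the proof.
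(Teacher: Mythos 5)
Your proposal is correct and follows essentially the same route as the paper: the same sandwich ($\restrictedOmniConst \geq \omniconst$ because the restricted adversary is dominated by the omniscient one, whose reward is strictly below $1$ for $\alpha \leq \omniconst$ by \Cref{thm:SecondOrderBound}; $\restrictedOmniConst \leq \tfrac{1}{2}$ via a strategy that only ever elects adversarial potential winners), and your tree $\Gamma'$ is exactly the paper's greedy omniscient choice tree (\Cref{def:GreedyOmniChoiceTree}), which is supercritical precisely when $\alpha > \tfrac{1}{2}$. The only difference is in the final step, and it is a welcome one: where the paper concludes reward $1$ from an unbounded $\E[\fst]$ via $1 - 1/\E[\fst]$ (\Cref{thm:GreeduNoUB}, whose proof is omitted as analogous to \Cref{thm:NoUB}), you give a direct renewal/extinction argument showing the liminf of the win rate is $1$ almost surely, which cleanly sidesteps the fact that \Cref{thm:ResetST} is stated for a finite expected stopping time.
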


To prove \Cref{thm:RestOmni}, we consider an omniscient adversary with $\alpha \geq \frac{1}{2}$ and much more constraints than the restricted omniscient adversary that can achieve a reward $1$. The restricted omniscient adversary that gets a better reward than its constrained counterpart will also get a reward $1$ when $\alpha \geq \frac{1}{2}$.

We consider a greedy omniscient adversary that always broadcasts a credential whenever the set of adversarial potential winners is non-empty. The greedy omniscient adversary never lets $\knownHonest$ to be elected the leader before the first stopping time $\fst$. Therefore, for a strategy $\pi$, $X_t(\strategy) = 1$ for all $1 \leq t < \fst$ and $\Rew{\strategy} = 1 - \tfrac{1}{\E[\fst]}$.

Since $\knownHonest$ is never elected the leader before the first stopping time, the node corresponding to $\cred{\knownHonest}{t}$ in the omniscient choice tree $\omniChoiceTree$ does not branch and have children. Apart from not branching at $\cred{\knownHonest}{t}$, the branching process governing $\omniChoiceTree$ is identical to the omniscient adversary.

\begin{definition}[Greedy Omniscient Choice Tree] \label{def:GreedyOmniChoiceTree}
The greedy omniscient choice tree $\omniChoiceTree$ is built by the following stochastic process:
\begin{enumerate}
    \item Level $0$ contains the root $q_0$ of the tree $\omniChoiceTree$.
    \item For each node $q$ not of the form $\cred{\knownHonest}{t}$ at level $t \geq 0$, $q$ has $i^* + 1$ children with probability $\alpha^{i^*} \, (1-\alpha)$ for all $i^* \geq 0$.
    \item A node stops branching if it is of the form $\cred{\knownHonest}{t}$.
    \item $\omniChoiceTree$ becomes extinct at height $\fst$ when all nodes at level $\fst$ have stopped branching.
\end{enumerate}   
\end{definition}
Observe that if a node $q$ is the only child of its parent $\cred{i}{t}$, then the adversarial potential winner set with $\seed{t+1} = \cred{i}{t}$ is empty and therefore, $q$ must be of the form $\cred{\knownHonest}{t+1}$. The extinction condition requiring a node to stop branching if it is the only child of its parent is implicit from bullet 3 of \Cref{def:GreedyOmniChoiceTree}. 

Let $\event{t}'$ be the event that $\omniChoiceTree$ does not become extinct on or before stage $t$ and let $\prevent{t}'$ be its probability.

\begin{lemma} \label{thm:GreedyRec}
    The probabilities $\big(\prevent{t}' \big)_{t \in \N \cup \{0\}}$ satisfy $\prevent{0}' = 1$ and
    $$\prevent{t}' = \frac{\alpha \prevent{t-1}'}{(1-\alpha) + \alpha \prevent{t-1}'}$$
    for all $t \geq 1$.
\end{lemma}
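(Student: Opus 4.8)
The plan is to follow the template of the proof of \Cref{thm:Recursive}, passing to the extinction probabilities $\prnevent{t}' := 1 - \prevent{t}'$ and exploiting the one structural difference between the greedy choice tree and the ordinary one: in $\omniChoiceTree$ every non-$\cred{\knownHonest}{}$ node has exactly one child of the form $\cred{\knownHonest}{t}$, and that child never branches. So the recursion I expect to obtain will look like the one in \Cref{thm:Recursive} except that, among the $i^*+1$ children of a node, only $i^*$ of them feed back into the recursion.

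Concretely, I would first record the base case: the game cannot terminate before it starts, so $\prevent{0}' = 1$, i.e.\ $\prnevent{0}' = 0$. Then, for $t \ge 1$, I would condition on the number of children of the root $q_0$. By \Cref{def:GreedyOmniChoiceTree} the root (not a $\cred{\knownHonest}{}$ node) has $i^*+1$ children with probability $\alpha^{i^*}(1-\alpha)$; by the remark following \Cref{def:GreedyOmniChoiceTree}, exactly one of them is the node $\cred{\knownHonest}{1}$ (which stops branching, hence roots a subtree of height $0$) and the other $i^*$ are non-$\cred{\knownHonest}{}$ nodes, each rooting an independent copy of the greedy choice-tree process. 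The event $\nevent{t}'$ holds iff every subtree hanging off $q_0$ has height at most $t-1$; the $\cred{\knownHonest}{1}$-subtree satisfies this automatically once $t \ge 1$, so the event amounts to all $i^*$ non-$\cred{\knownHonest}{}$ children having subtrees of height $\le t-1$, which by independence has probability $(\prnevent{t-1}')^{i^*}$. Summing the resulting geometric series gives
\begin{equation*}
\prnevent{t}' \;=\; (1-\alpha)\sum_{i^* = 0}^{\infty}\bigl(\alpha\,\prnevent{t-1}'\bigr)^{i^*} \;=\; \frac{1-\alpha}{1 - \alpha\,\prnevent{t-1}'},
\end{equation*}
which converges since $\alpha\,\prnevent{t-1}' \le \alpha < 1$; substituting $\prnevent{\cdot}' = 1 - \prevent{\cdot}'$ and simplifying yields $\prevent{t}' = \tfrac{\alpha\,\prevent{t-1}'}{(1-\alpha) + \alpha\,\prevent{t-1}'}$, and a quick check gives $\prevent{1}' = \alpha$, consistent with $\prevent{0}' = 1$.

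I expect the only subtle point to be the bookkeeping in the conditioning step: making sure that exactly one of the $i^*+1$ children is a $\cred{\knownHonest}{}$ node and that this child contributes nothing to the recursion, so that the exponent here is $i^*$ rather than the $i^*+1$ appearing in \Cref{thm:Recursive}. This is exactly where the $\cred{\knownHonest}{}$-stopping rule of \Cref{def:GreedyOmniChoiceTree} does its work (the potential-winner set is always $\advPot_{t+1} \cup \{\knownHonest\}$), and once it is pinned down, the remainder is the same i.i.d.\ branching-process computation as in \Cref{thm:Recursive}.
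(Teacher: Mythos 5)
Your proposal is correct and follows exactly the route the paper intends: the paper omits this proof, noting it is "almost identical" to \Cref{thm:Recursive}, and your argument is precisely that adaptation, with the single structural change (the $\cred{\knownHonest}{}$ child never branches, so the exponent is $i^*$ rather than $i^*+1$) handled correctly and the algebra yielding the stated recursion.
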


\begin{lemma} \label{thm:GreeduNoUB}
    For stake $\alpha > \frac{1}{2}$, the expected first stopping time $\E[\fst]$ of optimal greedy omniscient strategy $\omniOPT$ is unbounded. The adversary receives a reward $1$ by playing $\omniOPT$.
\end{lemma}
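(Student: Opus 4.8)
The plan is to mirror the fixed-point argument used in the proof of \Cref{thm:NoUB}, now applied to the simpler recursion of \Cref{thm:GreedyRec}. Write $f(x) := \frac{\alpha x}{(1-\alpha) + \alpha x}$, so that $\prevent{t}' = f(\prevent{t-1}')$ with $\prevent{0}' = 1$, and recall from the discussion preceding the statement that for the greedy omniscient adversary $X_t(\strategy) = 1$ for all $1 \le t < \fst$, hence $\E[\fst] = \sum_{t=0}^{\infty} \prevent{t}'$ (these are exactly the non-extinction probabilities $\Pr(\fst > t)$ of $\omniChoiceTree$ from \Cref{def:GreedyOmniChoiceTree}) and $\Rew{\omniOPT} = 1 - \tfrac{1}{\E[\fst]}$. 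It therefore suffices to show $\E[\fst] = \infty$, and for that it is enough to exhibit a constant $\delta > 0$ with $\prevent{t}' \ge \delta$ for every $t$.

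First I would locate the nonzero fixed point of $f$: solving $x = f(x)$ with $x \ne 0$ gives $(1-\alpha) + \alpha x = \alpha$, i.e.\ $x^\star = \tfrac{2\alpha - 1}{\alpha} = 2 - \tfrac{1}{\alpha}$. The key observation is that $x^\star > 0$ precisely when $\alpha > \tfrac12$, and that $x^\star \le 1$ for all $\alpha \le 1$, so the base case $\prevent{0}' = 1 \ge x^\star$ holds.

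Next I would check that $f$ is monotone increasing on $[0,1]$ — its derivative is $\tfrac{\alpha(1-\alpha)}{((1-\alpha)+\alpha x)^2} > 0$ — and then conclude by induction: if $\prevent{t-1}' \ge x^\star$, then $\prevent{t}' = f(\prevent{t-1}') \ge f(x^\star) = x^\star$. Hence $\prevent{t}' \ge x^\star > 0$ for all $t$, so $\E[\fst] = \sum_{t \ge 0}\prevent{t}' \ge \sum_{t\ge 0} x^\star = \infty$, i.e.\ $\E[\fst]$ is unbounded, and $\Rew{\omniOPT} = 1 - \tfrac{1}{\E[\fst]} = 1$.

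The argument has no real obstacle — it is a short monotone-iteration computation — so the only things to get right are the bookkeeping points: (i) that $\prevent{t}'$ really equals $\Pr(\fst > t)$ so that $\E[\fst]$ is genuinely their sum (immediate from \Cref{def:GreedyOmniChoiceTree}), and (ii) that the greedy adversary is ``forced'' in the sense that it is never worthwhile to stay silent before a stopping time, which is what yields the clean identity $\Rew{\omniOPT} = 1 - 1/\E[\fst]$ and hence reward $1$ once the stopping time is unbounded. One could alternatively phrase the lower bound via an arbitrary $\delta \in (0, x^\star]$ exactly as in \Cref{thm:NoUB}, but pinning down $x^\star$ itself makes the threshold $\alpha > \tfrac12$ transparent (and, since $x^\star$ is the limit of the monotone decreasing sequence $\prevent{t}'$, shows the bound is tight for the recursion).
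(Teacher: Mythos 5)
Your proof is correct and matches the paper's approach: the paper omits the proof, noting it is ``almost identical'' to \Cref{thm:Recursive} and \Cref{thm:NoUB}, i.e.\ a lower bound $\prevent{t}' \geq \delta > 0$ proved by induction through the monotone recursion, giving $\E[\fst] = \sum_t \prevent{t}' = \infty$ and reward $1 - 1/\E[\fst] = 1$. Your only cosmetic difference is pinning $\delta$ to the exact fixed point $x^\star = 2 - \tfrac{1}{\alpha}$ rather than taking an arbitrary $\delta$ in a suitable interval, which is the same argument.
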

The proofs are almost identical to their counterparts \Cref{thm:Recursive} and \Cref{thm:NoUB} of the omniscient adversary and we skip repeating them here.

 \begin{proof}[Proof of \Cref{thm:RestOmni}]
     The restricted omniscient adversary is weaker than the omniscient adversary. For stake $\alpha < \omniconst$, the omniscient adversary receives a reward $< 1$ and the restricted omniscient adversary cannot do better.

     For $\alpha > \frac{1}{2}$, the greedy omniscient adversary receives a reward $1$ and the restricted omniscient adversary being stronger, will also get a reward $1$.

     Thus, the infimum over all $\alpha$ such that the restricted omniscient adversary receives a reward $1$ lies in the range $[\omniconst, \frac{1}{2}]$.
 \end{proof}

\section{Proof of Theorem \ref{thm:Signlambda}} \label{sec:ProofofSignLambda}

For a strategy $\strategy$ in $\CSSPA$, the expected adversarial reward in the linear version of $\CSSPA$ equals
\begin{equation*}
    \notag
    \begin{split}
        \E[\sum_{t= 1}^{\fst} X_t(\strategy)] - \lambda \cdot \E[\fst] &= \Big( \frac{\E[\sum_{t= 1}^{\fst} X_t(\strategy)]}{\E[\fst]} - \lambda \Big) \cdot \E[\fst] \\
        &= \big(\Rew{\strategy} - \lambda \big) \cdot \E[\fst]
    \end{split}
\end{equation*}
where $\Rew{\strategy}$ is the expected adversarial reward from playing $\strategy$ in $\CSSPA$. If $\lambda > \Rew{\strategy}$ ($\lambda < \Rew{\strategy}$), the adversary earns a negative (positive) reward in the linear version of $\CSSPA$. Importantly, if $\lambda = \Rew{\strategy}$, the expected adversarial reward in the linear $\CSSPA$ is exactly zero.

\section{Proof of Theorem \ref{thm:(Tk)CSSPA}} \label{sec:Proofof(Tk)CSSPA}

We prove the theorem in two steps. Let $\FinCSSPA{\round}$ be the variant of the $\CSSPA$ with a $\coin$-scored adversary that terminates after $\round$ rounds. We first bound the difference between the optimal reward in the $\CSSPA$ and the optimal reward in $\InfCoinCSSPA{\round}$ followed by bounding the difference between the optimal rewards in $\InfCoinCSSPA{\round}$ and $\FinCSSPA{T}$.

Remember that $\optrew{\round, \coin}$ is the optimal strategy of the $\coin$-scored adversary in $\FinCSSPA{\round}$ and $\opt$ is the fully functional adversary's strategy in $\CSSPA$.

\subsection{Loss in Reward from Terminating after $\round$ Rounds} \label{sec:LossTRounds}
In this section, we bound the optimal reward of an $\infty$-scored adversary in $\InfCoinCSSPA{T}$ and the optimal CSSPA reward. For convenience, we will abuse notation and write $\E[\Rew{\cdot}]$ for $\Rew{\cdot}$ in \Cref{sec:LossTRounds} and \Cref{sec:LosskScore}.

\begin{lemma} \label{thm:T-CSSPA}
For $\alpha \leq \omniconst \approx 0.38$,
$$0 \leq \E[\Rew{\opt} - \Rew{\optTI}] \leq \alpha^2 \cdot \tfrac{2 - 2\alpha + \alpha^2}{1 - \alpha + \alpha^2} \cdot [\alpha \cdot \tfrac{2-\alpha}{1-\alpha}]^{T-2}$$    
\end{lemma}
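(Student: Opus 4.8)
The plan is to establish both inequalities by comparing the optimal play $\opt$ in $\CSSPA$ with $\opt$ run inside the $\round$-round-truncated game $\InfCoinCSSPA{\round}$, and to charge the difference entirely to the event that the adversary has not yet reached a forced stopping time after $\round$ rounds. The key enabling fact is that this event is rare, with a rate that \Cref{thm:PrStoppingTime} already controls for the \emph{stronger} omniscient adversary: under any strategy, the sequence of seeds the adversary rides is a root-to-node path of the omniscient choice tree $\omniChoiceTree$ of \Cref{def:OmniChoiceTree} --- the adversarial potential winners at each round are a subset of the children of the current node, and a forced stopping time occurs exactly when the occupied node is an only child --- so $\fst$ is at most the height of $\omniChoiceTree$ and hence $\Pr(\fst > \round)$ is at most $\prevent{\round}$ for the omniscient process. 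By \Cref{thm:PrStoppingTime}, for $\alpha \le \omniconst$ and $\round \ge 2$ this is at most $\alpha^2 \cdot \tfrac{2-2\alpha+\alpha^2}{1-\alpha+\alpha^2} \cdot [\alpha \cdot \tfrac{2-\alpha}{1-\alpha}]^{\round-2}$, the bracketed base being $<1$ precisely because $\alpha < \omniconst$.

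For the \emph{upper bound}, note first that running $\opt$ and halting at round $\round$ is a legal strategy in $\InfCoinCSSPA{\round}$, so (applying \Cref{thm:ResetST} in both games) $\Rew{\optTI} \ge \E[\sum_{t=1}^{\fst \wedge \round} X_t(\opt)] / \E[\fst \wedge \round]$, whereas $\Rew{\opt} = \E[\sum_{t=1}^{\fst} X_t(\opt)]/\E[\fst]$. These numerator/denominator pairs coincide on $\{\fst \le \round\}$, and on $\{\fst > \round\}$ they differ only through the rounds after $\round$; there the crucial structural point is that a forced stopping round is always lost by the adversary, so the extra wins accrued past round $\round$ are dominated by the extra rounds spent. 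Substituting this into the difference of the two ratios (and using $\E[\fst\wedge\round]\ge1$) bounds $\Rew{\opt}-\Rew{\optTI}$ by a tail of the form $\sum_{u\ge\round}\Pr(\fst>u)$, and the omniscient-tree estimate of the previous paragraph then yields the displayed geometric bound.

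For the \emph{lower bound}, truncating the horizon cannot raise the adversary's optimal long-run rate. The cleanest way to see this is through the linear reformulation: the optimal $\round$-round reward distribution satisfies the recursion $\dist{\round+1} = \addl{\opt, \dist{\round}}$ started from the point mass at $0$, the layer operator $\addl{\opt, \cdot}$ is monotone for the stochastic order (a stochastically larger distribution of continuation rewards can only raise the per-round value in every realization), and $\dist{\round} \to \optdis$; consequently $\E_{s \sim \dist{\round}}[s]$ is nondecreasing in $\round$, which by \Cref{thm:Signlambda} transfers to $\Rew{\optrew{\round,\infty}} \le \Rew{\optrew{\round+1,\infty}} \le \cdots \le \Rew{\opt}$, i.e.,\ $\Rew{\opt} - \Rew{\optTI} \ge 0$.

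The main obstacle is the bookkeeping behind the upper bound. Because $\Rew{\cdot}$ is a ratio of expectations rather than an expectation, one cannot simply argue ``the two games disagree with probability $\Pr(\fst>\round)$ and a disagreement moves the reward by at most $1$'': a low-probability event made up of long episodes can perturb a renewal rate by more than its probability. The real content is the careful cancellation --- enabled by the fact that every forced stopping round is a loss --- which makes the surplus wins and surplus rounds on $\{\fst > \round\}$ balance, so that the discrepancy collapses to (essentially) the bare tail probability that the omniscient-tree comparison then controls geometrically; the analogous care is also what is needed to make the monotonicity in the lower bound rigorous at the base of the induction.
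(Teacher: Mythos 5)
Your use of the omniscient choice tree to control the tail is exactly how the paper proceeds ($\Pr(\fst>\round)$ is bounded via \Cref{thm:PrStoppingTime}), and your lower-bound conclusion is correct, but note the paper obtains it in one line: $\optTI$ is a feasible strategy in $\CSSPA$, so the optimum can only do better. Your detour through monotonicity of $\addl{\cdot}$ in the stochastic order is unnecessary and, as you yourself flag, its base case is shaky (one-round rewards in the linear game can be as low as $-\lambda$, so the first iterate need not dominate the point mass at $0$).

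The genuine gap is in your upper bound. The lemma's right-hand side is the \emph{single} tail probability bound $\Pr(\fst>\round)\le \alpha^2\cdot\tfrac{2-2\alpha+\alpha^2}{1-\alpha+\alpha^2}\cdot\big[\alpha\cdot\tfrac{2-\alpha}{1-\alpha}\big]^{\round-2}$, whereas your ratio manipulation only claims to reach a bound of the form $\sum_{u\ge\round}\Pr(\fst>u)$. Summing that geometric tail costs an extra factor $\tfrac{1-\alpha}{1-3\alpha+\alpha^2}$, which exceeds $1$ and diverges as $\alpha\uparrow\omniconst$ (it is exactly $\tfrac{1}{1-\rho}$ for $\rho=\alpha\cdot\tfrac{2-\alpha}{1-\alpha}$), so even if your ``cancellation'' step were fully justified it would prove a strictly weaker statement than the lemma. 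Moreover, that cancellation (surplus wins dominated by surplus rounds, hence the discrepancy of the two renewal ratios collapses to a bare tail) is precisely what you identify as the real content, and it is asserted rather than proved. The paper's argument is more direct: writing $\RewT{\cdot}$ for the reward of playing until round $\round$ and then terminating, it decomposes on $\{\fst\le\round\}$, where $\Rew{\opt}$ and $\RewT{\opt}$ coincide; uses that rewards lie in $[0,1]$ so the conditional gap on $\{\fst>\round\}$ is at most $1$; uses $\E[\RewT{\opt}]\le\E[\RewT{\optTI}]=\E[\Rew{\optTI}]$ since $\optTI$ is optimal among strategies terminating at $\round$; and concludes $\E[\Rew{\opt}]\le\E[\Rew{\optTI}]+\Pr(\fst>\round)$ before invoking \Cref{thm:PrStoppingTime}. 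Your caveat that a renewal rate (a ratio of expectations) can in general move by more than the probability of the disagreement event is a fair point about rigor, but replacing the paper's step with an unproven tail-sum bound both leaves the key step unestablished and lands short of the stated inequality.
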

\begin{proof}
    The left hand side is easy to see. $\optTI$ is a valid strategy in $\CSSPA$. The optimal strategy $\opt$ is only going to yield a better reward.

    We proceed to the right hand side. As usual, let $\fst$ denote the first stopping time of $\opt$ in $\CSSPA$. Let $\RewT{\strategy}$ be the reward of playing a strategy till time $\round$ and terminating.

    \begin{equation}
        \notag
        \begin{split}
            \E[\Rew{\opt}] &= Pr(\fst \leq \round) \, \E[\Rew{\opt} | \fst \leq \round] + Pr(\fst > \round) \, \E[\Rew{\opt} | \fst > \round] \\
            &= Pr(\fst \leq \round) \, \E[\RewT{\opt} | \fst \leq \round] + Pr(\fst > \round) \times \\
            & \hspace{1.5cm}\E[\RewT{\opt} + \{\Rew{\opt} - \RewT{\opt}\} | \fst > \round] \\
            &= \E[\RewT{\opt}] + Pr(\fst > \round) \, \E[\{\Rew{\opt} - \RewT{\opt}\} | \fst > \round]\\
            &\leq \E[\RewT{\optTI}] + Pr(\fst > \round) \\
            &= \E[\Rew{\optTI}] + Pr(\fst > T)
        \end{split}
    \end{equation}

    Conditioned on the stopping time being at most $\round$, there is no difference between playing $\opt$ and playing $\opt$ only till time $\round$ and terminating. Thus, $\Rew{\opt}$ can be equated to $\RewT{\opt}$ in the first equation. Observe that $\RewT{\optTI}$ yields a better reward than any strategy that recommends termination after $\round$ rounds. In particular, $\E[\RewT{\optrew{T, \infty}}] \geq \E[\RewT{\opt}]$. Further, $0 \leq \Rew{\opt}, \RewT{\opt} \leq 1$ and therefore, $\E[\{\Rew{\opt} - \RewT{\opt}\} | \fst > \round] \leq 1$. Combining the above, we get the first inequality. The last equality follows since $\optTI$ anyways recommends terminating before round $\round$ and therefore $\RewT{\optrew{T, \infty}} = \Rew{\optrew{T, \infty}}$.
    
    The lemma follows by bounding $Pr(\fst > \round)$ by $\alpha^2 \cdot \frac{2 - 2\alpha + \alpha^2}{1 - \alpha + \alpha^2} \cdot [\alpha \cdot \frac{2-\alpha}{1-\alpha}]^{T-2}$ from \Cref{thm:PrStoppingTime}.
\end{proof}

\subsection{Loss in Reward for a $k$-Scored Adversary} \label{sec:LosskScore}

We bound the difference between the optimal rewards in $\InfCoinCSSPA{\round}$ and $\FinCSSPA{\round}$.

\begin{lemma} \label{thm:TkCSSPA}
    For $\alpha \leq 0.29$,
    $$0 \leq \E[\Rew{\optTI}] - \E[\Rew{\optTk}] \leq \alpha^\coin$$
\end{lemma}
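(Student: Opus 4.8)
The plan is to prove the two inequalities separately. The left inequality $0 \le \E[\Rew{\optTI}] - \E[\Rew{\optTk}]$ is immediate: every $\coin$-scored strategy is in particular a feasible strategy for the $\infty$-scored adversary, and $\optTI$ is optimal in $\InfCoinCSSPA{\round}$, so it earns at least as much as $\optTk$.

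For the right inequality I would couple $\InfCoinCSSPA{\round}$ and $\FinCSSPA{\round}$ on the same underlying $\VRF$ randomness, run $\optTI$ in the former, and in the latter play the $\coin$-scored strategy $\strategy'$ that imitates $\optTI$ round by round for as long as it is able to (doing anything thereafter; this only helps $\optTk$, which is optimal among $\coin$-scored strategies, so $\E[\Rew{\optTk}] \ge \E[\Rew{\strategy'}]$). Imitation is faithful in every round except one in which $\optTI$ broadcasts the credential of a wallet that is \emph{not} among the adversary's $\coin$ lowest-scored wallets; let $t^\star$ be the first such round. Since the potential winners of a round are exactly the adversary's lowest-scored wallets whose score beats $\cred{\knownHonest}{t^\star}$, a deviation at $t^\star$ forces the set of adversarial potential winners $\advPot$ to have size at least $\coin+1$ there. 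As the seed right after a (forced) stopping time is unbiased, i.e.\ distributed as $U[0,1]$, within each fresh induced instance \Cref{thm:PrPotentialWinner} gives that having at least $\coin+1$ adversarial potential winners has probability $\alpha^{\coin+1}$.

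The subtlety is that $\Rew{\cdot}$ is a ratio of expectations (\Cref{thm:ResetST}), so a pathwise coupling does not by itself bound $\E[\Rew{\optTI}] - \E[\Rew{\strategy'}]$. I would route around this by passing to the linear game: via \Cref{thm:Signlambda} it suffices to compare $\LinRew{\optTI}$ and $\LinRew{\optTk}$ at the relevant entry fee $\lambda$, where rewards are additive across rounds. A deviation at round $t^\star$ then perturbs only the rounds from $t^\star$ up to the next stopping time, each contributing at most $1$ in absolute value to the reward difference; since the game effectively restarts at each forced stopping time and deviation rounds recur only with the per-instance probability $\alpha^{\coin+1}$, the total expected loss is at most $\alpha^{\coin+1}\cdot\E[\fst]$ up to lower-order bookkeeping. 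Equivalently, one can argue entirely inside the recursive reward-distribution formulation of \Cref{sec:Ideal}, where the future of each round is re-randomized through $\optdis$ and the per-layer discrepancy between the $\infty$-scored and $\coin$-scored recursions is supported on the event $|\advPot| \ge \coin+1$. Finally, $\E[\fst]$ for $\optTI$ is dominated by the expected height of the omniscient choice tree, which for $\alpha \le \omniconst$ is at most $\tfrac{1 - 3\alpha + 3\alpha^2 - 3\alpha^3}{(1-3\alpha+\alpha^2)(1-\alpha+\alpha^2)}$ by \Cref{thm:SecondOrderBound} (or bound $\sum_t \Pr(\fst > t)$ directly from \Cref{thm:PrStoppingTime}); a direct numerical check shows $\alpha\cdot\E[\fst] \le 1$ for $\alpha \le 0.29$, giving the stated bound $\alpha^{\coin}$.

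The main obstacle I expect is precisely this last step: handling the ratio structure of the reward and the fact that a single deviation round ripples through the remainder of its epoch, so that one must either work within the decoupled recursive formulation or carefully localize the damage of a deviation and bound the expected epoch length. Getting the constants to close forces the hypothesis $\alpha \le 0.29$ here, strictly smaller than the threshold $\omniconst \approx 0.38$ under which $\E[\fst]$ is finite at all.
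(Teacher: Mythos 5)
Your treatment of the left inequality matches the paper and is fine, and your overall skeleton (imitate $\optTI$ with a $\coin$-scored strategy until a ``bad'' round occurs, then charge the difference to the bad event) is the same as the paper's. The gap is in how you quantify the bad event, and it sits exactly where the paper does its real work. You assert that a deviation round (one with at least $\coin+1$ adversarial potential winners) recurs ``with per-instance probability $\alpha^{\coin+1}$.'' That figure is the probability for a \emph{single} round with an unbiased seed (\Cref{thm:PrPotentialWinner}); it does not bound the probability that such a round is \emph{ever} encountered before the stopping time, because after round $1$ the realized seeds are biased: the adversary decides which precomputed credential becomes the next seed and can use lookahead to steer toward branches with many potential winners. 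Controlling this steering is precisely the content of \Cref{thm:TkCSSPAProb}, which analyzes an omniscient adversary whose sole goal is to reach such a round, writes the no-encounter probability $\prindevent$ as a solution of $\prindevent=(1-\alpha)+\sum_{i^*=1}^{\coin}(1-\alpha)\alpha^{i^*}\prindevent^{i^*+1}$, and locates the relevant root (\Cref{thm:RootLocate}); the hypothesis $\alpha\le 0.29$ is needed there, not for a numerical check that $\alpha\,\E[\fst]\le 1$. Note the provable encounter probability is only $\alpha^{\coin}$, strictly weaker than your per-epoch $\alpha^{\coin+1}$, which is a signal that your accounting cannot close as stated.

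A second problem is the loss accounting after a deviation. In the linear game the damage is not one unit: it is the entire remainder of the epoch, so you need to bound $\E[\text{loss}\cdot\mathbbm{1}(\text{deviation})]$, and conditioned on reaching a round with at least $\coin+1$ adversarial potential winners the remaining epoch is stochastically longer, so you cannot replace this by $\Pr(\text{deviation})\cdot\E[\fst]$ ``up to lower-order bookkeeping.'' You would also still need to convert the gap in linear rewards back into a gap in $\Rew{\cdot}$ (say, by evaluating at $\lambda=\Rew{\optTk}$ and dividing by $\E[\fst]\ge 1$), a step you leave implicit. The paper avoids both issues entirely: it conditions on the event $\IndEvent$ that no round before $\fst$ has more than $\coin$ adversarial potential winners, observes that on $\IndEvent$ the play of $\optTI$ is reproducible by a $\coin$-scored strategy, and, since $\Rew{\cdot}\in[0,1]$, the conditional loss on $\overline{\IndEvent}$ is at most $1$, giving $\E[\Rew{\optTI}]-\E[\Rew{\optTk}]\le \Pr(\overline{\IndEvent})\le \alpha^{\coin}$ with no detour through the linear game. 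To repair your argument you would essentially have to import \Cref{thm:TkCSSPAProb} (and its root-location analysis), at which point the coupling and the linear-game machinery become unnecessary.
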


We first show that the probability of an adversary ever having $\coin$ adversarial potential winners is at most $\alpha^\coin$. Once this is established, the proof is similar to the bound on the rewards in $\CSSPA$ and $\InfCoinCSSPA{\round}$. 

\begin{lemma} \label{thm:TkCSSPAProb}
    Let $\alpha \leq 0.29$. The probability that the adversary never encounters a round with at least $\coin$ adversarial potential winners before the first stopping time in $\CSSPA$ is at least $1 - \alpha^{\coin}$.
\end{lemma}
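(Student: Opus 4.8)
The plan is to prove the probability bound round-by-round using the recursive structure of $\CSSPA$, and then to deduce \Cref{thm:TkCSSPA} by the same conditioning argument that gave \Cref{thm:T-CSSPA}. The key input is the one-round law: for an unbiased seed---$\seed{1}$, and by \Cref{def:FST} any seed the game reaches strictly before $\fst$---\Cref{thm:PrPotentialWinner} gives $Pr(|\advPot(\seed{t})| = j) = \alpha^{j}(1-\alpha)$, hence $Pr(|\advPot(\seed{t})| \geq \coin) = \alpha^{\coin}$ and $Pr(1 \leq |\advPot(\seed{t})| \leq \coin-1) = \alpha - \alpha^{\coin}$. (For $\beta<1$ only the sub-wallet $\knownHonest$ of the honest stake is visible, so $|\advPot|$ can only grow; I would couple to the $\beta = 1$ game and note that, because the adversarial potential winners are always exactly the lowest-scored adversarial wallets, the $\coin$-scored restriction can only fail to bite there.)

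For the recursion, let $p_\coin$ be the probability that some round $t<\fst$ has $|\advPot(\seed{t})| \geq \coin$. Condition on round $1$: with probability $1-\alpha$ we have $\fst = 1$ and there is no such round; with probability $\alpha^{\coin}$ round $1$ is itself such a round; with probability $\alpha-\alpha^{\coin}$ we have $1\le|\advPot(\seed{1})|\le\coin-1$, the adversary promotes a potential winner, and round $2$ opens with a seed on which the relevant $\VRF$ outputs (all adversarial wallets, and $\knownHonest$) are fresh, so round $2$ begins---at the level of the marginal law of $|\advPot(\seed{2})|$---an independent copy of the game. This yields $p_\coin \le \alpha^{\coin} + (\alpha-\alpha^{\coin})\,p_\coin$, i.e.\ $p_\coin \le \alpha^{\coin}/(1-\alpha+\alpha^{\coin})$; for $\alpha\le 0.29$ the prefactor is tame, and the small extra slack needed to reach the stated $p_\coin\le\alpha^{\coin}$ comes from \Cref{thm:PrStoppingTime}: past round $1$ the expected number of rounds on which a ``bad'' event could still fire is itself $O(\alpha)$, which absorbs the $1/(1-\alpha)$ factor.

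Given $Pr(\text{no round }t<\fst\text{ has }|\advPot(\seed{t})|\ge\coin)\ge 1-\alpha^{\coin}$, the passage to \Cref{thm:TkCSSPA} copies the proof of \Cref{thm:T-CSSPA}. On that event, every wallet $\optTI$ broadcasts is a potential winner and therefore sits among the $\le\coin-1$ lowest-scored adversarial wallets, hence is available to the $\coin$-scored adversary, so $\optTI$ and $\optTk$ can be coupled to play identically and earn identical rewards; on the complementary event (probability $\le\alpha^{\coin}$) the two rewards differ by at most $1$. Adding the two contributions gives $0\le \E[\Rew{\optTI}]-\E[\Rew{\optTk}]\le\alpha^{\coin}$.

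The step I expect to be the real obstacle is the recursion: the adversary picks \emph{which} potential winner to promote, and a reward-maximizing pick is correlated with the continuation (a seed leaving more potential winners next round typically carries more reward), so the ``independent copy'' claim is safe only for the marginal degree law, not the joint law along the adversary's path. To make it rigorous I would argue that, conditioned on the adversary's information, $|\advPot(\seed{t+1})|$ is independent of which promoted credential becomes the seed---the $\VRF$ outputs defining it are all fresh regardless---so the sequence of degrees along \emph{any} strategy's path is stochastically dominated by an i.i.d.\ geometric sequence killed at its first $0$, on which the round-by-round computation above is exact; failing that, one reduces via \Cref{thm:ResetST} to the linear game, where the per-round entry fee pins down when continuing is worthwhile and lets one directly bound how many rounds the optimal adversary plays.
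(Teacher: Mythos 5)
There is a genuine gap, and it is exactly the obstacle you flag at the end: your recursion needs the round-$(t+1)$ potential-winner count to be unbiased \emph{along the path chosen by the adversary}, and your proposed repair (``conditioned on the adversary's information, $|\advPot(\seed{t+1})|$ is independent of which promoted credential becomes the seed --- the $\VRF$ outputs defining it are all fresh'') is false. By bullet 4 of \Cref{def:RefinedCSSPA}, before broadcasting in round $t$ the adversary has already precomputed its own wallets' round-$(t+1)$ credentials under every candidate seed; only $\knownHonest$'s and $\opaqueHonest$'s next credentials are fresh. A reward-maximizing strategy such as $\optTI$ therefore steers toward seeds under which its precomputed scores are small, i.e.\ toward conditional laws of $|\advPot(\seed{t+1})|$ that are stochastically larger than $\alpha^{j}(1-\alpha)$ --- this seed bias is the central phenomenon of the whole paper, so the ``i.i.d.\ geometric killed at its first $0$'' domination also fails for a strategically steering adversary. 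Moreover, even granting the independence, your chain recursion only gives $p_\coin \leq \alpha^{\coin}/(1-\alpha+\alpha^{\coin})$, which exceeds $\alpha^{\coin}$ for $\coin \geq 2$, and the appeal to \Cref{thm:PrStoppingTime} to ``absorb'' the extra factor is not an argument; note also that the target bound $1-\alpha^{\coin}$ is where the hypothesis $\alpha \leq 0.29$ must actually be used, and it never enters your computation.

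The paper's proof sidesteps the bias by bounding over all strategies at once: it considers an omniscient adversary whose sole purpose is to reach a round with more than $\coin$ adversarial potential winners, so that the event ``never encounters such a round'' requires \emph{every} branch of the choice tree (one child per potential winner, including $\knownHonest$) to avoid it. This yields the branching-process fixed point $\prindevent = (1-\alpha) + \sum_{i^*=1}^{\coin}(1-\alpha)\,\alpha^{i^*}\,\prindevent^{\,i^*+1}$, and after the substitution $\nu = \alpha\,\prindevent$ the smaller root of \Cref{eqn:kRec} is shown in \Cref{thm:RootLocate} (using $\alpha \leq 0.29$) to satisfy $\nu \geq \alpha - \alpha^{\coin+1}$, i.e.\ $\prindevent \geq 1-\alpha^{\coin}$; the real adversary's never-encounter probability can only be larger. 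If you want to salvage a path-wise argument you would have to make the union over all precomputable continuations explicit, which lands you back at this tree recursion. (Your second part --- conditioning on the good event, coupling $\optTI$ with a $\coin$-scored strategy, and paying at most $1$ on the complement --- does match the paper's derivation of \Cref{thm:TkCSSPA}, but that is not the statement at issue here.)
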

\begin{proof}
Consider an omniscient adversary whose sole purpose is to encounter a round with at least $\coin+1$ adversarial potential winners. Let $\IndEvent$ be the event that the adversary never encounters such a round and $\prindevent = Pr(\IndEvent)$. Clearly, the probability of the real adversary never encountering such a round is only larger.

Consider a round $t$ where $\advPot$ is the random set of adversarial potential winners. $\IndEvent$ happens precisely when
\begin{enumerate}
    \item the current round is a stopping time $(\fst = t)$, or
    \item $|\advPot| \leq \coin$ and none of the potential winners (including $\knownHonest$) lead to a future round with at least $\coin+1$ adversarial potential winners.
\end{enumerate}
Thus,
$$Pr(\IndEvent) = Pr(\fst = t) + \sum_{i^* = 1}^{\coin} Pr(|\advPot| = i^*) \times Pr(\IndEvent)^{i^* + 1}$$
$Pr(\fst = t) = 1-\alpha$ and $Pr(|\advPot| = i^*) = \alpha^{i^*} \cdot (1-\alpha)$ (by \Cref{thm:PrPotentialWinner}). Plugging them in,
$$\prindevent = (1-\alpha) + \sum_{i^* = 1}^{\coin} (1-\alpha) \cdot \alpha^{i^*} \prindevent^{i^*+1}$$
Let $\nu = \alpha \, \prindevent$. The above can be rewritten as
    \begin{equation} \label{eqn:kRec}
        \begin{split}
            \nu &= \alpha(1-\alpha) + (1-\alpha) \nu^2 \sum_{i = 0}^{\coin-1} \nu^i \\
            &= \alpha(1-\alpha) + (1-\alpha) \nu^2 \frac{1-\nu^{\coin}}{1-\nu}
        \end{split}
    \end{equation}
    The goal is to locate the root of the above equation. We show that the root $\nu = \alpha \, \prindevent$ is at least $\alpha - \alpha^{\coin+1}$ in \Cref{sec:Root}. Thus, $\prindevent \geq 1 - \alpha^k$, establishing the claim.
\end{proof}

\begin{proof}[Proof of \Cref{thm:TkCSSPA}]
        Similar to \Cref{thm:T-CSSPA}, the left hand side of the inequality is straightforward. The $\infty$-scored adversary's reward is only larger than that of the $\coin$-scored adversary. Thus, $0 \leq \E[\Rew{\optTI}] - \E[\Rew{\optTk}]$.

        Let $\IndEvent$ be the event that the adversary does not see more than $\coin$ adversarial potential winners till the $\fst$ in $\InfCoinCSSPA{\round}$, and let $\overline{\IndEvent}$ be its complement. Let $\optrew{\round, \leq \coin}$ be the strategy that copies $\optTI$ until the adversary encounters $k+1$ adversarial potential winners and then terminates. Observe that conditioned on $\IndEvent$, $\optTI$ and $\optrew{\round, \leq \coin}$ are identical. Thus,

        \begin{equation*}
            \notag
            \begin{split}
                \E[\Rew{\optTI}] &= Pr(\IndEvent) \E[\Rew{\optTI}| \IndEvent] + Pr(\overline{\IndEvent}) \E[\Rew{\optTI}| \overline{\IndEvent}] \\
                &= Pr(\IndEvent) \E[\Rew{\optrew{\round, \leq \coin}}| E] + Pr(\overline{\IndEvent}) \E[\Rew{\optrew{\round, \leq \coin}}) + (\Rew{\optTI} - \Rew{\optrew{\round, \leq \coin}}| \overline{\IndEvent}] \\
                &= \E[\Rew{\optrew{\round, \leq \coin}}] + Pr(\overline{\IndEvent}) \E[(\Rew{\optTI} - \Rew{\optrew{\round, \leq \coin}}| \overline{\IndEvent}]
            \end{split}
        \end{equation*}

        $\E[\Rew{\optrew{\round, \leq \coin}}]$ is only smaller than the $\coin$-scored adversary's optimal reward $\E[\Rew{\optTk}]$. Further, \\$0 \leq \Rew{\optTI}, \Rew{\optrew{\round, \leq \coin}} \leq 1$. Hence,
        \begin{equation*}
            \notag
            \begin{split}
                \E[\Rew{\optTI}] &= \E[\Rew{\optrew{\round, \leq \coin}}] + Pr(\overline{\IndEvent}) \E[(\Rew{\optTI} - \Rew{\optrew{\round, \leq \coin}})| \overline{\IndEvent}] \\
                &\leq \E[\Rew{\optTk}] + Pr(\overline{\IndEvent}) \\
                &\leq \E[\Rew{\optTk}] + \alpha^\coin
            \end{split}
        \end{equation*}
        The last inequality follows from \Cref{thm:TkCSSPAProb}.
        
\end{proof}
        
\begin{proof}[Proof of \Cref{thm:(Tk)CSSPA}]
    For $\alpha \leq 0.29$, we have
    $$0 \leq \Rew{\opt} - \Rew{\optTI} \leq \alpha^2 \cdot \frac{2 - 2\alpha + \alpha^2}{1 - \alpha + \alpha^2} \cdot [\alpha \cdot \frac{2-\alpha}{1-\alpha}]^{\round-2}$$
    and $$0 \leq \E[\Rew{\optTI}] - \E[\Rew{\optTk}] \leq \alpha^\coin$$
    The theorem follows by adding the two inequalities.
\end{proof}

\subsection{Locating the Root of \Cref{eqn:kRec}} \label{sec:Root}
In this section, we will approximate the root of the equation
    \begin{equation}
        \notag
        \begin{split}
            \nu &= \alpha(1-\alpha) + (1-\alpha) \nu^2 \sum_{i = 0}^{\coin-1} \nu^i \\
            &= \alpha(1-\alpha) + (1-\alpha) \nu^2 \frac{1-\nu^{\coin}}{1-\nu}
        \end{split}
    \end{equation}
The left hand side is the equation of a straight line. The right hand side of the first equality is clearly convex in $\nu$. A straight line can intersect a convex curve at most twice. Thus, the above equation can have at most two roots.

At $\nu = 0$, the LHS is zero and thus, is smaller than the RHS, equal to $\alpha(1-\alpha)$. At $\nu = \infty$, the LHS is smaller than the RHS since the RHS is a degree $\coin+1$ polynomial with a positive leading coefficient while the LHS is just a line. At $\nu = \alpha$, the LHS equals $\alpha$. The RHS equals
$$\alpha(1-\alpha) + (1-\alpha) \alpha^2 \frac{1-\alpha^{\coin}}{1-\alpha} \leq \alpha(1-\alpha) + \alpha^2 = \alpha$$
and thus is smaller than the LHS. Thus, one of the roots of the equation is in $[0, \alpha]$ and the other is in $[\alpha, \infty]$. Remember $\nu = \alpha \, \prindevent \leq \alpha \times 1$, since $\prindevent$ is the probability of $\IndEvent$ happening. Thus, we are interested in locating the smaller root of the above equation.

\begin{lemma} \label{thm:RootLocate}
    For $\alpha \leq 0.29$, the smaller root of
    \begin{equation} \label{eqn:TrueHardEqn}
        \nu = \alpha(1-\alpha) + (1-\alpha) \nu^2 \frac{1-\nu^{\coin}}{1-\nu}
    \end{equation}
    is at least $\alpha - \alpha^{\coin+1}$.
\end{lemma}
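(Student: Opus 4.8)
The plan is to work with the function
\[
f(\nu) := \alpha(1-\alpha) + (1-\alpha)\,\nu^2 \sum_{i=0}^{\coin-1}\nu^i - \nu,
\]
whose zeros are exactly the roots of \Cref{eqn:TrueHardEqn}. As observed just before the lemma, $f$ is convex on $[0,\infty)$ (a convex polynomial minus a line), $f(0)=\alpha(1-\alpha)>0$, and $f$ has exactly two roots, the smaller one $\nu^*$ in $(0,\alpha)$ and the larger one in $(\alpha,\infty)$; hence $f>0$ on $[0,\nu^*)$. Set $\nu_0 := \alpha - \alpha^{\coin+1}$. Then $0\le \nu_0 < \alpha$, so $\nu_0$ lies strictly below the larger root, and therefore it suffices to prove $f(\nu_0)\ge 0$: this forces $\nu_0\le\nu^*$, which is the claim.

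Next I would evaluate $f(\nu_0)$ directly. Since $(1-\alpha)\nu_0^2\sum_{i=0}^{\coin-1}\nu_0^i = (1-\alpha)\sum_{j=2}^{\coin+1}\nu_0^j$ and $\alpha(1-\alpha)-\nu_0 = \alpha^{\coin+1}-\alpha^2$, we get
\[
f(\nu_0) = \alpha^{\coin+1}-\alpha^2 + (1-\alpha)\sum_{j=2}^{\coin+1}\nu_0^j,
\]
so $f(\nu_0)\ge 0$ is equivalent to $(1-\alpha)\sum_{j=2}^{\coin+1}\nu_0^j \ge \alpha^2-\alpha^{\coin+1}$. To lower-bound the left side I would write $\nu_0^j = \alpha^j(1-\alpha^\coin)^j \ge \alpha^j(1-j\alpha^\coin)$ (Bernoulli, valid since $\alpha^\coin\in[0,1]$), combine with the telescoping identity $(1-\alpha)\sum_{j=2}^{\coin+1}\alpha^j = \alpha^2-\alpha^{\coin+2}$, and cancel the common factor $(1-\alpha)\alpha^\coin$. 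The inequality then reduces to the clean statement $\sum_{j=2}^{\coin+1} j\,\alpha^{j-1} \le 1$.

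Finally I would bound $\sum_{j=2}^{\coin+1} j\,\alpha^{j-1} \le \sum_{j=2}^{\infty} j\,\alpha^{j-1} = \frac{1}{(1-\alpha)^2}-1 = \frac{\alpha(2-\alpha)}{(1-\alpha)^2}$, note that this expression is increasing in $\alpha$ on $[0,1)$, and check that at $\alpha=0.29$ it is approximately $0.984<1$ (it equals $1$ precisely at $\alpha = 1-\tfrac{1}{\sqrt 2}\approx 0.293$, which is exactly why the hypothesis $\alpha\le 0.29$ is imposed). This closes the chain of inequalities. The only genuinely delicate step is the bookkeeping in the second paragraph: ensuring that the Bernoulli slack and the telescoping sum cancel with no stray terms so that the whole thing collapses to $\sum_{j\ge 2} j\,\alpha^{j-1}\le 1$; once that reduction is in place, the final paragraph is immediate.
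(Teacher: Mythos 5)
Your proof is correct, and while it shares the paper's overall skeleton (convexity of the right-hand side, exactly two roots straddling $\alpha$, and a sign check at the candidate point $\nu_0=\alpha-\alpha^{\coin+1}$), the verification step is genuinely different. The paper does not evaluate the original polynomial at $\nu_0$; instead it passes to an auxiliary equation in which $\nu^{\coin}$ is replaced by $\alpha^{\coin}$, argues that this auxiliary equation's smaller root lower-bounds the true one on $[0,\alpha]$, and then certifies the sign at $\nu_0$ for the auxiliary equation via a chain of inequalities that expands $(1-\alpha^{\coin})^3$ and starts from $0\ge -2\alpha^2+4\alpha-1$. You instead work directly with $f(\nu_0)$, use Bernoulli's inequality $(1-\alpha^{\coin})^j\ge 1-j\alpha^{\coin}$ and the telescoping identity $(1-\alpha)\sum_{j=2}^{\coin+1}\alpha^j=\alpha^2-\alpha^{\coin+2}$, and the stray terms do cancel exactly as you claim: after dividing by $(1-\alpha)\alpha^{\coin}$ the condition becomes $\sum_{j=2}^{\coin+1}j\alpha^{j-1}\le 1$, which your geometric-series bound $\alpha(2-\alpha)/(1-\alpha)^2$ handles. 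Pleasingly, both arguments hinge on the same quadratic threshold: the paper's opening inequality $2\alpha^2-4\alpha+1\ge 0$ and your condition $\alpha(2-\alpha)\le(1-\alpha)^2$ are the same statement, with critical value $\alpha=1-\tfrac{1}{\sqrt{2}}\approx 0.293$, which is indeed why the hypothesis is $\alpha\le 0.29$. Your route avoids the auxiliary equation entirely and makes the source of the $0.29$ constant more transparent; the paper's route avoids Bernoulli at the cost of a less illuminating chain of algebraic manipulations.
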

\begin{proof}
    Consider the equation
    \begin{equation} \label{eqn:AdjustedHardEqn}
        \nu = \alpha(1-\alpha) + (1-\alpha) \nu^2 \frac{1-\alpha^{\coin}}{1-\nu}
    \end{equation}
    instead. The right hand side is still convex in $[0, 1]$. In the region $[0, \alpha]$, the RHS of the above equation is strictly smaller than $\alpha(1-\alpha) + (1-\alpha) \nu^2 \frac{1-\nu^{\coin}}{1-\nu}$. Thus, at the root $\nu_1$ of \Cref{eqn:TrueHardEqn}
    $$\nu_1 > \alpha(1-\alpha) + (1-\alpha) \nu_1^2 \frac{1-\alpha_1^{\coin}}{1-\nu_1}$$
    The left hand side of the new equation has already overtaken the right hand side at $\nu = \nu_1$. The root of \Cref{eqn:AdjustedHardEqn}
    is smaller than $\nu_1$. We will find this root and use it to lower bound $\nu_1$.

    Next, at $\nu = \alpha - \alpha^{\coin+1}$, we will show that the RHS of the new equation is larger than the LHS. Since the LHS overtakes the RHS at $\nu = \alpha$, the root will have to lie in $[\alpha - \alpha^{\coin+1}, \alpha]$. For $\alpha \leq 0.29$, the following the chain of inequalities hold.
    \begin{equation}
        \notag
        \begin{split}
            0 &\geq -2\alpha^2 + 4\alpha - 1 \\
            (1-\alpha) &\geq \alpha^2 + 3(1-\alpha)\alpha \\
            (1-\alpha) \alpha^{\coin-1} &\geq \alpha^{\coin+1} + 3(1-\alpha)\alpha^{\coin} \\
            (1-\alpha) \alpha^{\coin-1} &\geq \alpha^{\coin+1} + (1-\alpha)\alpha^{\coin} [1 + (1-\alpha^{\coin}) + (1-\alpha^{\coin})^2] \\
            (1-\alpha + \alpha^{\coin+1}) \times \alpha^{\coin-1} &\geq (1-\alpha)(1-[1-\alpha^{\coin}]^3) + \alpha^{\coin+1} \\
            \alpha^{\coin+1} &\geq \alpha^2 \times [1 - (1-\alpha) \frac{(1-\alpha^{\coin})^3}{1-\alpha + \alpha^{\coin+1}}]
        \end{split}
    \end{equation}
    Showing $$\alpha - \alpha^{\coin+1} \leq \alpha - \alpha^2 \times [1 - (1-\alpha) \frac{(1-\alpha^{\coin})^3}{1-\alpha + \alpha^{\coin+1}}] = \alpha(1-\alpha) + (1-\alpha)[\alpha - \alpha^{\coin+1}]^2 \times \frac{1-\alpha^{\coin}}{1-\alpha + \alpha^{\coin+1}}$$
    is now straightforward, completing the proof.
\end{proof}

\section{Errors in Estimation through Finite Sampling} \label{sec:ErrorFinSampling}

Given a distribution $\D_0$, we want to approximate the sequence of distributions $\big( \D_t \big)_{0 \leq t \leq \round}$ such that $\D_{t+1} = \kaddl{\D_{t}}$.

\subsection{Bounding Expected Rewards through Chernoff Bound and the McDiarmid's Inequality} \label{sec:McD}

Note that we need an estimate of the $\CDF$ of each $\D_t$ and just getting an estimate of the expected value of $\D_t$ is not sufficient. While the Chernoff bound only gives an estimate of the expected reward of $\D_{t+1}$ given the distribution $\D_t$, using the McDiarmid's inequality to directly get tail bounds on the expected value of $\D_\round$ requires an exponential number of samples in $\round$ as discussed below.

As a thought experiment, we use $\samples$ samples to get an approximation $\apxd{1}$ of the distribution $\D_1$, and allow $\kaddl{\cdot}$ to construct infinitely many samples to approximate distributions from there on to get $\apxd{2}, \apxd{3}, \dots, \apxd{\round -1}, \apxd{\round}$. We sketch a `fake' proof using the McDiarmid's inequality to get an idea on the number of samples that would be recommended by McDiarmid's inequality if we want to bound the error in estimating the expected reward by (say) $0.001$ with probability at least $99\%$.

\begin{theorem}[McDiarmid's inequality, \citealp{McD89}] \label{thm:McD}
    Let $f: \mathcal{X}_1 \times \mathcal{X}_2 \times \dot \times \mathcal{X}_n \xrightarrow{} \R$ satisfy the $(c_1, \dots, c_n)$-bounded difference property for $\delta_1, \delta_2, \dots, \delta_n \geq 0$, i.e.,
    $$\sup_{x_i' \in \mathcal{X}_i} |f(x_1, x_2, \dots, x_i, \dots, x_n) - f(x_1, x_2, \dots, x_i', \dots, x_n)| \leq \delta_i$$
    for all $(x_1, \dots, x_n) \in \mathcal{X}_1 \times \mathcal{X}_2 \times \dot \times \mathcal{X}_n$ and for all $1 \leq i \leq n$. Then, for independent random variables $X_1 \in \mathcal{X}_1, X_2 \in \mathcal{X}_2, \dots, X_n \in \mathcal{X}_n$, and $\epsilon \geq 0$
    $$Pr(|f(X_1, \dots, X_n) - \E[f(X_1, \dots, X_n)]| \geq \epsilon) \leq 2e^{-\tfrac{2\epsilon^2}{\sum_{i = 1}^n \delta_i^2}}$$
\end{theorem}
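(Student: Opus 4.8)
The plan is to prove McDiarmid's inequality by the standard route: pass to the Doob (conditional-expectation) martingale of $f(X_1, \dots, X_n)$, show its increments are conditionally bounded with the prescribed widths $\delta_i$, and then run the Azuma--Hoeffding exponential-moment argument, whose analytic core is Hoeffding's lemma on the moment generating function of a bounded, mean-zero random variable.

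First I would set, for $0 \le i \le n$, the random variable $Y_i := \E[f(X_1,\dots,X_n) \mid X_1,\dots,X_i]$, so that $Y_0 = \E[f(X_1,\dots,X_n)]$ is deterministic, $Y_n = f(X_1,\dots,X_n)$, and by the tower property $(Y_i)$ is a martingale with respect to the filtration generated by $X_1,\dots,X_n$. Writing $Z_i := Y_i - Y_{i-1}$ for the increments, we have the telescoping identity $f(X_1,\dots,X_n) - \E[f(X_1,\dots,X_n)] = \sum_{i=1}^n Z_i$, and $\E[Z_i \mid X_1,\dots,X_{i-1}] = 0$.

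The key step is the bound $Z_i \in [L_i, L_i + \delta_i]$ conditional on $X_1,\dots,X_{i-1}$, for some $L_i$ measurable with respect to $X_1,\dots,X_{i-1}$. To see this, use independence to write $Y_i = g_i(X_1,\dots,X_i)$ where $g_i(x_1,\dots,x_i) := \E[f(x_1,\dots,x_i,X_{i+1},\dots,X_n)]$, and $Y_{i-1} = \E_{X_i}[g_i(X_1,\dots,X_{i-1},X_i)]$. The $\delta_i$-bounded-difference property of $f$, applied pointwise inside the expectation over $X_{i+1},\dots,X_n$, gives $\sup_{x_i} g_i(x_1,\dots,x_i) - \inf_{x_i} g_i(x_1,\dots,x_i) \le \delta_i$; hence, conditioned on $X_1,\dots,X_{i-1}$, the variable $Y_i$ ranges over an interval of length at most $\delta_i$, and so does $Z_i = Y_i - Y_{i-1}$. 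Then, with $S := \sum_{i=1}^n Z_i$ and $s > 0$, peel off the last increment,
\[
\E[e^{sS}] = \E\!\left[ e^{s\sum_{i<n} Z_i}\,\E[e^{sZ_n}\mid X_1,\dots,X_{n-1}] \right] \le e^{s^2 \delta_n^2/8}\,\E[e^{s\sum_{i<n} Z_i}],
\]
where the inequality is Hoeffding's lemma applied to the conditionally mean-zero variable $Z_n$ supported on an interval of width $\le \delta_n$. Iterating yields $\E[e^{sS}] \le \exp(s^2 \sum_{i=1}^n \delta_i^2 / 8)$, and Markov's inequality gives $Pr(S \ge \epsilon) \le \exp(s^2 \sum_i \delta_i^2/8 - s\epsilon)$; choosing $s = 4\epsilon/\sum_i \delta_i^2$ makes the exponent $-2\epsilon^2/\sum_i \delta_i^2$. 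Repeating the argument for $-f$ (which has the same bounded-difference constants) and taking a union bound over the two tails produces the stated two-sided bound with the factor $2$.

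The main obstacle is the conditional bounded-difference step for the martingale increments: one must verify that integrating out $X_{i+1},\dots,X_n$ and then measuring the spread over $X_i$ stays within $\delta_i$. This is exactly where both hypotheses — independence of the $X_j$ and the bounded-difference property holding for \emph{every} fixed value of the remaining coordinates — are genuinely used; once it is in hand, the telescoping of conditional moment generating functions, Hoeffding's lemma, and the optimization over $s$ are routine.
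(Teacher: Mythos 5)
Your proof is correct: the Doob martingale $Y_i = \E[f(X_1,\dots,X_n)\mid X_1,\dots,X_i]$, the observation that independence plus the coordinate-wise bounded-difference property confines each increment $Z_i$ to an interval of length $\delta_i$ conditionally on the past, Hoeffding's lemma applied to the conditional moment generating functions, and the optimization $s = 4\epsilon/\sum_i\delta_i^2$ with a two-sided union bound is exactly the standard argument, and all the constants work out to the stated bound $2e^{-2\epsilon^2/\sum_i\delta_i^2}$. Note that the paper does not prove this statement at all --- it is imported as a known result of \citet{McD89} and used only to argue that a naive concentration-based analysis of the simulation would need too many samples --- so there is no in-paper proof to compare against; your argument is the canonical one and fills that role correctly.
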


The entire simulation can be thought of as a function of the samples $\big( s_{\ell} \big)_{1 \leq \ell \leq \samples}$ drawn to construct $\apxd{1}$. To use the McDiarmid's inequality We want to find a constant $\delta$ such that changing a single sample $s_{\ell}$ would not change the outcome of the simulation by more than $\delta$.

Suppose we `taint' a sample $s_{\ell}$ in the approximation $\apxd{1}$ (this amounts to tainting $\tfrac{1}{\samples}$ fraction of the samples constructed for $\apxd{1}$). For each of the infinitely many samples we construct to get $\apxd{2}$ from $\apxd{1}$, we draw a sample for each of the adversary's wallets with the $\coin$ smallest scores. Thus, $\tfrac{\coin}{\samples}$ fractions of the samples constructed for $\apxd{2}$ would have used $s_{\ell}$ and are tainted. This is ignoring the fact that we take an expectation with respect $r_0$, the reward from remaining silent; the number of tainted samples only gets worse if this gets accounted. Inductively, $\frac{\coin^{\round - 1}}{\samples}$ samples of $\apxd{\round}$ are tainted. All tainted samples in round $\round$ can potentially increase from $-\round \, \lambda$ to $\round \, (1-\lambda)$. Thus, the expected value of the distribution can increase by $\delta = \tfrac{\coin^{\round - 1}}{\samples} \, \round$.

For $\delta$ described above, $\round = 5$ and $\coin = 5$, and for an error in the estimated reward of at most $0.001$ with probability $99\%$, we would require $\samples = 0.75 \times 10^{11}$. The choices of $\round, \coin$ and the confidence probability are very mild and we need a huge number of samples just to account for the error from finitely sampling $\apxd{1}$. The number of samples would only grow if $\apxd{2}, \dots, \apxd{\round}$ are also constructed through finite sampling. Of course, the analysis is extremely loose and can be improved. The number of samples recommended by any such concentration bounds turn out to be extremely large.

\subsection{Bias} \label{sec:Bias}
We revisit the computation of the adversary's reward when its $\coin$ smallest scores are $\AdvC$ and the corresponding rewards are $\AdvR$. The reward from broadcasting the credential $i$ (or remaining silent) equals $e^{-c_i \, (1-\beta)\,(1-\alpha)} \, (r_i + \mathbbm{1}(i \neq 0))$ (assuming a stopping time is not reached). The adversary chooses the action that maximizes its reward.

Consider a toy version of the adversarial game over just two rounds. In the first round, a coin with heads probability $\frac{1}{\coin}$ is tossed $\samples$ times. Let the empirically observed probability of heads be $p_1$. For each trial in the second round, toss $\coin$ coins each with heads probability $p_1$. The outcome is $1$ even if one of $\coin$ tosses turns out heads (we take the maximum amongst $\coin$ Bernoulli trials, similar to the adversarial game). Repeat the trial $\samples$ times to observe an empirical probability $p_2$.

We verify that $p_2$ does not remain constant for $\coin = 2$ over different choices of $\samples$. If $\samples = \infty$, $p_1 = \tfrac{1}{\coin} = \tfrac{1}{2}$ and $p_2 = 1 - (1-\tfrac{1}{k})^2 = \tfrac{3}{4}$. However, for $\samples = 10$ samples, $p_1 = \tfrac{m}{\samples}$ with probability $\binom{\samples}{m} \cdot 2^{-\samples}$ and $\E[p_2] = \sum_{m = 0}^n \big[1 - \big(1 - \tfrac{m}{\samples}\big)^2\big] \times \binom{\samples}{m} \cdot 2^{-\samples} \approx 0.784 \neq 0.75$

A similar bias creeps into the adversarial game. The effect of the bias on the estimated rewards worsens with the number of rounds $\round$ and becomes better with the number of samples $\samples$.

\subsection{Proof of Theorem~\ref{thm:SampleSummary}} \label{sec:InflDefl}
We first show the following concentration inequality between the $\CDF$s of a distribution $D$ and a distribution constructed empirically by sampling from $D$.

\begin{theorem} \label{thm:InflDefl} Let $\D$ be some distribution supported on $[-\lambda, t \, (1-\lambda)]$ and let $\LBD{}$ and $\UBD{}$ be the result of $\Defl{\chernoff, \hat{\D}}$ and $\Infl{\chernoff, \strat, \hat{\D}}$ respectively, where $\hat{D}$ is constructed by sampling $\samples$ times from $\D$.
\begin{enumerate}
    \item $\LBD{}$ is dominated by $\D$ with probability at least $1-\chernoff$.
    \item $\UBD{}$ dominates $\D$ with probability at least $1 - \Big(\chernoff +  \tfrac{e^{-\strat \samples}}{\strat} \, \sqrt{\tfrac{\ln \chernoff^{-1}}{2\samples}} \Big)$
\end{enumerate}
\end{theorem}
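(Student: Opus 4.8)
The plan is to derive both parts from a one-sided Dvoretzky--Kiefer--Wolfowitz inequality~\citep{DKW56}: writing $F_{D}$ for the CDF of a distribution $D$ and $\hat{\D}$ (CDF $\hat{F}$) for the empirical distribution of the $\samples$ i.i.d.\ draws from $\D$, we have for every $\delta_0\ge 0$ both $\Pr\big(\sup_x (F_{\D}(x)-\hat{F}(x))>\delta_0\big)\le e^{-2\samples\delta_0^2}$ and $\Pr\big(\sup_x (\hat{F}(x)-F_{\D}(x))>\delta_0\big)\le e^{-2\samples\delta_0^2}$. Taking $\delta_0:=\sqrt{\tfrac{\ln\chernoff^{-1}}{2\samples}}$ makes each right-hand side exactly $\chernoff$, and $\delta_0\samples$ is precisely the number of samples deleted by $\Defl{\chernoff,\hat{\D}}$ and by the first step of $\Infl{\chernoff,\strat,\hat{\D}}$. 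I will condition on the relevant DKW ``good event'' throughout and treat quantities such as $\delta_0\samples$ and $\strat\samples$ as integers, rounding being routine.

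For deflation, condition on $\{\hat{F}(x)\ge F_{\D}(x)-\delta_0\text{ for all }x\}$, which has probability at least $1-\chernoff$. Deflation deletes the $\delta_0\samples$ largest samples of $\hat{\D}$ and appends $\delta_0\samples$ copies of $-\lambda=\inf\mathrm{supp}(\D)$, so a direct count gives, for $x\ge-\lambda$, that $F_{\LBD{}}(x)=\delta_0+\min(\hat{F}(x),\,1-\delta_0)$ — the $\delta_0$ from the appended copies, the $\min$ from the retained $(1-\delta_0)\samples$ smallest samples. When $\hat{F}(x)\le 1-\delta_0$ this is $\delta_0+\hat{F}(x)\ge F_{\D}(x)$ by the conditioning; when $\hat{F}(x)>1-\delta_0$ it is $1\ge F_{\D}(x)$; and for $x<-\lambda$ both CDFs are $0$. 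Hence $F_{\LBD{}}\ge F_{\D}$ pointwise, i.e.\ $\D$ stochastically dominates $\LBD{}$, which is part~1.

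For inflation the extra care is needed because $\Infl{\chernoff,\strat,\hat{\D}}$ deliberately does not replace the deleted $\delta_0\samples$ samples by copies of the supremum $t(1-\lambda)$ (which alone would dominate on the DKW event but would overstate the mean badly); instead it appends only $\strat\samples$ copies of $t(1-\lambda)$ together with $\strat\samples$ extra copies of each of the top $L:=\lceil\delta_0/\strat\rceil-1<\delta_0/\strat$ order statistics $s_1\ge\cdots\ge s_L$ of $\hat{\D}$. I condition on two events. First, the DKW event $\{\hat{F}(x)\le F_{\D}(x)+\delta_0\text{ for all }x\}$, of probability at least $1-\chernoff$. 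Second, for each $\ell\in\{1,\dots,L\}$, the event $\mathcal{B}_\ell$ that at least one of the $\samples$ samples lands in $\D$'s ``$\ell$-th top-$\strat$ block'', the interval between the $((\ell+1)\strat)$- and the $(\ell\strat)$-upper quantiles of $\D$; since that block carries $\D$-mass $\strat$, $\Pr(\overline{\mathcal{B}_\ell})\le(1-\strat)^{\samples}\le e^{-\strat\samples}$, and a union bound over the $L<\delta_0/\strat$ blocks fails with probability at most $\tfrac{\delta_0}{\strat}e^{-\strat\samples}=\tfrac{e^{-\strat\samples}}{\strat}\sqrt{\tfrac{\ln\chernoff^{-1}}{2\samples}}$. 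Added to the DKW failure, this matches the budget in part~2 exactly. On $\mathcal{B}_1\cap\cdots\cap\mathcal{B}_L$ the blocks are disjoint, so for every $\ell$ there are at least $\ell$ samples above the $((\ell+1)\strat)$-upper quantile of $\D$, forcing $s_\ell$ above it.

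It then remains to show, on all these events, that $F_{\UBD{}}\le F_{\D}$ pointwise. The idea is a block-by-block comparison of survival functions: pair $\D$'s successive top-$\strat$-mass blocks, in order, with the $\strat\samples$ copies of $t(1-\lambda)$, then the $\strat\samples$ copies of $s_1$, then of $s_2$, \dots, then of $s_L$ (each large enough, by the previous paragraph, to dominate the block it is paired with), and pair the rest of $\D$'s mass with the retained lower samples $s_{L+1},\dots,s_{(1-\delta_0)\samples}$, whose empirical CDF lies within $\delta_0$ of $F_{\D}$ by the DKW event. Summing the resulting inequalities and accounting for the total sample count $M$ of $\UBD{}$ lying in $[\samples,(1+\strat)\samples]$ gives $1-F_{\UBD{}}(x)\ge 1-F_{\D}(x)$ for all $x$. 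I expect this last step to be the main obstacle: one must choose the block quantiles and line them up with the deleted / retained / duplicated sample counts so the blocks tile $\D$ exactly, and verify that at each block boundary — where $1-F_{\UBD{}}$ is locally flat while $1-F_{\D}$ can still be as large as the DKW slack $\delta_0$ permits — the normalisation by $M=(1+o(1))\samples$ never lets $\UBD{}$ dip below $\D$; this is precisely what forces $\Infl{\chernoff,\strat,\hat{\D}}$ to duplicate the top $\lceil\delta_0/\strat\rceil-1$ order statistics and use blocks of size $\strat\samples$. The deflation direction and the probability accounting are routine by comparison.
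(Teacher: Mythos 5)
Your part~1 argument is correct and is essentially the paper's own deflation proof transcribed from (upper-)quantile space into CDF space: the same one-sided DKW event at level $\delta_0:=\sqrt{\tfrac{\ln \chernoff^{-1}}{2\samples}}$, with the appended copies of $-\lambda$ absorbing the $\delta_0$ quantile shift. Your probability budget for part~2 also matches the paper's: $\chernoff$ for the DKW event plus $\tfrac{\delta_0}{\strat}e^{-\strat\samples}$ from a union bound over fewer than $\delta_0/\strat$ quantile blocks of $\D$-mass $\strat$, each hit by at least one sample except with probability $(1-\strat)^{\samples}\le e^{-\strat\samples}$.

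The gap is in the inflation domination itself, and it comes from an off-by-one in your block events, compounded by the fact that the closing comparison is only sketched. Your $\mathcal{B}_\ell$ lives on the block between the $((\ell+1)\strat)$- and $(\ell\strat)$-upper quantiles for $\ell\ge 1$, so the topmost block $(0,\strat]$ is never used, and the intersection of the $\mathcal{B}_\ell$'s only yields $s_\ell\ge\D^{-1}((\ell+1)\strat)$ --- the \emph{bottom} of the block you pair with the copies of $s_\ell$. In $\UBD{}$ those copies occupy (roughly) the upper-quantile range $(\ell\strat,(\ell+1)\strat]$, so domination there needs $s_\ell\ge\D^{-1}(\q)$ for every $\q$ just above $\ell\strat$, i.e.\ essentially $s_\ell\ge\D^{-1}(\ell\strat)$; nothing in your event prevents $s_\ell$ from sitting near the bottom of its block while $\D^{-1}(\q)$ near the block's top exceeds it, so the parenthetical ``each large enough \dots to dominate the block it is paired with'' is not established. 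The fix --- and what the paper does --- is to place the blocks at the very top, $((j-1)\strat,\,j\strat]$ for $j=1,\dots,L$ including $(0,\strat]$: hitting the first $\ell$ of them gives at least $\ell$ samples above the $(\ell\strat)$-upper quantile, hence $s_\ell\ge\D^{-1}(\ell\strat)$, at identical union-bound cost. With that corrected bound the step you flag as ``the main obstacle'' disappears, and note that ``summing'' blockwise survival-function inequalities would not in any case give the pointwise comparison you need; argue quantile by quantile instead (the paper's route): for $\q\le\delta_0$ the value of $\UBD{}$ at upper quantile $\q\in(m\strat,(m+1)\strat]$ is at least $s_m$ (with $s_0=t\,(1-\lambda)$), and $s_m\ge\D^{-1}(m\strat)\ge\D^{-1}(\q)$; for $\q>\delta_0$, deleting the bottom $\delta_0\samples$ samples and appending $\delta_0\samples$ samples at the top gives $\UBD{}^{-1}(\q)\ge\apxd{}^{-1}(\q-\delta_0)\ge\D^{-1}(\q)$ on the DKW event --- all up to the integrality conventions you already set aside (in particular taking $\delta_0/\strat$ integral so that the appended and deleted counts match, which also disposes of your worry about the normalisation by the total sample count).
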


For a distribution $\D$ and a quantile $\q$, let the inverse $\CDF$ $\D^{-1}(\q)$ denote the value corresponding to the quantile $\q$\footnote{Technically, the value corresponding to a quantile $q$ is given by $D^{-1}(1-q)$. For cleanliness, we pretend that the $\CDF$ $D(s)$ describes the probability $Pr_{S \sim D}(S \geq s)$ and thus, $D^{-1}(q)$ is the value associated to the quantile $q$}.

\begin{theorem}[DKW inequality; \citealp{DKW56, Mas90}] Let $\D$ be some distribution and let $\apxd{}$ be constructed by drawing $\samples$ samples from $\D$. Then,
\begin{equation}
    \notag
    Pr \big(\apxd{}^{-1}(\q + \sqrt{\tfrac{\ln \chernoff^{-1}}{2 \samples}}) \leq \D^{-1}(\q) \leq \apxd{}^{-1}(\q - \sqrt{\tfrac{\ln \chernoff^{-1}}{2 \samples}}) \text{ for all } \q \in [0,1] \big) \geq 1-\chernoff
\end{equation}
\end{theorem}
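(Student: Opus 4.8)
The statement is the quantile (inverse-CDF) form of the classical Dvoretzky--Kiefer--Wolfowitz inequality, written under the paper's flipped-CDF convention in which $\D(s)$ denotes $Pr_{S \sim \D}(S \geq s)$. The plan is to peel off the two layers of notational packaging --- the flipped convention and the quantile formulation --- until we reach the textbook empirical-process statement, and then invoke it.

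\emph{Step 1 (undo the flipped convention).} First I would set $G(s) := 1 - \D(s) = Pr_{S \sim \D}(S < s)$, the ordinary CDF, with empirical counterpart $\hat{G}$ built from the same $\samples$ samples; then $\D^{-1}(\q) = G^{-1}(1-\q)$ and $\apxd{}^{-1}(\q) = \hat{G}^{-1}(1-\q)$, up to the usual generalized-inverse conventions. Writing $p := 1-\q \in [0,1]$ and $\varepsilon := \sqrt{\tfrac{\ln \chernoff^{-1}}{2\samples}}$, the event in the theorem becomes the familiar two-sided quantile-deviation event: $\hat{G}^{-1}(p-\varepsilon) \leq G^{-1}(p) \leq \hat{G}^{-1}(p+\varepsilon)$ for all $p \in [0,1]$.

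\emph{Step 2 (reduce to uniform samples).} Next I would apply the probability integral transform: if $U_1,\dots,U_\samples$ are i.i.d.\ $U[0,1]$, then $\big(G^{-1}(U_i)\big)_i$ are i.i.d.\ $\sim \D$, and since $G^{-1}$ is nondecreasing it commutes with every order comparison between quantiles appearing in Step~1 (flat stretches of $G$ and ties among samples are absorbed by the standard conventions for generalized inverses together with the right-continuity of the empirical CDF). So it suffices to treat $\D = U[0,1]$ and $\hat{G} = \hat{U}_\samples$, the uniform empirical CDF. In that case the quantile event of Step~1 is implied by the two Kolmogorov--Smirnov-type events $\sup_{p \in [0,1]}(\hat{U}_\samples(p) - p) \leq \varepsilon$ and $\sup_{p \in [0,1]}(p - \hat{U}_\samples(p)) \leq \varepsilon$.

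\emph{Step 3 (invoke one-sided DKW with the sharp constant).} Finally I would use the one-sided DKW inequality with Massart's optimal constant, $Pr\big(\sup_p (\hat{U}_\samples(p) - p) \geq \varepsilon\big) \leq e^{-2\samples\varepsilon^2}$, together with its mirror image (via $p \mapsto 1-p$) for $\sup_p(p - \hat{U}_\samples(p))$. With the chosen $\varepsilon$ each one-sided bad event has probability at most $e^{-2\samples\varepsilon^2} = \chernoff$, which gives each of the two inequalities in the theorem's event individually with probability $\geq 1-\chernoff$, and a union bound yields them simultaneously. Steps~1--2 are routine bookkeeping; the substance is entirely in Step~3, and in particular in the constant $2$ in the exponent --- a soft concentration argument (Hoeffding or McDiarmid applied pointwise, then a covering/union bound over $p$) recovers an inequality of DKW shape but with a strictly worse exponent, so the sharp form genuinely requires the empirical-process analysis of \citet{DKW56} and \citet{Mas90} (for instance, bounding the Birnbaum--Tingey exact formula for the one-sided Kolmogorov--Smirnov statistic by $e^{-2\samples\varepsilon^2}$, or a comparison with the Brownian bridge). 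I expect this to be the only non-mechanical ingredient, and I would take it as a cited black box.
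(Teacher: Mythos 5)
The paper does not actually prove this statement anywhere: it is imported verbatim as a cited black box (\citealp{DKW56,Mas90}) and consumed in \Cref{thm:Defl} and \Cref{thm:Infl}, so there is no in-paper argument to compare against. Your reduction --- undo the flipped-CDF convention, pass to uniform samples via the probability integral transform, and invoke the sharp one-sided Kolmogorov--Smirnov/Massart bound $Pr\big(\sup_p(\hat{U}_\samples(p)-p)\geq \varepsilon\big)\leq e^{-2\samples\varepsilon^2}$ --- is the standard and correct way to obtain the quantile form, and your decision to treat the sharp exponent as a cited ingredient matches the paper's own treatment.

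One quantitative gap, though: your final step does not deliver the inequality exactly as stated. Each one-sided event holds with probability at least $1-\chernoff$ when $\varepsilon=\sqrt{\tfrac{\ln \chernoff^{-1}}{2\samples}}$, but the union bound over the two sides only gives the \emph{simultaneous} two-sided event with probability at least $1-2\chernoff$; to recover $1-\chernoff$ for the two-sided event one needs $\varepsilon=\sqrt{\tfrac{\ln(2\chernoff^{-1})}{2\samples}}$ (equivalently, Massart's two-sided bound $2e^{-2\samples\varepsilon^2}$). So as written your argument proves the theorem with $\chernoff$ replaced by $2\chernoff$, and the discrepancy is really in the paper's transcription of the theorem rather than in your reasoning: the downstream lemmas (\Cref{thm:Defl} uses only $\apxd{}^{-1}(\q+\varepsilon)\leq \D^{-1}(\q)$, \Cref{thm:Infl} only the other side) each consume a single one-sided inequality at confidence $1-\chernoff$, which your Step 3 does establish. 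You should either state the result one-sidedly, or adjust the constant inside the logarithm, rather than assert that a union bound ``yields them simultaneously'' at level $1-\chernoff$.
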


\begin{lemma} \label{thm:Defl}
    Let $\LBD{} = \Defl{\chernoff, \D}$ for some distribution $\D$ with infimum $\psi$. $\LBD{}$ is dominated by $\D$ with probability at least $1-\chernoff$.
\end{lemma}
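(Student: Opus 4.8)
The plan is to argue entirely in the quantile domain, using the paper's convention (from the footnote preceding the $\mathrm{DKW}$ statement) that for a distribution $F$ whose $\CDF$ is written as $F(s) = \Pr_{S\sim F}(S \ge s)$, the inverse $F^{-1}(\q)$ is the value at quantile $\q$; thus $F^{-1}$ is non-increasing, $F^{-1}(0)$ is the supremum of the support, and $F^{-1}(1) = \inf F$. Recall that $\D$ stochastically dominates $\D'$ precisely when $\D'^{-1}(\q) \le \D^{-1}(\q)$ for all $\q \in [0,1]$. So it suffices to exhibit an event of probability at least $1-\chernoff$ on which $\LBD{}^{-1}(\q) \le \D^{-1}(\q)$ for every $\q$.

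First I would unpack $\Defl{\chernoff, \D}$ at the level of quantile functions. Let $\hat{\D}$ be the empirical distribution of the $\samples$ samples fed into $\Defl{\chernoff, \D}$, and set $\delta := \sqrt{\ln \chernoff^{-1}/(2\samples)}$. Deflation deletes the $\samples\delta$ \emph{largest} samples of $\hat{\D}$ — which, in this convention, are exactly those occupying quantiles $[0,\delta)$ — and appends $\samples\delta$ copies of $-\lambda$. Since $-\lambda$ is at most the infimum $\psi$ of $\D$ (hence at most every sample of $\hat{\D}$), the appended values fall at the bottom of the order, and a short computation gives $\LBD{}^{-1}(\q) = \hat{\D}^{-1}(\q+\delta)$ for $\q \in [0,1-\delta)$ and $\LBD{}^{-1}(\q) = -\lambda$ for $\q \in [1-\delta,1]$. (The only bookkeeping subtlety, dispatched in one line, is that $\samples\delta$ should be read as $\lceil \samples\delta\rceil$ when it is not an integer, which only helps.)

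Next I would invoke the $\mathrm{DKW}$ inequality stated above for $\D$ and its $\samples$-sample empirical distribution $\hat{\D}$: with probability at least $1-\chernoff$, $\hat{\D}^{-1}(\q+\delta) \le \D^{-1}(\q)$ holds simultaneously for all $\q \in [0,1]$. Conditioning on this event: for $\q \in [0,1-\delta)$ we get $\LBD{}^{-1}(\q) = \hat{\D}^{-1}(\q+\delta) \le \D^{-1}(\q)$ directly; for $\q \in [1-\delta,1]$ we instead use $\LBD{}^{-1}(\q) = -\lambda \le \psi = \D^{-1}(1) \le \D^{-1}(\q)$, the final step being monotonicity of $\D^{-1}$. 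Combining the two ranges yields $\LBD{}^{-1}(\q) \le \D^{-1}(\q)$ for every $\q \in [0,1]$, i.e. $\D$ stochastically dominates $\LBD{}$, and this holds with probability at least $1-\chernoff$, as required.

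The only genuinely delicate point — more a matter of care than of difficulty — is keeping the quantile orientation consistent (deletion occurs at the $[0,\delta)$ end and the $\mathrm{DKW}$ shift is therefore "$+\delta$") together with the fact that $-\lambda$ is indeed a lower bound on the support of every reward distribution $\D$ that arises in the simulation, so that $\psi \ge -\lambda$. This last fact I would justify in a sentence: $\lambda \le 1$ and the quantity $\max_{0 \le i \le i^*(\vecc)}\{e^{-c_i(1-\beta)(1-\alpha)}(r_i + \mathbbm{1}(i \ne 0))\}\cdot\mathbbm{1}(i^*(\vecc)\ne 0)$ appearing inside each sample $s_\ell$ is non-negative (it is either $0$ or, taking some $i \ge 1$, a product of $e^{-c_i(\cdots)} \ge 0$ and $r_i + 1 \ge 1-\lambda \ge 0$), whence $s_\ell \ge -\lambda$ and so $\psi \ge -\lambda$. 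Everything else is a direct substitution into $\mathrm{DKW}$.
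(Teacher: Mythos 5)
Your proposal is correct and follows essentially the same route as the paper's proof: rewrite deflation as a shift of the empirical quantile function by $\delta = \sqrt{\ln\chernoff^{-1}/(2\samples)}$, apply the one-sided DKW bound $\apxd{}^{-1}(\q+\delta) \leq \D^{-1}(\q)$ on the good event, and handle the bottom quantile range $[1-\delta,1]$ via the infimum. The only addition is your explicit check that $-\lambda$ lower-bounds the support, which the paper defers to the proof of \Cref{thm:SampleSummary}; this is a harmless (indeed helpful) elaboration rather than a different argument.
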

\begin{proof}
    Consider the distribution $\apxd{}$ constructed from drawing $\samples$ samples from $\D$. By the DKW inequality, $\apxd{}^{-1}(\q+\sqrt{\tfrac{\ln \chernoff}{2\samples}}) \leq \D^{-1}(\q)$. By deleting the $\sqrt{\tfrac{\ln \chernoff}{2\samples}}$ smallest (strongest) quantiles of $\apxd{}$ and inserting the infimum $-\lambda$ of $\D$ to construct $\LBD{}$, we decrease the quantile of each sample by $\sqrt{\tfrac{\ln \chernoff}{2\samples}}$. Thus,
    $$\LBD{}^{-1}(\q) = \apxd{}^{-1}(\q + +\sqrt{\tfrac{\ln \chernoff}{2\samples}}) \leq \D^{-1}(\q)$$
    for all $\q \in [0, 1 - \sqrt{\tfrac{\ln \chernoff}{2\samples}}]$. For $\q \in [1 - \sqrt{\tfrac{\ln \chernoff}{2\samples}}, 1]$,
    $$\LBD{}^{-1}(\q) = \psi \leq \D^{-1}(\q)$$
\end{proof}

\begin{lemma} \label{thm:Infl}
    Let $\UBD{} = \Infl{\chernoff, \strat, \D}$ for some distribution $\D$ with supremum $\Psi$. $\UBD{}$ dominates $\D$ with probability at least $1-\Big(\chernoff + \tfrac{e^{-\strat \samples}}{\strat} \, \sqrt{\tfrac{\ln \chernoff^{-1}}{2\samples}} \Big)$.
\end{lemma}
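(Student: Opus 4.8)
The plan is to condition on a high-probability event and then verify \emph{deterministically} that $\UBD{}$ dominates $\D$, i.e.\ that $\UBD{}^{-1}(q) \ge \D^{-1}(q)$ for every $q \in [0,1]$ (recall $\D(s) = Pr_{S\sim\D}(S \ge s)$, so a larger quantile $q$ means a smaller value). Write $\Delta := \sqrt{\ln\chernoff^{-1}/(2\samples)}$ for the DKW half-width, $L := \lceil\Delta/\strat\rceil - 1$ for the number of loop iterations in $\Infl{\chernoff,\strat,\cdot}$, and $M := t(1-\lambda) \ge \Psi$, an upper bound on the support of $\D$. Let $s_1 \ge \dots \ge s_\samples$ be the samples and $\apxd{}$ their empirical distribution. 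Unwinding the three steps of $\Infl{}$, $\UBD{}$ is the uniform distribution over the multiset with $\strat\samples$ copies of $M$, one copy of each of $s_1,\dots,s_{\samples(1-\Delta)}$, and $\strat\samples$ extra copies of each of $s_1,\dots,s_L$; for cleanliness I will assume $\samples\Delta$, $\strat\samples$, and $\Delta/\strat$ are integers, so that $\strat(L+1) = \Delta$ and this multiset has exactly $\samples$ elements (the general case only changes constants via rounding). A standing assumption, needed for $\Infl{}$ to be well defined, is that $\samples$ is large enough that $\Delta \le \tfrac12$, so that $s_1,\dots,s_L$ survive the deletion step.

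Reading off $\UBD{}^{-1}$, partition $[0,1]$ into three ranges. On $q\in[0,\strat]$ we have $\UBD{}^{-1}(q) = M \ge t(1-\lambda) \ge \D^{-1}(q)$ because $\D$ is supported on $[-\lambda, t(1-\lambda)]$; this holds deterministically. On the tail range $q \in (\Delta + L/\samples,\,1]$ one computes $\UBD{}^{-1}(q) = \apxd{}^{-1}(q-\Delta)$, so the DKW inequality (which holds with probability $\ge 1-\chernoff$) gives $\UBD{}^{-1}(q) = \apxd{}^{-1}(q-\Delta) \ge \D^{-1}(q)$. The middle range $q\in(\strat,\,\Delta + L/\samples]$ is the disjoint union of intervals $I_1,\dots,I_L$ on which $\UBD{}^{-1}\equiv s_\ell$, and the left endpoint of $I_\ell$ is $a_\ell := \strat + (\ell-1)(1+\strat\samples)/\samples = \ell\strat + (\ell-1)/\samples \ge \ell\strat$. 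Since $\D^{-1}$ is non-increasing, to cover this range it suffices that $s_\ell \ge \D^{-1}(a_\ell)$, i.e.\ $\D(s_\ell) \le a_\ell$, and in particular that the event $B_\ell := \{\D(s_\ell) > \ell\strat\}$ fails for each $\ell = 1,\dots,L$.

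It remains to bound $Pr(B_\ell)$. Since $s_\ell$ is the $\ell$-th largest of $\samples$ i.i.d.\ draws, $B_\ell$ says that fewer than $\ell$ of them fall in the upper tail $\{v : \D(v) \le \ell\strat\}$ of $\D$-mass $\ell\strat$, so $Pr(B_\ell) = Pr(\mathrm{Bin}(\samples,\ell\strat) \le \ell-1)$. For $\ell=1$ this is exactly $(1-\strat)^\samples \le e^{-\strat\samples}$; for general $\ell$ a Chernoff lower-tail bound on $\mathrm{Bin}(\samples,\ell\strat)$ (mean $\ell\strat\samples \ge \ell > \ell-1$, using $\strat\samples\ge1$) gives $Pr(B_\ell) \le e^{-\strat\samples}$. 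Taking a union bound over the DKW failure and $B_1,\dots,B_L$ gives a total failure probability of at most $\chernoff + L\,e^{-\strat\samples} \le \chernoff + \tfrac{\Delta}{\strat}e^{-\strat\samples} = \chernoff + \tfrac{e^{-\strat\samples}}{\strat}\sqrt{\tfrac{\ln\chernoff^{-1}}{2\samples}}$, exactly as claimed; on the complementary event (DKW holds and no $B_\ell$ occurs) all three ranges are covered, so $\UBD{}$ dominates $\D$.

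I expect the main obstacle to be the uniform-in-$\ell$ Binomial tail estimate $Pr(\mathrm{Bin}(\samples,\ell\strat)\le\ell-1)\le e^{-\strat\samples}$: the plain multiplicative Chernoff bound deteriorates as $\ell$ grows, so one likely has to bound the largest summand $\binom{\samples}{\ell-1}(\ell\strat)^{\ell-1}(1-\ell\strat)^{\samples-\ell+1}$ directly, using $\strat\samples\ge1$ and $L < \Delta/\strat$ (this is presumably where the surrounding parameter restrictions get used). The remaining work---pinning down $\UBD{}^{-1}$ exactly at the boundaries of the three ranges and tracking the rounding in $L$ (which only moves mass toward $M$, in the favorable direction)---is routine bookkeeping.
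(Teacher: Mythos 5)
Your skeleton is essentially the paper's: cover $q\in[0,\strat]$ by the appended supremum, the tail by the shift identity $\UBD{}^{-1}(q)=\apxd{}^{-1}\big(q-\Delta\big)$ plus DKW (writing $\Delta:=\sqrt{\ln\chernoff^{-1}/(2\samples)}$ as you do), and the middle range by showing the $\ell$-th largest sample $s_\ell$ has quantile at most $\ell\strat$; your endpoint bookkeeping $a_\ell\ge\ell\strat$ is fine. The genuine gap is exactly the step you flagged and left open: the uniform estimate $\Pr(B_\ell)=\Pr\big(\mathrm{Bin}(\samples,\ell\strat)\le\ell-1\big)\le e^{-\strat\samples}$ is not proved, and it is false under your standing assumption $\strat\samples\ge1$. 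For $\ell=2$ and $\strat\samples=1$ the left side is $\Pr(\mathrm{Bin}(\samples,2/\samples)\le 1)\to 3e^{-2}\approx 0.41$, which exceeds $e^{-1}\approx 0.37$; a multiplicative Chernoff bound or a largest-summand computation only rescues this once $\strat\samples$ is a moderately large constant, an assumption the lemma does not grant. Summing the per-$\ell$ bounds is also wasteful: the cheap containment $B_\ell\subseteq\bigcup_{j<\ell}\{\text{no sample in }(j\strat,(j+1)\strat]\}$ gives $\Pr(B_\ell)\le\ell e^{-\strat\samples}$, and $\sum_{\ell\le L}\ell e^{-\strat\samples}$ overshoots the budget $\tfrac{\Delta}{\strat}e^{-\strat\samples}$.

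The fix, which is how the paper argues, is to change what you union-bound over: for each $0\le m<\Delta/\strat$ consider the event that none of the $\samples$ i.i.d.\ draws has quantile in $(m\strat,(m+1)\strat]$; each such event has probability exactly $(1-\strat)^{\samples}\le e^{-\strat\samples}$, and there are at most $\tfrac{\Delta}{\strat}$ of them. On the complement of their union, each of the first $\ell$ intervals contains a sample, so at least $\ell$ samples have quantile at most $\ell\strat$, hence $\D(s_\ell)\le\ell\strat$ for every $\ell\le L$ simultaneously (equivalently, by pigeonhole, $\bigcup_{\ell}B_\ell$ forces some interval to be empty). This yields $\Pr\big(\bigcup_{\ell\le L}B_\ell\big)\le\tfrac{\Delta}{\strat}e^{-\strat\samples}$ with no side condition on $\strat\samples$, and combined with your head-range argument and the DKW failure probability $\chernoff$ it gives exactly the claimed $1-\big(\chernoff+\tfrac{e^{-\strat\samples}}{\strat}\sqrt{\tfrac{\ln\chernoff^{-1}}{2\samples}}\big)$. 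With that substitution your write-up closes; the rest is the rounding bookkeeping you already noted moves mass in the favorable direction.
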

\begin{proof}
    Let $\apxd{}$ be the distribution constructed by sampling $\samples$ times from $\D$. We show that, with very high probability, a sample with quantile $q \in (m \, \strat, (m+1) \, \strat]$ was drawn while constructing $\apxd{}$ for all $0 \leq m < \frac{1}{\strat} \cdot \sqrt{\tfrac{\ln \chernoff^{-1}}{2\samples}}$. Inserting $\strat \samples$ of each such sample with extremely small quantiles would ensure $\UBD{}$ dominates $\D$.

    Consider a single draw from $\D$. The probability that the outcome of the draw does not have a quantile $\q \in (m \, \strat, (m+1) \, \strat]$ equals $1-\strat$. The probability that none of the $\samples$ samples have a quantile in the range $(m \, \strat, (m+1) \, \strat]$ equals $(1-\strat)^{\samples} \leq e^{-\strat \, \samples}$. By a union bound, the probability that there does not exist any sample with quantile $\q \in (m \, \strat, (m+1) \, \strat]$ for some $0 \leq m < \frac{1}{\strat} \cdot \sqrt{\tfrac{\ln \chernoff^{-1}}{2\samples}}$ is at most $e^{-\strat \samples} \times \frac{1}{\strat} \cdot \sqrt{\tfrac{\ln \chernoff^{-1}}{2\samples}}$.

    Next, we show that $\UBD{}^{-1}(\q) \geq \D^{-1}(\q)$ for $q \in [0, \sqrt{\tfrac{\ln \chernoff^{-1}}{2\samples}}]$ with probability at least $1 - \tfrac{e^{-\strat \samples}}{\strat} \, \sqrt{\tfrac{\ln \chernoff^{-1}}{2\samples}}$. As argued above, with probability at least $1 - \tfrac{e^{-\strat \samples}}{\strat} \, \sqrt{\tfrac{\ln \chernoff^{-1}}{2\samples}}$, the $m\textsuperscript{th}$ largest sample $s_m$ of $\apxd{}$ has a quantile $\q \leq m \omega$ for $1 \leq m < \frac{1}{\strat} \cdot \sqrt{\tfrac{\ln \chernoff^{-1}}{2\samples}}$. We delete the $\samples \, \sqrt{\tfrac{\ln \chernoff^{-1}}{2\samples}}$ smallest samples and append $\strat \samples$ copies each of $s_0 := \Psi, s_1, \dots, s_{M}$ for $M = \Big(\frac{1}{\strat} \cdot \sqrt{\tfrac{\ln \chernoff^{-1}}{2\samples}} - 1\Big)$. For a quantile $\q \in (m \strat, (m+1) \, \strat]$  \big($0 \leq m \leq \frac{1}{\strat} \cdot \sqrt{\tfrac{\ln \chernoff^{-1}}{2\samples}}$\big), $\UBD{}^{-1}(\q) = s_{m} \geq \D^{-1}(\q)$.

    For $\q \in (\sqrt{\tfrac{\ln \chernoff^{-1}}{2\samples}}, 1]$, with probability at least $1-\gamma$, $\apxd{}^{-1}\big(\q - \sqrt{\tfrac{\ln \chernoff^{-1}}{2\samples}}\big) \geq \D^{-1}(\q)$ (by the DKW inequality). We appended $n \cdot \sqrt{\tfrac{\ln \chernoff^{-1}}{2\samples}}$ samples to $\apxd{}$, all of them with quantiles at most  $\sqrt{\tfrac{\ln \chernoff^{-1}}{2\samples}}$, to get $\UBD{}$. Thus, $\UBD{}^{-1}(\q) = \apxd{}^{-1}(\q - \sqrt{\tfrac{\ln \chernoff^{-1}}{2\samples}}) \geq \D^{-1}(\q)$.

    Combining the two, we get that the probability that $\UBD{}$ dominates $\D$ is at least $1-\Big(\chernoff + \tfrac{e^{-\strat \samples}}{\strat} \, \sqrt{\tfrac{\ln \chernoff^{-1}}{2\samples}} \Big)$.
\end{proof}

\noindent \Cref{thm:Defl} and \Cref{thm:Infl} prove \Cref{thm:InflDefl}.

Thus, to prove \Cref{thm:SampleSummary}, it is sufficient to show that the reward distributions are always within the range $[-\lambda, t \, (1-\lambda)]$.

\begin{proof}[Proof of \Cref{thm:SampleSummary}]
    Clearly, the maximum reward the adversary can earn in each round is $(1 - \lambda)$ and the supremum of $\dist{t, \coin}$ for each round $t$ is at most $t \, (1 - \lambda)$. Combining \Cref{thm:InflDefl} with a straightforward union bound between the $\round$ inflation procedures performed across the $\round$ rounds, we get the first part.

    For the second part, a similar proof to the first will follow provided we show the infimum of $\estdist{t, \coin}$ is $-\lambda$ for each $1 \leq t \leq \round$. We show this by sketching a strategy that guarantees a reward at least $-\lambda$ for the adversary. $\optTk$ being the optimal strategy will only provide the adversary with a better reward. Consider the strategy where the adversary keeps broadcasting the credential of a random adversarial potential winner until a stopping time is reached. The adversary wins $(1-\lambda)$ each round before the first stopping time and $-\lambda$ for the stopping time, since the adversary plays the round without getting elected. Since $\lambda < 1$, the adversary's reward is at least $-\lambda$. Note that the adversary guarantees itself a reward greater than $-\lambda$ even conditioned on the credentials of the current and future rounds. Thus, the optimal strategy $\optTk$ can adapt the above strategy if the adversary faces an extremely unfavourable seed and achieve a reward at least $-\lambda$.
    
\end{proof}

\section{Computing Expectations} \label{sec:PreComp}
We get into the details next. The first step towards pre-computing is chiselling the expression for the samples into a form that allows intensive reuse of values.
\begin{equation}
\notag
    \begin{split}
        s_\ell =\E_{c_0 \sim \expd{\beta \, (1-\alpha)}, r_0 \sim \estk{t-1}}\big[\max_{0 \leq i \leq i^*(\vecc)} \{e^{-c_i\cdot (1-\beta)\cdot(1-\alpha)} \, (r_i+ \mathbbm{1}(i \neq 0)) \} \cdot \mathbbm{1}(i^*(\vecc) \neq 0) \big] -\lambda
    \end{split}
\end{equation}

Remember that $i^*(\vec{c})$ is the number of adversarial potential winners, $h(c_0, r_0) = e^{-c_0 \cdot (1-\beta)(1-\alpha)}r_0$ is the reward the adversary gets by staying silent and $g(c_0, \AdvC, \AdvR) = \max_{0 \leq i \leq i^*(\vecc)}{e^{-c_i \cdot (1-\beta)(1-\alpha)}}(1 + r_i)$ is the maximum reward the adversary gets by broadcasting one of its credentials. Thus,
\begin{equation}
\notag
    \begin{split}
        s_\ell &= \mathbb{E}_{c_0 \leftarrow \expd{\beta(1-\alpha)}, r_0 \leftarrow \estk{t-1}} \left[ \max \{h(c_0, r_0), g(c_0, \AdvC, \AdvR)\} \mathbbm{1}\{i^*(\vec{c}) \neq 0\} \right] -\lambda \\
        &= \int_0^{\infty} \mathbb{E}_{r_0 \leftarrow \estk{t-1}} \left[ \max \{h(c_0, r_0), g(c_0, \AdvC, \AdvR)\} \mathbbm{1}\{i^*(\vec{c}) \neq 0\} \right] \beta \, (1-\alpha) \, e^{-c_0 \, \beta \, (1-\alpha)} \,dc_0 -\lambda \\
    \end{split}
\end{equation}

As a first step, we break down the integral wrt $c_0$ as a sum of integrals over smaller intervals where $i^*(\vec{c})$ remains constant, i.e.,
\begin{equation}
    \notag
    \begin{split}
        s_{\ell} = \int_{0}^{c_1} &\mathbb{E}_{r_0 \leftarrow \estk{t-1}} \left[ \max \{h(c_0, r_0), g(c_0, \AdvC, \AdvR)\} \mathbbm{1}\{i^*(\vec{c}) \neq 0\} \right] \beta \, (1-\alpha) \, e^{-c_0 \, \beta \, (1-\alpha)} \,dc_0 \\
        &+ \sum_{i^* = 1}^\coin \int_{c_{i^*}}^{c_{{i^*}+1}} \mathbb{E}_{r_0 \leftarrow \estk{t-1}} \left[ \max \{h(c_0, r_0), g(c_0, \AdvC, \AdvR)\} \mathbbm{1}\{i^*(\vec{c}) \neq 0\} \right] \beta \, (1-\alpha) \, e^{-c_0 \, \beta \, (1-\alpha)} \,dc_0 -\lambda
    \end{split}
\end{equation}
Observe that when $c_0 \in [c_{i^*}, c_{i^* +1})$, $i^*(\vec{c}) = i^*$. In this range,
\begin{equation}
    \notag
    \begin{split}
        g(c_0, \AdvC, \AdvR) &= \max_{1 \leq i \leq i^*(\vecc)}{e^{-c_i \cdot (1-\beta)(1-\alpha)}}(1 + r_i) \\
        &= \max_{1 \leq i \leq i^*}{e^{-c_i \cdot (1-\beta)(1-\alpha)}}(1 + r_i)
    \end{split}
\end{equation}
and is independent of the value of $c_0$. When $c_0 \in [0, c_1)$, $i^*(\vec{c}) = 0$ and so is \\$\mathbb{E}_{r_0 \leftarrow D} \left[ \max \{h(c_0, r_0), g(c_0, \AdvC, \AdvR)\} \mathbbm{1}\{i^*(\vec{c}) \neq 0\} \right]$. The entire integral wrt $c_0$ over $[0, c_1)$ equals zero. $s_{\ell}$ simplifies to
$$s_{\ell} = \sum_{i^* = 1}^\coin \int_{c_{i^*}}^{c_{{i^*}+1}} \mathbb{E}_{r_0 \leftarrow \estk{t-1}} \left[ \max \{h(c_0, r_0), g(c_0, \AdvC, \AdvR)\} \right] \beta \, (1-\alpha) \, e^{-c_0 \, \beta \, (1-\alpha)} \,dc_0 -\lambda$$
Let $f(i^*, \AdvC, \AdvR) = \int_{c_{i^*}}^{c_{{i^*}+1}} \mathbb{E}_{r_0 \leftarrow \estk{t-1}} \left[ \max \{h(c_0, r_0), g(c_0, \AdvC, \AdvR)\} \right] \beta \, (1-\alpha) \, e^{-c_0 \, \beta \, (1-\alpha)}\,dc_0$ and therefore,
$$s_{\ell} = \sum_{i^* = 1}^\coin f(i^*, \AdvC, \AdvR) - \lambda$$

We speed up the computation of $f(i^*, \AdvC, \AdvR)$ next. $f(i^*, \AdvC, \AdvR)$ is computed by integrating $c_0$ over $[c_{i^*}, c_{i^* +1})$, where, as discussed in the previous paragraph, the value of $g(c_0, \AdvC, \AdvR)$ is independent of $c_0$. To signify this independence, we will re-parameterize $g(c_0, \AdvC, \AdvR)$ into $g(i^*, \AdvC, \AdvR)$.
$$g(i^*, \AdvC, \AdvR) = \max_{1 \leq i \leq i^*}{e^{-c_i \cdot (1-\beta)(1-\alpha)}}(1 + r_i)$$
$$f(i^*, \AdvC, \AdvR) = \int_{c_{i^*}}^{c_{{i^*}+1}} \mathbb{E}_{r_0 \leftarrow \estk{t-1}} \left[ \max \{h(c_0, r_0), g(i^*, \AdvC, \AdvR)\} \right] \beta \, (1-\alpha) \, e^{-c_0 \, \beta \, (1-\alpha)}\,dc_0$$

We set up pre-computations for $\mathbb{E}_{r_0 \leftarrow \estk{t-1}} \left[ \max \{h(c_0, r_0), g(i^*, \AdvC, \AdvR)\} \right]$. Note that we are computing $\mathbb{E}_{r_0 \leftarrow \estk{t-1}} \left[ \max \{h(c_0, r_0), g(i^*, \AdvC, \AdvR)\} \right]$ conditioned on $c_0$ and $i^*$.
\begin{equation}
    \notag
    \begin{split}
        \E_{r_0 \sim \estk{t-1}}[\max \{h(c_0, r_0), g(i^*, \vec{c}_{-0}, \vec{r}_{-0}) \}] &= g(i^*, \vec{c}_{-0}, \vec{r}_{-0}) \times Pr(h(c_0, r_0) \leq g(i^*, \vec{c}_{-0}, \vec{r}_{-0})) \\
        & \hspace{1 cm} + \E_{r_0 \sim \estk{t-1}}[h(c_0, r_0) \times \mathbbm{1}(h(c_0, r_0) > g(i^*, \vec{c}_{-0}, \vec{r}_{-0}))] \\
        &= g(i^*, \vec{c}_{-0}, \vec{r}_{-0}) \times Pr(r_0 \leq \tfrac{g(i^*, \vec{c}_{-0}, \vec{r}_{-0})}{e^{-c_0 \cdot (1-\beta)(1-\alpha)}}) \\
        & \hspace{1 cm} + \E_{r_0 \sim \estk{t-1}}[e^{-c_0 \cdot (1-\beta)(1-\alpha)}r_0 \times \mathbbm{1}(r_0 > \tfrac{g(i^*, \vec{c}_{-0}, \vec{r}_{-0})}{e^{-c_0 \cdot (1-\beta)(1-\alpha)}})] \\
        &= g(i^*, \vec{c}_{-0}, \vec{r}_{-0}) \times Pr(r_0 \leq \tfrac{g(i^*, \vec{c}_{-0}, \vec{r}_{-0})}{e^{-c_0 \cdot (1-\beta)(1-\alpha)}}) \\
        & \hspace{1 cm} + e^{-c_0 \cdot (1-\beta)(1-\alpha)} \times \E_{r_0 \sim \estk{t-1}}[r_0 \times \mathbbm{1}(r_0 > \tfrac{g(i^*, \vec{c}_{-0}, \vec{r}_{-0})}{e^{-c_0 \cdot (1-\beta)(1-\alpha)}})] \\
        &= e^{-c_0 \cdot (1-\beta)(1-\alpha)} \times \Big[ \frac{g(i^*, \vec{c}_{-0}, \vec{r}_{-0})}{e^{-c_0 \cdot (1-\beta)(1-\alpha)}} \times Pr(r_0 \leq \tfrac{g(i^*, \vec{c}_{-0}, \vec{r}_{-0})}{e^{-c_0 \cdot (1-\beta)(1-\alpha)}}) \\
        & \hspace{1 cm} + \E_{r_0 \sim \estk{t-1}}[r_0 \times \mathbbm{1}(r_0 > \tfrac{g(i^*, \vec{c}_{-0}, \vec{r}_{-0})}{e^{-c_0 \cdot (1-\beta)(1-\alpha)}})] \Big]
    \end{split}
\end{equation}
The term inside the big square in the final equation is of the form $\{\theta Pr(r_0 \leq \theta) + \E_{r_0 \sim \estk{t-1}}[r_0 \times \mathbbm{1}(r_0 > \theta)]\}$, and can be pre-computed without knowing $\vec{c}_{-0}$ and $\vec{r}_{-0}$ (up to a discretization error $\epsilon$). We produce a recipe to set-up pre-computations for a general distribution $\D$.

\begin{mdframed}
$\Precomp{\hat{D}}$:
\begin{enumerate}
    \item Construct the pdf $d$ of $\hat{D}$ up to a discretization error $\epsilon$.
    \item Construct the cdf $\hat{D}$ recursively using the following recursion, once again, up to a discretization error $\epsilon$.
    $$\hat{\D}(\theta - \epsilon) = Pr(r_0 \leq \theta - \epsilon) = \hat{D}(\theta) - d(\theta)$$
    \item Construct the right tail of the expectation $E(\theta) = \E_{r_0 \sim D}[r_0 \times \mathbbm{1}(r_0 > \theta)]$ recursively by
    $$E(\theta - \epsilon) = E(\theta) + (\theta - \epsilon) \times d(\theta - \epsilon)$$
    \item Compute $E_{\max}(\theta) = \theta \hat{\D}(\theta) + E(\theta)$ for all $-t\lambda \leq \theta \leq t(1-\lambda)$ up to a discretization error $\epsilon$
\end{enumerate}

\end{mdframed}

The above compute takes time $O(\frac{t}{\epsilon})$ since the range of the reward distribution has a width at $t$ in round $t$. Plugging this back into the equation for $f(i^*, \AdvC, \AdvR)$,
\begin{equation}
    \notag
    \begin{split}
        f(i^*, \AdvC, \AdvR) &= \int_{c_{i^*}}^{c_{i^* + 1}} \E_{r_0 \sim \estk{t-1}}[\max \{h(c_0, r_0), g(i^*, \vec{c}_{-0}, \vec{r}_{-0}) \}] \beta \, (1-\alpha) \, e^{-c_0 \, \beta \, (1-\alpha)} \, dc_0 \\
        &= \int_{c_{i^*}}^{c_{i^* + 1}} \beta (1-\alpha) e^{-c_0 \beta (1-\alpha)} \times e^{-c_0 \, (1-\beta) \, (1-\alpha)} \times E_{\max}(\frac{g(i^*, \AdvC, \AdvR)}{e^{-c_0 (1-\beta) (1-\alpha)}}) \, dc_0
    \end{split}
\end{equation}
The final equality follows since $E_{\max}(\cdot)$ was defined such that $\E_{r_0 \sim D}[\max \{h(c_0, r_0), g(i^*, \vec{c}_{-0}, \vec{r}_{-0}) \}] = e^{-c_0 \, (1-\beta) \, (1-\alpha)} \times E_{\max}(\frac{g(i^*, \AdvC, \AdvR)}{e^{-c_0 (1-\beta) (1-\alpha)}})$. We have all the ingredients to compute $f(i^*, \AdvC, \AdvR)$ and speeding up the computation of the integral wrt $c_0$ is the final frontier. At a high level, we compute
$$G(c, g(i^*, \AdvC, \AdvR)) := \int_0^c \beta (1-\alpha) e^{-c_0 \beta (1-\alpha)} \times e^{-c_0 \, (1-\beta) \, (1-\alpha)} \times E_{\max}(\frac{g(i^*, \AdvC, \AdvR)}{e^{-c_0 (1-\beta) (1-\alpha)}}) \, dc_0$$
and thus,
$$f(i^*, \AdvC, \AdvR) = G(c_{i^* + 1}, g(i^*, \AdvC, \AdvR)) - G(c_{i^*}, g(i^*, \AdvC, \AdvR))$$

\noindent\textbf{Case-1:} $\beta = 1$

This is the simplest case. The input to $E_{\max}(\cdot)$ becomes independent of $c_0$.
\begin{equation}
    \notag
    \begin{split}
        G(c, g(i^*, \AdvC, \AdvR)) &= \int_0^c (1-\alpha) e^{-c_0 (1-\alpha)} \times E_{\max}(g(i^*, \AdvC, \AdvR)) \, dc_0 \\
        &= E_{\max}(g(i^*, \AdvC, \AdvR)) (1-e^{-c \, (1-\alpha)})
    \end{split}
\end{equation}
In this case, $f(i^*, \AdvC, \AdvR)$ can be explicitly computed.
$$f(i^*, \AdvC, \AdvR) = E_{\max}(g(i^*, \AdvC, \AdvR)) \big[e^{-c_{i^*} \, (1-\alpha)} - e^{-c_{i^* + 1} \, (1-\alpha)} \big]$$
All of our pre-compute gets done in $O(\frac{t}{\epsilon})$ time.

Summarizing,
\begin{equation}
    \notag
    \begin{split}
        s_{\ell} &= \sum_{i^* = 1}^\coin f(i^*, \AdvC, \AdvR) - \lambda \\
        &= \sum_{i^* = 1}^\coin E_{\max}(g(i^*, \AdvC, \AdvR)) \big[e^{-c_{i^*} \, (1-\alpha)} - e^{-c_{i^* + 1} \, (1-\alpha)} \big] - \lambda
    \end{split}
\end{equation}
Remember that $E_{\max}(\theta) = \theta D(\theta) + E(\theta)$ has already been pre-computed.

\noindent\textbf{Case-2:} $\beta = 0$

This is another straightforward case. When $\beta = 0$, the adversary cannot abstain from broadcasting. Thus, the adversarial reward equals $g(\coin, \AdvC, \AdvR)$.
$$s_{\ell} = g(\coin, \AdvC, \AdvR) - \lambda$$
In particular, we circumvent computing $f(i^*, \AdvC, \AdvR)$ altogether for $\beta = 0$.

\noindent\textbf{Case-3:} $\beta \notin\{ 1, 0\}$.

The most obvious way to compute $G(c, g(i^*, \AdvC, \AdvR))$ would be to discretize the integral wrt to $c_0$ into steps of $\eta$. However, the domain of $c_0$ is $[0, \infty)$. $E_{\max}(\frac{g(i^*, \AdvC, \AdvR)}{e^{-c_0 (1-\beta) (1-\alpha)}})$ becomes constant for a sufficiently large value $C_0$ of $c_0$, but, it is quite wasteful to compute the integral in discrete steps of $\eta$ from $0$ to a very large $C_0$. Instead, we change variables and integrate wrt $e^{-c_0 \, \beta \, (1-\alpha)}$ and discretize this integral by $\eta$.

The following expressions just substantiate the above intuition.
\begin{equation}
    \notag
    \begin{split}
        G(c, g(i^*, \AdvC, \AdvR)) &= \int_0^c \beta (1-\alpha) e^{-c_0 \beta (1-\alpha)} \times e^{-c_0 \, (1-\beta) \, (1-\alpha)} \times E_{\max}(\frac{g(i^*, \AdvC, \AdvR)}{e^{-c_0 (1-\beta) (1-\alpha)}}) \, dc_0 \\
        &= \int_{e^{-c \, \beta \, (1-\alpha)}}^1 e^{-c_0 \, (1-\beta) \, (1-\alpha)} E_{\max}(\frac{g(i^*, \AdvC, \AdvR)}{e^{-c_0 (1-\beta) (1-\alpha)}}) \, de^{-c_0 \, \beta \, (1-\alpha)} \\
        &= \int_{e^{-c \, \beta \, (1-\alpha)}}^1 \zeta^{\frac{1-\beta}{\beta}} E_{\max}\left(\frac{g(i^*, \AdvC, \AdvR)}{\zeta^{\frac{1-\beta}{\beta}}}\right) \, d \zeta
    \end{split}
\end{equation}
For every possible value $g(i^*, \AdvC, \AdvR) \in [-t \, \lambda, t \, (1-\lambda)]$ and a discretization parameter $\eta$, the above integral can be discretized as follows.
\begin{equation}
    \notag
    \begin{split}
        G(c, g(i^*, \AdvC, \AdvR)) = \eta \times \sum_{\zeta = 0, \text{ in steps of } \eta}^c \zeta^{\frac{1-\beta}{\beta}} E_{\max}\left(\frac{g(i^*, \AdvC, \AdvR)}{\zeta^{\frac{1-\beta}{\beta}}}\right)
    \end{split}
\end{equation}
Remember to take the upper and lower Riemann sums while computing $\UBD{t}$ and $\LBD{t}$ respectively.

Once again, we compute the sample $s_{\ell}$ by
\begin{equation}
    \notag
    \begin{split}
        s_{\ell} &= \sum_{i^* = 1}^\coin G(c_{i^* + 1}, g(i^*, \AdvC, \AdvR)) - G(c_{i^*}, g(i^*, \AdvC, \AdvR)) - \lambda \\
        &= \sum_{i^* = 1}^\coin f(i^*, \AdvC, \AdvR) - \lambda
    \end{split}
\end{equation}

The integrals with respect to $c_0$ take $O(\frac{1}{\eta})$ to pre-compute and we do this once for each of the $\big(\frac{t}{\epsilon}\big)$ possible values of $g(i^*, \AdvC, \AdvR)$, amounting to a total run-time $O(\tfrac{t}{\epsilon \eta})$. \Cref{thm:PrecompRunTime} follows.

By \Cref{thm:PrecompRunTime}, the pre-computations in one execution of $\finsampleaddl{\estk{t}}$ takes $O(\tfrac{t}{\epsilon \, \eta})$ time. Across the $\round$ rounds, a total $O(\tfrac{\round^2}{\epsilon \, \eta})$ time is spent in pre-computations. Computing one sample $s_{\ell}$ involves summing up $\coin$ terms,
$$s_{\ell} = \sum_{i^* = 1}^\coin f(i^*, \AdvC, \AdvR) - \lambda$$
and takes $O(\coin)$ time to compute. A total of $O(\coin \samples)$ time is spent on sampling in each execution of \\$\finsampleaddl{\cdot}$ and a total of $O(\round \coin \samples)$ across the $\round$ rounds. The time needed to inflate and deflate are dominated by $O(\round \coin \samples)$, establishing \Cref{thm:RunTime}.

\section{Locating $\lambda$ and Describing the Optimal Adversarial Strategy} \label{sec:Wrap}
\subsection{Proof of Theorem \ref{thm:BinSearchAPX}} \label{sec:ProofOfBinSearch}
We first begin with a toy version that says if the rewards from $\FinLinLamCSSPA{\lambda_1}$ and $\FinLinLamCSSPA{\lambda_2}$ are close, then $\lambda_1 - \lambda_2$ is small. 

\begin{lemma} \label{thm:BinSearch}
    Suppose that $\lambda_1$ and $\lambda_2$ are such that
    $$|\LinLamRew{\lambda_1}{\optTk(\lambda_1)} - \LinLamRew{\lambda_2}{\optTk(\lambda_2)}| \leq \zeta\text{.}$$
    Then, $|\lambda_1 - \lambda_2| \leq \zeta$.
\end{lemma}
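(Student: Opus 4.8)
The plan is to show that $\LinLamRew{\lambda}{\optTk(\lambda)}$, viewed as a function of $\lambda$, is Lipschitz with constant exactly $1$ (in fact $1$-Lipschitz and monotone non-increasing), from which the claim follows immediately: $|\lambda_1 - \lambda_2| \le |\LinLamRew{\lambda_1}{\optTk(\lambda_1)} - \LinLamRew{\lambda_2}{\optTk(\lambda_2)}| \cdot \frac{1}{(\text{slope lower bound})}$, and the slope lower bound is $1$. Concretely I would first fix any single strategy $\strategy$ and compute how $\LinRew{\strategy}$ changes with $\lambda$. By Definition~\ref{def:LinRew}, $\LinRew{\strategy} = \E[\sum_{t=1}^{\fst}(X_t(\strategy) - \lambda)] = \E[\sum_{t=1}^{\fst} X_t(\strategy)] - \lambda \,\E[\fst]$. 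Crucially, the distribution of $\fst$ and of the $X_t$'s depends on $\strategy$ but \emph{not} on $\lambda$ — the entry fee does not affect which wallet has the smallest score, nor when the first forced stopping time occurs. Hence, for a \emph{fixed} $\strategy$, $\LinRew{\strategy}$ is an affine function of $\lambda$ with slope $-\E[\fst] \le -1$ (since $\fst \ge 1$ always: the game always plays at least round $1$).

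Next I would pass from a fixed strategy to the optimum via the standard envelope/monotonicity argument. Write $R(\lambda) := \LinLamRew{\lambda}{\optTk(\lambda)} = \max_{\strategy} \LinRew[\lambda]{\strategy}$ (a max of affine, non-increasing functions, hence convex and non-increasing in $\lambda$). For $\lambda_1 < \lambda_2$: on one hand $R(\lambda_1) \ge R(\lambda_2)$ by monotonicity; on the other hand, using $\optTk(\lambda_2)$ as a (suboptimal) strategy at $\lambda_1$,
$$R(\lambda_1) \ge \LinRew[\lambda_1]{\optTk(\lambda_2)} = \LinRew[\lambda_2]{\optTk(\lambda_2)} + (\lambda_2 - \lambda_1)\,\E[\fst(\optTk(\lambda_2))] \ge R(\lambda_2) + (\lambda_2 - \lambda_1),$$
where the last inequality uses $\E[\fst] \ge 1$. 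Combining, $R(\lambda_1) - R(\lambda_2) \ge \lambda_2 - \lambda_1 = |\lambda_1 - \lambda_2|$. Since also $R(\lambda_1) - R(\lambda_2) \ge 0$, we get $|\lambda_1 - \lambda_2| \le R(\lambda_1) - R(\lambda_2) = |R(\lambda_1) - R(\lambda_2)| \le \zeta$, which is exactly the claim.

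The only subtlety — and the one place I would be careful — is ensuring $\E[\fst] \ge 1$ holds for every relevant strategy (so that the slope is bounded away from $0$). This is immediate from the definitions: $\fst$ is a round index, $\fst \ge 1$ deterministically, so $\E[\fst] \ge 1$; there is nothing deep here. A secondary point to state cleanly is that $\optTk(\lambda_2)$ is a legal strategy in $\FinLinLamCSSPA{\lambda_1}$ (the strategy space is independent of $\lambda$ — only the payoffs change), so the inequality $R(\lambda_1) \ge \LinRew[\lambda_1]{\optTk(\lambda_2)}$ is valid. With these two observations in hand the proof is a two-line application of the envelope inequality, so I do not anticipate a genuine obstacle; the main care is bookkeeping the direction of the inequalities and the fact that the stopping-time distribution is $\lambda$-invariant.
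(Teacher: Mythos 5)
Your proposal is correct and uses essentially the same argument as the paper: play $\optTk(\lambda_2)$ in the game with entry fee $\lambda_1$, and note that because the game lasts at least one round the entry-fee savings are at least $\lambda_2-\lambda_1$, which combined with optimality of $\optTk(\lambda_1)$ forces $|\lambda_1-\lambda_2|\leq\zeta$. The paper phrases this as a proof by contradiction while you state it directly as an envelope inequality, but the content is the same.
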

\begin{proof}
    Without loss of generality, assume that $\lambda_1 \leq \lambda_2$. Assume for contradiction that $\lambda_2 - \lambda_1 > \zeta$.

    As a thought experiment, let the adversary ape $\optTk(\lambda_2)$ in $\FinLinLamCSSPA{\lambda_1}$. Since the rewards are linear, the adversary earns $\LinLamRew{\lambda_2}{\optTk(\lambda_2)}$ plus the savings in entry fee from paying just $\lambda_1 < \lambda_2$. Since the adversary participates in at least one round, the adversary saves more than $(\lambda_2 - \lambda_1)$ in entry fee, and thus, the total reward is at least $\LinLamRew{\lambda_2}{\optTk(\lambda_2)} + (\lambda_2 - \lambda_1)$. However, $\LinLamRew{\lambda_1}{\optTk(\lambda_1)} - \LinLamRew{\lambda_2}{\optTk(\lambda_2)} \leq \zeta$ and $(\lambda_2 - \lambda_1) \geq \zeta$. The adversary makes strictly more than $\LinLamRew{\lambda_1}{\optTk(\lambda_1)}$ in $\FinLinLamCSSPA{\lambda_1}$. This is a contradiction, since $\optTk(\lambda_1)$ is the optimal strategy.
\end{proof}

In \Cref{thm:BinSearchAPX}, we show that even if we estimate $\FinLinLamCSSPA{\lambda_1}$ up to an error of $\delta$ then, the true reward in $\FinLinLamCSSPA{\lambda_2}$ and our estimated rewards can still not be close unless $\lambda_1 - \lambda_2$ is small.

\begin{proof}[Proof of \Cref{thm:BinSearchAPX}]
    From \Cref{thm:SampleSummary}, with probability at least $1 - \Big(2 \, \round \, \chernoff + \round \, \tfrac{e^{-\strat \samples}}{\strat} \, \sqrt{\tfrac{\ln \chernoff^{-1}}{2\samples}} \Big)$, $$\LinLamRew{\lambda_1}{\optTk(\lambda_1)} \in \Big[\E_{s \sim \UBD{\round, \coin}(\lambda_1)}[s], \E_{s \sim \LBD{\round, \coin}(\lambda_1)}[s] \Big]$$
    Thus, $|\LinLamRew{\lambda_1}{\optTk(\lambda_1)} - r| \leq \delta$ and $|r-\LinLamRew{\lambda_2}{\optTk(\lambda_2)}| \leq \zeta$. The claim follows from the triangle inequality and \Cref{thm:BinSearch}.
\end{proof}

\subsection{The Optimal Adversarial Strategy} \label{sec:AdvStrategy}
Once we (approximately) locate the optimal reward $\lambda^*$ of the $\coin$-scored adversary in $\CSSPA$ that terminates in $\round$ rounds, the description of a near optimal adversarial strategy becomes succinct. The adversary pretends that it is $\coin$-scored and is participating in $\FinLinCSSPA{\round}$ and plays the actions recommended by $\optTk(\lambda^*)$. In particular, the adversary resets $\CSSPA$ with frequency at least $\round$ and behaves like a $\coin$-scored adversary. The adversary gets a zero reward in $\FinLinCSSPA{\round}$, which corresponds to a reward $\lambda^*$ in $\CSSPA$.

The reward from the above strategy approaches the optimal reward as the error in locating $\lambda^*$ reduces. The loss in the reward from being $\coin$-scored and from resetting $\CSSPA$ every $\round$ rounds would approach zero as $\round \xrightarrow{} \infty$ and $\coin \xrightarrow{} \infty$.

\section{Summary} \label{sec:TablesandSummary}

\subsection{A Summary of the Simulation} \label{sec:Summary}
\begin{mdframed}
$\finsampleaddl{\epsilon, \eta, \estdist{t-1, \coin}}$:
\begin{enumerate}
    \item $\Precomp{\estdist{t-1, \coin}}$:
    \begin{enumerate}
        \item Construct the pdf $d$ of $\estdist{t-1, \coin}$ up to a discretization error $\epsilon$.
        \item Construct the cdf $\estdist{t-1, \coin}$ recursively using the following recursion up to a discretization error $\epsilon$.
        $$\estdist{t-1, \coin}(\theta - \epsilon) = Pr(r_0 \leq \theta - \epsilon) = \estdist{t-1, \coin}(\theta) - d(\theta)$$
        \item Construct the right tail of the expectation $E(\theta) = \E_{r_0 \sim D}[r_0 \times \mathbbm{1}(r_0 > \theta)]$ recursively by
        $$E(\theta - \epsilon) = E(\theta) + (\theta - \epsilon) \times d(\theta - \epsilon)$$
        \item Compute $E_{\max}(\theta) = \theta \estdist{t-1, \coin} + E(\theta)$ for all $-t\lambda \leq \theta \leq t(1-\lambda)$ up to a discretization error $\epsilon$
        \item If $\beta \neq 1, 0$: For $-t \lambda \leq g(i^*, \vec{c}_{-0}, \vec{r}_{-0}) \leq t(1-\lambda)$ in steps of $\epsilon$:
        \begin{enumerate}
            \item For $0 \leq c \leq 1$ in steps of $\eta$
            \begin{itemize}
                \item $G(c, g(i^*, \vec{c}_{-0}, \vec{r}_{-0})) = \eta \times \sum_{\zeta = 0, \text{ in steps of } \eta}^c \zeta^{\frac{1-\beta}{\beta}} E_{\max}(\frac{g(i^*, \AdvC, \AdvR)}{\zeta^{\frac{1-\beta}{\beta}}})$
            \end{itemize}
        \end{enumerate}
    \end{enumerate}

    \item For $1 \leq \ell \leq \samples$.
    \begin{enumerate}
        \item $\drawadv{\estdist{t-1, k}}$:
        \begin{itemize}
            \item Sample $\AdvR$: Draw $\coin$ rewards $r_1, r_2, \dots, r_{\coin}$ i.i.d~from $\dist{t-1}$.
            \item Sample $\AdvC$: Draw $\coin$ scores $c_1, c_2, \dots, c_{\coin}$ of adversarial wallet as follows. Draw $c_1 \xleftarrow{} \expd{\alpha}$ and $c_{i+1} \xleftarrow{} c_i + \expd{\alpha}$ (a fresh sample for each $i$) for $1 \leq i \leq \coin -1$. For convenience, set $c_{\coin + 1} = \infty$.
            \item Return $(\AdvR, \AdvC)$.
        \end{itemize}

        \item $\sampleFromPrecomp{\estdist{t-1, k}}$:
        \begin{enumerate}
            \item For $1 \leq i^* \leq \coin$:
            \begin{itemize}
                \item Compute $g(i^*, \AdvC, \AdvR) = \max_{1 \leq i \leq i^*}{e^{-c_i \cdot (1-\beta)(1-\alpha)}}(1 + r_i)$
            \end{itemize}
            \item If $\beta = 0$:
            \begin{itemize}
                \item return $s_{\ell} = g(\coin, \AdvC, \AdvR) - \lambda$
            \end{itemize}
            \item Else if $\beta = 1$:
            \begin{itemize}
                \item For $1 \leq i^* \leq k$: $f(i^*, \AdvC, \AdvR) = E_{\max}(g(i^*, \AdvC, \AdvR)) \big[e^{-c_{i^*} \, (1-\alpha)} - e^{-c_{i^* + 1} \, (1-\alpha)} \big]$
                \item return $s_{\ell} = \sum_{i^* = 1}^k f(i^*, \AdvC, \AdvR) - \lambda$
            \end{itemize}
            \item $\beta \neq 0, 1$:
            \begin{itemize}
                \item For $1 \leq i^* \leq k$: $f(i^*, \AdvC, \AdvR) = G(c_{i^* + 1}, g(i^*, \AdvC, \AdvR)) - G(c_{i^*}, g(i^*, \AdvC, \AdvR))$
                \item return $s_{\ell} = \sum_{i^* = 1}^k f(i^*, \AdvC, \AdvR) - \lambda$
            \end{itemize}
        \end{enumerate}

    \end{enumerate}
    \item $\tilde{\D}^{\optimal}_{t, \coin}$ be the uniform distribution over $\big(s_{\ell}\big)_{1 \leq \ell \leq \samples}$ (in descending order)
    \item Inflate while computing the upper bound and deflate while computing the lower bound.
    \begin{itemize}
        \item $\Defl{\chernoff, \tilde{\D}^{\optimal}_{t, \coin}}:$
        \begin{enumerate}
            \item Delete the largest $\samples \cdot \sqrt{\tfrac{\ln \chernoff^{-1}}{2\samples}}$ samples from $\tilde{\D}^{\optimal}_{t, \coin}$
            \item Append $\samples \cdot \sqrt{\tfrac{\ln \chernoff^{-1}}{2\samples}}$ copies of $-\lambda$ to $\tilde{\D}^{\optimal}_{t, \coin}$
        \end{enumerate}
        \item $\Infl{\chernoff, \strat, \tilde{\D}^{\optimal}_{t, \coin}}$:
        \begin{enumerate}
            \item Delete the smallest $\samples \cdot \sqrt{\tfrac{\ln \chernoff^{-1}}{2\samples}}$ samples from $\tilde{\D}^{\optimal}_{t, \coin}$
            \item Append $\strat \samples$ copies of $t\,(1-\lambda)$ to $\tilde{\D}^{\optimal}_{t, \coin}$
            \item For $1 \leq \ell < \frac{\samples}{\strat \, \samples} \cdot \sqrt{\tfrac{\ln \chernoff^{-1}}{2 \samples}}$:
            \begin{itemize}
                \item Append $\strat \samples$ copies of $s_{\ell}$
            \end{itemize}
        \end{enumerate}
    \end{itemize}
    
    \item Return $\estdist{t, \coin}$ to be the uniform distribution over $\{s_1, s_2, \dots, s_{\samples_t}\}$.
\end{enumerate}

$\tcsimulate$:
\begin{enumerate}
    \item Initialize $\estdist{0, \coin}$ to be the point-mass distribution at $0$.
    \item For $1 \leq t \leq \round$:
    \begin{enumerate}
        \item $\estdist{t, \coin} = \finsampleaddl{,\epsilon, \eta, \estdist{t-1, \coin}}$.
    \end{enumerate}
    \item Return $\E_{s \sim \estdist{\round, \coin}}[s]$.
\end{enumerate}
\end{mdframed}

\subsection{A Summary of Notations and Functions in the Simulation} \label{sec:Tables}

\begin{table}[ht]
    \centering
    \resizebox{\textwidth}{!}{
    \begin{tabular}{|c| l|}
        \hline
        Notation & Description  \\
        \hline
        $\alpha$ & The fraction of stake held by the adversary \\
        \hline
        $\beta$ & \makecell[l]{The fraction of honest stake held by $\knownHonest$} \\
        \hline
        $\lambda$ & \makecell[l]{The per-round entry fee the adversary has to pay to participate in $\LinearCSSPA$} \\
        \hline
        $\round$ & \makecell[l]{We simulate $\FinLinCSSPA{\round}$}\\
        \hline        
        $\coin$ & \makecell[l]{We simulate a $\coin$-scored adversary}\\
        \hline        
        $\optrew{\round, \coin}$ & \makecell[l]{The optimal adversarial strategy in $\FinLinCSSPA{\round}$ for a $\coin$-scored adversary}\\
        \hline
        $\estdist{t, \coin}$ & \makecell[l]{The estimated distribution of optimal rewards in $\FinLinCSSPA{t}$ for a \\$k$-scored adversary}\\
        \hline
        $\chernoff$ & \makecell[l]{Probability of estimation error from the DKW inequality in inflate/deflate}\\
        \hline    
        $\strat$ & \makecell[l]{A small quantile gets duplicated $\strat \samples$ times while inflating}\\
        \hline        
        $\epsilon$ & \makecell[l]{Discretization parameter of the reward distributions}\\
        \hline        
        $\eta$ & \makecell[l]{Discretization parameter to compute Riemann sums}\\
        \hline        

    \end{tabular}}
    \caption{A summary of notations}
    \label{tab:Notations}
\end{table}

\begin{table}[ht]
    \centering
    \resizebox{\textwidth}{!}{
    \begin{tabular}{|c | l|}
        \hline
        Function & Description  \\
        \hline
        $\addl{\strategy, D}$ & \makecell[l]{Given a distribution $D$ of rewards achieved by playing $\strategy$ in the \\
        last $t-1$ rounds, $\addl{\strategy, D}$ computes the distribution \\of rewards won by playing $\strategy$ in the last $t$ rounds \\of $\FinLinCSSPA{t}$} \\
        \hline
        $\drawadv{D}$ & \makecell[l]{Given the reward distribution $D$, $\drawadv{D}$ samples rewards and \\ scores for adversarial wallets} \\
        \hline
        $\sample{\strategy, D}$ & \makecell[l]{Conditional on the rewards and scores of the adversary's wallets, \\ $\sample{\strategy, D}$ computes the reward of the \\ adversary, in expectation over $\knownHonest$ and $\opaqueHonest$'s scores and the reward \\ from letting $\knownHonest$ win.} \\
        \hline
        $\Infl{\chernoff, \strat, \D}$ & \makecell[l]{For an input distribution $\D$, $\Infl{\chernoff, \D}$ deletes the smallest samples \\and replaces them with samples corresponding to a small quantile}\\
        \hline
        $\Defl{\chernoff, \D}$ & \makecell[l]{For an input distribution $\D$, $\Defl{\chernoff, \D}$ deletes the largest samples \\and replaces them with the infimum of the distribution}\\
        \hline
        $\Precomp{\D}$ & \makecell[l]{For an input distribution $\D$, $\Precomp{D}$ sets up the \\pre-computations required to speed up the sampling procedure}\\
        \hline
        $\simulate{\optTk}$ &\makecell[l]{Executes $\round$ iterations of $\addl{\optTk}$ and returns $\E_{s \sim \estdist{\round, \coin}}[s]$}\\
        \hline
    \end{tabular}}
    \caption{A summary of functions used in the simulation. Functions with similar names across different variants of the simulation have similar functions. For example, $\addl{\optTk, D}$ and $\finsampleaddl{D}$ have the sample functionality (adding an extra layer in the simulation), but in different variants.}
    \label{tab:Func}
\end{table}



\end{document}